\pgfplotsset{compat=1.18}
\let\emptyset\varnothing
\def\signed #1{{\leavevmode\unskip\nobreak\hfil\penalty50\hskip2em
  \hbox{}\nobreak\hfil(#1)%
  \parfillskip=0pt \finalhyphendemerits=0 \endgraf}}
\newsavebox\mybox
\theoremstyle{plain}
\newtheorem{thm}{Theorem}
\newtheorem{prop}{Proposition}
\newtheorem{coro}{Corollary}
\newtheorem*{remark*}{Remark}
\theoremstyle{definition}
\newtheorem{defi}{Definition}
\theoremstyle{remark}
\newtheorem*{claim*}{Claim}
\theoremstyle{definition}
\newtheorem{example}{Example}[section]
  \theoremstyle{plain}
  \theoremstyle{remark}
\theoremstyle{plain}
\crefname{thm}{Theorem}{Theorems}
\Crefname{thm}{Theorem}{Theorems}
\crefname{lem}{Lemma}{Lemmas}
\Crefname{lem}{Lemma}{Lemmas}
\crefname{prop}{Proposition}{Propositions}
\Crefname{prop}{Proposition}{Propositions}
\crefname{coro}{Corollary}{Corollaries}
\Crefname{coro}{Corollary}{Corollaries}
\crefname{defi}{Definition}{Definitions}
\Crefname{defi}{Definition}{Definitions}
\crefname{ex}{Example}{Examples}
\Crefname{ex}{Example}{Examples}
\crefname{example}{Example}{Examples}
\Crefname{example}{Example}{Examples}
\title{Visibly Fair Mechanisms\thanks{We thank Mustafa Oguz Afacan, Samson Alva, Yuan Gao, Sambuddha Ghosh, Xi Jin, Vikram Manjunath, Debasis Mishra and Yu Zhou for helpful comments. We are grateful for the feedback at several conferences and seminars, including Indian Statistical Institute Delhi, Queen's University of Belfast, University of York, University of Manchester, 14th Conference on Economic Design at University of Essex, 2025 CCBEF-IES Workshop at SWUFE, Greater Bay Area Market Design Workshop at University of Macau, Lisbon Meetings, and SAET Conference.}}
\author{Inácio Bó\thanks{Department of Economics, Faculty of Social Sciences, University of Macau, Macau SAR. Email: inaciobo@um.edu.mo}
\and Gian Caspari\thanks{Department of Market Design, ZEW — Leibniz Centre for European Economic Research, Mannheim 68161, Germany. Email: gian.caspari@zew.de}
\and Manshu Khanna\thanks{Peking University HSBC Business School, Shenzhen 518055, China. Email: manshu@phbs.pku.edu.cn}}
\begin{document}
\date{May, 2026}
\maketitle

\begin{abstract}	
Priority-based allocation often requires eliminating justified envy, making serial dictatorship (SD) the only non-wasteful direct mechanism with that property. However, SD's outcomes can conflict with the policymaker's objectives. We introduce visible fairness, a framework where fairness is evaluated using coarser information. This is achieved by designing message spaces that strategically conceal information that could render desired allocations unfair. We characterize these mechanisms as generalizations of SD, establish conditions for strategy-proofness, and show how to implement distributional constraints. This creates a new trade-off: achieving distributional goals may require limiting preference elicitation, forgoing efficiency gains even when compatible with the constraints. A simulation exercise shows that merging zones, and thereby eliciting more comparisons, raises welfare substantially on average. We apply the framework to India's cadre allocation mechanism, diagnosing failures of strategy-proofness and showing how three alternative modular-upper-bound specifications can each address the reform's distributional objectives.
\bigskip 
        
\noindent  \parbox[t][11mm]{.16\linewidth}{\textbf{Keywords:}}
	\parbox[t][11mm]{.84\linewidth}{Matching Theory, Market Design, Indirect Mechanisms} 

    \vspace{-0.2cm}
    
	  \noindent  \parbox[t]{.16\linewidth}{ \textbf{JEL:}}
	\parbox[t]{.84\linewidth}{C78, D47, D78}
\end{abstract}

\newpage
\onehalfspacing

\section{Introduction}

Priority-based assignments are pervasive in a wide range of real-world matching contexts, including university admissions, public-sector placements, and cadet-branch allocations in military academies \citep[see, e.g.,][]{balinski1999tale,sonmez2013matching}. In a typical priority-based system, participants are strictly ranked—based on exam scores or a merit list for instance—and are assigned to positions accordingly. The central fairness requirement in these contexts is that no lower-ranked participant should occupy a seat that a higher-ranked participant strictly prefers; otherwise, the latter has a legitimate grievance, known as \emph{justified envy} \citep{abdulkadirouglu2003school}. Such fairness concerns become even more pronounced when the priorities at stake represent strongly protected interests—like property rights or national exam rankings—where even a single instance of justified envy can trigger legal and administrative challenges.\footnote{One example of these legal challenges is the Federal University of Bahia hiring suit (Brazil, 2025), where a federal judge \emph{blocked} the university from hiring a lower-scoring quota applicant and ordered the single vacancy awarded to the exam’s top scorer \citep{ufba2025}. Another was the Italian national residency “fiasco”, in which the Regional Administrative Tribunal of Lazio annulled a ruling that forced higher-ranked doctors to forfeit more-preferred specialties while lower-ranked peers advanced \citep{lazio2016}. Both rulings treat the harm as a breach of the merit order—i.e., a violation of justified envy—showing that such breaches readily provoke litigation.}

Under the standard approach of designing \emph{direct mechanisms}—where each participant reports a complete ranking over positions—\emph{Serial Dictatorship (SD)} is in fact the \emph{only} mechanism that can satisfy non-wastefulness---i.e., not leaving desirable positions unfilled---and no-justified-envy. In SD, the highest-priority participant is assigned their top choice, then the next highest-priority participant is assigned their top choice from the remaining positions, and so on. This procedure prevents any lower-ranked participant from ending up in a spot that a higher-ranked participant strictly prefers, ensuring no-justified-envy. However, SD can produce allocations that are misaligned with policy goals, such as excessive clustering of top-ranked participants in a small set of elite locations, or  undesirable regional or demographic distributions. When considering standard direct mechanisms, there is \emph{no alternative} to SD that both respects strict priorities and operates on full preference lists. Thus, policymakers appear to face a dilemma: given the requirement for ``fairness by priority'', how can one construct rules in pursuit of better distributional outcomes?

When we look at real-world priority-based assignments, we see many depart from the fully “direct” approach, as illustrated by the following cases:

\medskip

\noindent \textbf{Case 1.}~In the Indian Administrative Services (IAS), officers were formerly assigned to state cadres\footnote{In the IAS context, a \emph{cadre} is the state-level administrative unit to which an officer is allocated and within which the officer ordinarily serves and develops her career. Some cadres correspond to a single state, while others group states or union territories.} in a priority-driven process aligned with exam-based merit. This arrangement, which was essentially a serial dictatorship with some modifications, produced undesirable allocations exhibiting \emph{homophily}, i.e., a propensity for officers to serve in or near their home regions. Such geographic clustering was seen as compromising the national integration objective of the service \citep{thakurMatchingCivilService2023}. A 2017 reform imposed a zone-based scheme: all states were partitioned into five geographic zones, and each officer now submits a separate ranking of states \emph{within} each zone and a separate ranking of the zones, rather than submitting a single list over the entire country (see \Cref{fig:iasRank}). Under the revised mechanism, no lower-ranked officer receives a state preferred by a higher-ranked officer \emph{inside the zone where they are ultimately matched}. At the same time, the zonal structure allows officers to be more evenly dispersed across India, advancing distributional goals without overriding the preferences participants actually report.

\begin{figure}[h!]
    \centering
\adjustbox{width=\textwidth}{\begin{tikzpicture}[font=\footnotesize]

\draw[thick] (0,0) rectangle (3,4.5);
\draw[thick] (0,3.5) -- (3,3.5);
\node at (1.5,4) {\textbf{Zone 3}};
\node[
    anchor=north west,
    align=left,
    text width=2.5cm
] at (0.1,3.3) {
    Gujarat\\
    Madhya Pradesh\\
    Chhattisgarh\\
    Maharashtra
};

\node at (3.5,2) {\Large $>$};

\draw[thick] (4,0) rectangle (7.5,4.5);
\draw[thick] (4,3.5) -- (7.5,3.5);
\node at (5.5,4) {\textbf{Zone 1}};
\node[
    anchor=north west,
    align=left,
    text width=3cm
] at (4.1,3.3) {
    AGMUT\\
    Punjab\\
    Jammu \& Kashmir\\
    Himachal Pradesh\\
    Uttarakhand\\
    Haryana\\
    Rajasthan
};

\node at (8,2) {\Large $>$};

\draw[thick] (8.5,0) rectangle (12,4.5);
\draw[thick] (8.5,3.5) -- (12,3.5);
\node at (10,4) {\textbf{Zone 4}};
\node[
    anchor=north west,
    align=left,
    text width=3cm
] at (8.6,3.3) {
    Manipur\\
    West Bengal\\
    Sikkim\\
    Nagaland\\
    Assam-Meghalaya\\
    Tripura
};

\node at (12.5,2) {\Large $>$};

\draw[thick] (13,0) rectangle (16,4.5);
\draw[thick] (13,3.5) -- (16,3.5);
\node at (14.5,4) {\textbf{Zone 2}};
\node[
    anchor=north west,
    align=left,
    text width=2.5cm
] at (13.1,3.3) {
    Uttar Pradesh\\
    Odisha\\
    Bihar\\
    Jharkhand
};

\node at (16.5,2) {\Large $>$};

\draw[thick] (17,0) rectangle (20,4.5);
\draw[thick] (17,3.5) -- (20,3.5);
\node at (18.5,4) {\textbf{Zone 5}};
\node[
    anchor=north west,
    align=left,
    text width=2.5cm
] at (17.1,3.3) {
    Andhra Pradesh\\
    Telangana\\
    Kerala\\
    Tamil Nadu\\
    Karnataka
};

\end{tikzpicture}
}
    \caption{Example of a preference ranking in the 2017 IAS Mechanism}
    \label{fig:iasRank}
\end{figure}

\medskip

\noindent \textbf{Case 2.}~ Under the U.S. Military Academy matching process of cadets to military branches used in 2006, each cadet (i) ranked the branches and (ii) stated, for every branch, whether they would accept a longer service obligation in exchange for a priority boost \citep{sonmez2013matching,greenbergRedesigningUSArmy2024}. This elicitation did not allow them to report \emph{cross-branch} trade-offs. Omitting this information let the Academy honour the official order-of-merit list while making it possible to steer more cadets into longer commitments that would have been rejected under full preference elicitation. As \citet{sonmezMinimalistMarketDesign2024} explains, this design kept such priority violations hidden:

\begin{quote}
“Several years later, in 2019, I finally learned why the Army initially did not pursue a potential reform of the USMA-2006 mechanism. (...) Any failure of the no-justified-envy axiom rooted in this first issue was also ‘invisible’ to the Army. When a cadet receives his first-choice branch at the increased price but prefers his second choice at the base price, this information was simply unavailable under the strategy space of the USMA-2006 mechanism.”
\end{quote}

\medskip

\noindent \textbf{Case 3.}~In the Chinese college admissions system, applicants submit a structured rank-order list in which majors are nested under colleges, effectively enforcing a lexicographic hierarchy: once a college is deemed higher-ranked, every major it offers is treated as strictly preferred to any program at a lower-ranked college. Although this structure is known to generate numerous cases of justified envy in practice, none of these can be challenged under the restricted message space \citep[see][]{huchinesecolleges}.\footnote{As of January 2025, 23 out of 31 provinces in China retain the nested rank-order procedure \citep{huchinesecolleges}, including Shanghai, Beijing, Tianjin, Hainan, Jiangsu, Fujian, Hubei, Hunan, Guangdong, Heilongjiang, Gansu, Jilin, Anhui, Jiangxi, Guangxi, Shanxi, Henan, Shaanxi, Ningxia, Sichuan, Yunnan, Tibet, and Xinjiang.  We provide screenshots from Fujian and Shanghai's official college-major list sample form in Appendix~\ref{appendixindirect}.}

\medskip

As the preceding cases show, limiting what participants may report can hide genuine priority breaches. Following this, we say a mechanism is \textbf{visibly fair} when, given the elicited (partial) preferences, no outcome appears to violate priority. For example, if seats are partitioned into zones and participants may only rank seats \emph{within} each zone, allocating by priority inside every zone looks perfectly fair—even though a cross-zone comparison (never elicited) might reveal a lower-ranked participant holding a seat a higher-ranked participant prefers. By restricting the scope of reported preferences, policy makers can pursue goals such as geographic diversity while keeping any latent violations invisible. A closely related idea appears already in \citet{greenbergRedesigningUSArmy2024}, who introduced the notion of \emph{detectable priority reversals}, a concept that corresponds precisely to visible (un)fairness in the context of the US Army's branching mechanism.

Inspired by these observations, \emph{we examine the design problem of assigning officers to positions under a strict priority ordering while maintaining visible fairness}. In contrast to standard models that fix a preference-reporting format, in this framework \textbf{the policy maker chooses both the message space} (the form of partial preferences agents can report) \textbf{and the outcome rule}. Our analysis provides a framework and results on how to configure these two elements together so as to achieve desired policy objectives.

\subsection*{Summary of Results}\label{subsec:summary-results}

Our analysis delivers four main sets of results.

First, we pin down the precise structure that visible fairness imposes on allocation rules.  \Cref{thm:CharacterizationVisiblyFairMechanisms} shows that any visibly–fair mechanism must operate as an \emph{$m$-queue allocation}: officers are processed in strict priority order, and each is assigned a state that is undominated, among the remaining states, within the partial ranking they are allowed to report.  When each officer's message space induces an officer-specific partition of the state space into zones, this characterization yields the more specific results in \Cref{thm:CharacterizationIncontestableMechanisms2,thm:CharacterizationIncontestableMechanisms3}. In this setting, the only visibly-fair rules are partitioned priority mechanisms. Moreover, when rankings are permitted across each officer's zones, the only visibly-fair rules are ranked-partitioned priority mechanisms. While visible fairness implies serial dictatorship (and therefore strategy-proofness) when using direct mechanisms (\Cref{cor:SDUniqueVisiblyFairDM}), that is not the case for general message spaces. We show in \Cref{thm:Truthful1} that strategy-proofness is obtained exactly when the mechanism also satisfies \emph{expressiveness} and  \emph{(weak) availability}, two properties that rule out profitable deviations when the mapping from message profiles to outcomes is more general.

Second, we introduce a flexible way to encode distributional goals through \emph{modular upper-bounds}.  A quota system is modular when every bound caps groups of officers within an arbitrary subset of states (\Cref{def:ModularUpperBounds}). Modular bounds induce \emph{constraint-induced message spaces} with a partition of states. This provides a direct link between policy objectives and elicitation: the distributional constraint itself determines which distinctions among states are relevant, and hence which preference information the mechanism should request. Building on this structure, the \emph{Modular Priority Mechanism} (\Cref{def:modularPriorityMechanism}) processes officers by priority while dynamically closing zones whose relevant caps have just filled.  \Cref{modulartheorem} proves that this mechanism simultaneously respects every modular bound, remains visibly fair, and is strategy-proof.

Third, we show that the choice of message space creates a genuine elicitation problem, not merely a way to hide priority violations. Coarser message spaces enlarge the set of allocations that can be implemented without visible unfairness, but they also reduce the mechanism's ability to condition on welfare-relevant preference information. Conversely, richer message spaces make the mechanism more responsive to participants, but they may reveal comparisons that constrain what can be done while preserving visible fairness and modular upper-bounds. We illustrate the non-monotonicity of individual effects, quantify the average welfare value of merging zones through simulations, and then prove in \Cref{thm:NoMechanismAllThree} that the trade-off has a formal limit: for some modular upper-bound systems, no static mechanism can simultaneously satisfy visible fairness, respect the caps at every message profile, and achieve constrained Pareto efficiency. The force behind the impossibility is that message spaces must be fixed preemptively: comparisons that could ever make fairness and feasibility incompatible must be excluded, even when, at the realized preference profile, those same comparisons would identify cap-respecting Pareto improvements.

Fourth, we bring the full toolkit to bear on the cadre allocation mechanism for India's All India Services (\Cref{sec:IAS}).  We show that the 2017 zone-based mechanism is visibly fair when insider quotas are removed, but that its interleaving rule violates strategy-proofness. Turning to the distributional intent of the reform, we formulate three modular-upper-bound specifications---zonal homophily caps, cross-zone balance constraints, and graduated home-proximity caps---each capturing a different plausible reading of the policymaker's objectives. Notably, the cross-zone-balance specification endogenously recovers the actual five-zone partition, while the home-proximity specification produces a richer, type-specific four-zone partition through nested constraints.

\subsection*{Related Literature}

A foundational theory for mechanisms with restricted message spaces comes from \cite{greenPartiallyVerifiableInformation1986},  who study settings where participants are restricted to a limited message space that depends on their true state. Their key insight is that by constraining the information a participant can reveal, the set of implementable outcomes can be expanded beyond what is possible in standard direct mechanisms. 

Beyond the already discussed cases of the Indian Administrative Service (IAS) cadre allocation \citep{thakurMatchingCivilService2023}, the U.S.\ Military Academy’s cadet-branch matching \citep{sonmez2013matching,sonmez2013bidding,greenbergRedesigningUSArmy2024}, and Chinese college admissions \citep{huchinesecolleges}, other real-world mechanisms also limit the extent of preference reporting. Relatedly, \citet{aygunTurhan2023AffirmativeActionIndia} document a restricted preference-language problem in Indian affirmative-action allocations, where applicants rank institutions even though their relevant preferences may be over institution--vertical-category positions. In school choice, for example, some systems cap the number of schools an applicant may rank \citep{haeringer2009constrained, calsamigliaConstrainedSchoolChoice2010}, while others allow applicants to “bundle” schools into groups without inter-group comparisons \citep{huang2025bundled}. 

Dynamic procedures often also restrict preferences and by doing so, might reduce complexity \citep{pyciaTheorySimplicityGames2023}. \cite{boIterativeDeferredAcceptance2022} propose Iterative Deferred Acceptance by letting participants choose from menus, obtaining stable results without demanding complete rankings, and \cite{boPickanObjectMechanisms2024,mackenzieMenuMechanisms2022} extend this idea to mechanisms that sequentially offer feasible outcomes—enhancing privacy and performing well in controlled experiments. Even small constraints, like limiting the length of rank-ordered lists, can disrupt classical incentive properties: \cite{haeringer2009constrained} show how capping the number of ranked schools compromises the usual strategy-proofness of the Deferred Acceptance procedure. Meanwhile, \cite{Caspari2024} propose precise conditions for stability and incentives with non-standard preference formats. Collectively, these studies highlight how restricting preference elicitation can open new design possibilities while preserving key desiderata --- an insight we leverage in defining and deploying visible fairness.
 \cite{decerfIncontestableAssignments2024} is perhaps the work most conceptually related to ours. Their notion of \emph{incontestable assignments} describes an environment where participants cannot fully observe others’ preferences or placements, leaving them unable to identify certain envy issues. This parallels our concept of \emph{visible fairness}, in which certain violations become undetectable. The key difference, however, is that \cite{decerfIncontestableAssignments2024} derive their informational constraints from the participants’ limited ability to view the full outcome, whereas in our framework, these constraints are intentionally \emph{designed} by the policy maker. Specifically, in our setup the policy maker restricts what participants can report so as to preclude distributional tensions that might otherwise manifest as visible grievances.\footnote{In many real-life applications, such as the IAS hiring and public sector hiring contests in Italy and Brazil, transparency requirements imply the public disclosure of information such as exam papers, scoring sheets, and even interview recordings. These are not considered private and must be accessible to ensure administrative and social oversight \citep{ANAC2025,cgu_parecer_1103_2023,AbizadaHiringPool2021}. Under these informational circumstances, incontestability might become equivalent to standard elimination of justified envy, and therefore under single priority imply serial dictatorship.} In addition, while \cite{decerfIncontestableAssignments2024} accommodate a variety of school-specific priorities, our model considers a single, strict priority ranking that orders all participants.

Another paper that considers mechanisms using alternative message spaces is \cite{doganCavalloGeography2024}. The authors analyze Italy’s nationwide teacher‑mobility scheme, in which teachers may rank entire municipalities, districts, or provinces—nested geographic units that bundle many schools into a single item on their list. They show that the tie‑breaking rule used to resolve these coarse rankings might allocate lower‑priority teachers ahead of higher‑priority ones, creating detectable instances of justified envy. They also show that these result in legal challenges: Italian courts have repeatedly upheld merit‑based claims, and parliamentary testimony records more than 1,000 lawsuits filed each year, on average, over such priority violations.

Partial preferences have also been studied in other contexts within market design.
One strand lets participants declare indifference classes directly: \cite{ERDIL2017268,manjunathStrategyproofExchangeTrichotomous2021,anderssonOrganizingTimeExchanges2021} build mechanisms that treat weak orders --- strict ranks punctuated by ties --- as the primitives, and then exploit those ties to recover efficiency and strategy‐proofness. A second strand considers problems in which not every pair of outcomes can be compared, evaluating which adaptation of standard properties, such as stability, can nonetheless sustain strategy-proofness under suitable conditions \citep{Caspari2024,kuvalekarMatchingIncompletePreferences2023}. Typically, these frameworks rely on “weak stability,” where participants who are indifferent or indecisive simply cannot block assignments. In contrast, our approach presumes participants \emph{do} have complete preferences but are deliberately constrained from revealing them in full.

The design of matching markets with distributional constraints through quota systems has emerged as a critical area of research in market design, balancing equity objectives with efficiency and stability considerations \citep{echeniqueHowControlControlled2015, abdulkadiroglu2023market}.\footnote{Practical implementations in education markets reveal both the potential and complexity of quota systems. \cite{combe2022design} quantified these trade-offs through France’s teacher assignment reforms, where experience-based distribution constraints reduced novice teacher concentrations in disadvantaged schools by 18\% without significant efficiency losses. \cite{combe2022market} extended this through a reassignment algorithm that prioritized understaffed schools, demonstrating how temporal flexibility in constraints (allowing multi-year adjustment periods) could mitigate short-term displacement costs. } \cite{kamadaEfficientMatchingDistributional2015} introduced the idea of matching with distributional constraints, showing that conventional stable matching algorithms can break down under strict regional quotas. To address these deficiencies, they introduced new mechanisms that ensure such constraints are respected while preserving or improving upon stability, efficiency, and incentive alignment. Subsequent research refines and generalizes these insights: for instance, \cite{kamadaStabilityStrategyproofnessMatching2018} identify structural conditions enabling strategy-proof and stable mechanisms under distributional constraints, and their more recent work \citep{kamadaFairMatchingConstraints2024} extends stability ideas to increasingly nuanced affirmative action policies.\footnote{Additional contributions include \cite{azizMatchingDiversityConstraints2020}, who introduce the principle of “cutoff stability” for diversity-constrained matching, and \cite{kojimaJobMatchingConstraints2020}, who identify conditions ensuring that distributional constraints do not undermine substitutability in job-matching markets.} 

At the core of this field lies the tension between rigid distributional quotas and the flexible preferences of participants. \cite{fragiadakis2017improving} demonstrated this through military cadet matching, where static reservation systems created inefficiencies by locking seats for specific groups prematurely. Their dynamic quota mechanism represented a paradigm shift, adjusting reservation targets based on revealed preferences while maintaining strategy-proofness. This approach inspired subsequent innovations like the Adaptive Deferred Acceptance (ADA) mechanism by \cite{gotoDesigningMatchingMechanisms2017}, which introduced hereditary constraints—rules where satisfying a constraint automatically satisfies all its subsets. The ADA mechanism’s success in Japanese medical residency matching showed that carefully designed constraints need not sacrifice core market principles like nonwastefulness and strategy-proofness.

\bigskip

\noindent \textbf{Structure of the paper.}
In \Cref{sec: Model and Definitions}, we introduce the model and definitions, covering partial preferences, feasible allocations, and our message-space framework. \Cref{sec: Visibly Fair Mechanisms} then characterizes visibly fair mechanisms, identifying them as queue-allocation variants and examining particular message spaces, such as zonal message spaces. \Cref{sec: Incentives} turns to incentives, specifying exactly when these mechanisms are strategy-proof via the conditions of expressiveness and availability. \Cref{sec:DistributionalObjectives} presents our results on distributional objectives, including modular upper-bounds and the Modular Priority Mechanism. \Cref{sec:efficiency} explains why preference elicitation remains valuable even when the designer restricts messages, uses simulations to quantify the average welfare value of merging zones, and establishes an impossibility result: in general, no one-shot mechanism can combine visible fairness and constrained Pareto efficiency. \Cref{sec:IAS} applies the framework to the Indian Administrative Service cadre allocation mechanism, analyzing its fairness and incentive properties and developing three modular-upper-bound specifications that capture different distributional readings of the 2017 reform. \Cref{sec: Conclusion} concludes. Proofs of the main-text results are relegated to \Cref{sec: Appendix Proofs}, while additional simulation details appear in \Cref{app:simulations}.

\section{Model and Definitions}
\label{sec: Model and Definitions}

A problem consists of:

\begin{enumerate}
    \item a finite set of \textbf{officers} $I=\{i_1,i_2,\ldots, i_n\}$,
    \item a finite set of \textbf{states} $S=\{s_1,s_2,\ldots, s_m\}$,
    \item a positive integer \textbf{capacity} $q_s \in \mathbb{N}:=\{1,2,\ldots\}$ for each state $s\in S$, such that $\sum_{s\in S}q_s \ge n$,
    \item a strict \textbf{preference} (\emph{asymmetric, complete, and transitive}) for each officer $(\succ_i)_{i\in I}$ over states $S$,\footnote{We denote by  $\succsim_{i}$ the associated weak preference---that is, $s\succsim_{i}s'\iff s\succ_{i}s'\text{ or }s=s'$.} and
    \item a \textbf{priority} ranking $\pi$ of officers $I$, where officer $i$ is ranked higher than officer $j$ if $\pi(i) < \pi(j)$.
\end{enumerate}

For any given problem, the goal is to produce an allocation of officers to states.  
Formally, an \textbf{allocation} $a=(a_i)_{i\in I}$ is a list specifying a state $a_i \in S$ for each officer $i\in I$.  
An allocation is \textbf{feasible} if, for each $s\in S$, we have $|\{ i\in I: a_i=s\}|\leq q_s$. We denote the set of all feasible allocations by $\mathcal{A}$. Furthermore, without loss of generality, we assume that officers with lower subscripts have a higher priority, i.e., $\pi(i_1)< \pi(i_2) < \cdots < \pi(i_n)$.

While the allocation decision is based on officers' reported preferences and assigned priorities, in our setup officers do not necessarily communicate their full preferences directly. Instead, they provide partial preference information from a message space.

Let $M_i$ denote the \textbf{message space} of officer $i\in I$. Each \textbf{message} $m_i\in M_i$ induces an expressed preference relation $\succ_{m_i}$ on $S$. We assume that $\succ_{m_i}$ is a \emph{strict partial order}, that is, it is \emph{irreflexive} and \emph{transitive}. Some message spaces below are described by listing directly stated comparisons and then closing them under transitivity. In all cases, however, the object induced by a message and used in every definition is the resulting strict partial order $\succ_{m_i}$.

Throughout the paper, all notions of comparability, truthfulness, visible fairness, maximality, expressiveness, and visible efficiency are defined using $\succ_{m_i}$.

We say that states $s$ and $s'$ are \textbf{comparable} under $m_i$ if
\[
s=s',\quad s\succ_{m_i}s',\quad\text{or}\quad s'\succ_{m_i}s.
\]
We denote by $\succeq_{m_i}$ the associated weak relation:
\[
s\succeq_{m_i}s'
\quad\Longleftrightarrow\quad
s=s' \text{ or } s\succ_{m_i}s'.
\]
We denote message space profiles and message profiles by $M=(M_i)_{i\in I}$ and
$m=(m_i)_{i\in I}$.

Given officer $i$'s true preference $\succ_i$ over states, a message $m_i\in M_i$ is \textbf{truthful} for $\succ_i$ if
\[
\succ_{m_i}\subseteq \succ_i,
\]
that is, for all $s,s'\in S$,
\[
s\succ_{m_i}s' \Longrightarrow s\succ_i s'.
\]
Let
\[
\mathcal{T}_i(\succ_i):=\{m_i\in M_i:\succ_{m_i}\subseteq \succ_i\}.
\]

Throughout the paper, we restrict attention to message spaces that allow for truthful messages: for every officer $i\in I$ and every strict preference $\succ_i$, the set $\mathcal{T}_i(\succ_i)$ is nonempty. That is, message spaces cannot force any comparison that would be inconsistent with some possible true strict preference.

Finally, having defined both allocations and messages, a \textbf{M-mechanism} is a function from message profiles to allocations, $\psi: M \to \mathcal{A}$.  
We will use the shorthand \textbf{mechanism} when the space of message profiles is clear from the context.

Throughout, we index objects by officers $i \in I$. When officers act in sequence, we write $i_k$ for the officer who moves at step $k$, and use the shorthand $a_k \equiv a_{i_k}$, $m_k \equiv m_{i_k}$, $\mathcal{C}_k \equiv \mathcal{C}_{i_k}$, $Z_k \equiv Z_{i_k}$, and analogously for any other object indexed by officers. The forms $x_k$ and $x_{i_k}$ refer to the same object.

\section{Visibly Fair Mechanisms}
\label{sec: Visibly Fair Mechanisms}

\subsection{Visible Fairness}

We now introduce the key notion of visible fairness:

\begin{defi}
\label{def:visibleFairness}
An allocation $a$ is \textbf{visibly unfair under $m$} if for some $i\in I$ either

\begin{itemize}
    \item[i)]there is a $j\in I$ such that $a_i\not=a_j$, $\pi(i) <  \pi(j)$, and $a_j \succ_{m_i} a_i$, or
    \item[ii)] there is an $s\in S$ such that $a_i\neq s$, $|\{ j\in I: a_j=s\}|< q_s$, and $s \succ_{m_i} a_i$.
\end{itemize}
A mechanism $\psi$ is \textbf{visibly fair} if there does not exist $m\in M$ such that $\psi(m)$ is visibly unfair under $m$. 
\end{defi}

At first sight, visible fairness appears to be a combination of standard non-wastefulness and elimination of justified envy. The distinction, however, lies in the fact that $m_i$, in general, is incomplete, and therefore some existing wastefulness or justified envy is ``invisible'' due to the limits to the expression of the associated preferences imposed by the message space.

Let $G(X,m_i)$ be the set of all $m_i$-maximal elements of $X\subseteq S$, that is
\[
G(X,m_i)=\{s\in X:\nexists s'\in X\text{ such that }s'\succ_{m_i}s\}.
\]

If $X\neq\emptyset$, then $G(X,m_i)\neq\emptyset$, because $\succ_{m_i}$ is a strict partial order and $S$ is finite.\footnote{This follows because every finite strict partial order has a maximal element on each nonempty subset---see \citet[Theorem 2.6]{bossert2010consistency}.} We next introduce a new family of mechanisms:

\begin{defi}
\label{def:mQueueMechanisms}
A mechanism $\psi$ is an \textbf{$\mathbf{m}$-queue allocation mechanism} if there exist deterministic selectors
\[
g_k:(2^S\setminus\{\emptyset\})\times M\to S,\qquad k=1,\ldots,n,
\]
such that, for every nonempty $X\subseteq S$ and every $m\in M$,
\[
g_k(X,m)\in G(X,m_k),
\]
and $\psi(m)$ is the outcome produced by the following procedure.

\begin{itemize}
   \item[] \textbf{Step $0$}: Set $S^1= S$.
   \item[] \textbf{Step $k$ ($1\leq k\leq n$)}:  Set $a_k=g_k(S^k,m)$. If the number of officers assigned to $a_k$ reaches $q_{a_k}$, that is $|\{\ell\in\{1,\ldots,k\}:a_\ell=a_k\}|=q_{a_k}$, then $S^{k+1}=S^k\setminus\{a_k\}$. Otherwise, $S^{k+1}=S^k$.
\end{itemize}
\end{defi}

The $m$-queue allocation mechanism is the natural partial-preference analogue of the serial dictatorship mechanism. Starting from the highest-ranked officer, officers are matched to undominated states among those not yet filled to capacity. The difference is that there may be multiple undominated states, so a selection criterion among them must also be specified.

\begin{thm}
\label{thm:CharacterizationVisiblyFairMechanisms}
A mechanism $\psi$ is visibly fair if and only if it is a $m$-queue allocation mechanism. 
\end{thm}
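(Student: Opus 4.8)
The plan is to prove both directions of the equivalence, with the more substantial work lying in the ``only if'' direction.

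\medskip

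\noindent\textbf{($\Leftarrow$) Every $m$-queue allocation mechanism is visibly fair.}
First I would fix an arbitrary message profile $m$ and let $a=\psi(m)$ be the allocation produced by the procedure, with $s^k$ the state assigned to officer $k$ at Step $k$. I would verify the two clauses of \Cref{def:visibleFairness} directly. For clause (i), suppose toward a contradiction that some officer $i$ envies $j$ with $\pi(i)<\pi(j)$ (so $i<j$ under our convention) and $a_j\succ_{m_i}a_i$. Because $i$ is processed before $j$, the state $a_j$ was still present in $S^i$ when $i$ chose (a state is only removed once its capacity is reached, and $j$ is later assigned to it, so it had at least one free slot at Step $i$). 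But then $a_j\in S^i$ with $a_j\succ_{m_i}a_i$ contradicts $a_i\in G(S^i,m_i)$, since a maximal element cannot be strictly dominated within $S^i$. For clause (ii), suppose some state $s$ has a free seat in the final allocation and $s\succ_{m_i}a_i$ for some $i$. Since $s$ never filled to capacity, it was never removed, so $s\in S^i$; again this contradicts maximality of $a_i$ in $S^i$. The one point requiring care is the claim ``$a_j\in S^i$'': I would argue it cleanly by noting that the set $S^k$ is nonincreasing in $k$ and a state leaves only at the step its capacity is exhausted, so any state receiving an officer at a step $\geq i$ must have been available at step $i$.

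\medskip

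\noindent\textbf{($\Rightarrow$) Every visibly fair mechanism is an $m$-queue allocation mechanism.}
Here I would fix a visibly fair $\psi$ and an arbitrary profile $m$, and show that $a=\psi(m)$ could have been produced by the queue procedure, i.e. that there is an admissible sequence of selections $s^k\in G(S^k,m_k)$ yielding exactly $a$. The natural approach is induction on the priority order $k=1,\dots,n$, maintaining the invariant that the partial allocation of officers $1,\dots,k$ is consistent with a legal run of the procedure. At Step $k$ I must show two things: that $a_k\in G(S^k,m_k)$, where $S^k$ is the available set determined by the earlier (inductively fixed) choices; and that the capacity-driven update of $S^{k}$ to $S^{k+1}$ is consistent with what $a$ does. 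The first is exactly where visible fairness is used: if $a_k\notin G(S^k,m_k)$, then some $s\in S^k$ has $s\succ_{m_k}a_k$, and I would derive a visible-unfairness violation from this — either clause (ii) if $s$ has a free seat in $a$, or clause (i) if $s$ is fully occupied, in which case some officer holding $s$ must have lower priority than $k$, producing justified envy visible to $k$. The delicate step is to show that whenever $s\in S^k$ and $s$ is full in $a$, at least one occupant of $s$ has priority below $k$; this should follow from a counting argument, since $S^k$ being the post-removal available set means fewer than $q_s$ of the higher-priority officers $1,\dots,k-1$ were assigned to $s$ (otherwise $s$ would already have been removed), so the full capacity $q_s$ forces an occupant among $k+1,\dots,n$.

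\medskip

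\noindent\textbf{Anticipated main obstacle.}
The crux is the bookkeeping in the ``only if'' direction: I must argue that the available set $S^k$ reconstructed from $a$ coincides with the set a genuine run of the procedure would produce, and in particular that ``$s\in S^k$ and $s$ full in $a$'' implies a lower-priority occupant. This requires carefully relating the procedure's removal rule (remove $s$ once the officers processed \emph{so far} fill it) to the final occupancy counts in $a$. I expect the cleanest formulation is to define $S^k$ explicitly as the set of states not yet filled by officers $1,\dots,k-1$ under $a$, prove by the counting argument above that this matches the procedure's $S^k$, and then run the visible-fairness contradiction. A secondary subtlety is that the queue procedure permits \emph{any} maximal selection, so I only need to exhibit one consistent run rather than uniqueness; stating this explicitly avoids the false worry that $\psi$ must match a canonical tie-break.
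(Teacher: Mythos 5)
Your proposal is correct and follows essentially the same route as the paper's proof: the forward direction contradicts maximality of $a_i$ in $S^i$ using the monotonicity of the available sets, and the converse reconstructs the sets $S^k$ from the realized allocation and invokes visible fairness (clause (ii) if the dominating state has a free seat, clause (i) via the lower-priority occupant otherwise). Your explicit counting argument for why a full state in $S^k$ must house a lower-priority officer, and your remark that only one consistent run of the procedure need be exhibited, merely spell out steps the paper leaves implicit.
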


When considering complete message spaces, Theorem \ref{thm:CharacterizationVisiblyFairMechanisms} gives us the following corollary:

\begin{coro}
\label{cor:SDUniqueVisiblyFairDM}
    If for every $i\in I$, every $m_i\in M_i$ and every $s,s'\in S$, states $s$ and $s'$ are comparable under $m_i$, then Serial Dictatorship is the unique visibly fair $M$-mechanism. 
\end{coro}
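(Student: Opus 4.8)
The plan is to leverage \Cref{thm:CharacterizationVisiblyFairMechanisms} to reduce the claim to a statement about $m$-queue allocation mechanisms, and then show that under the completeness hypothesis, every such mechanism coincides with Serial Dictatorship. First I would observe that the hypothesis---every pair $s,s'\in S$ is comparable under every admissible $m_i$---means each message $m_i$ is a complete, asymmetric, and (by acyclicity on a finite set) transitive relation, i.e., a strict total order on $S$. Hence each reported message is a genuine full preference ranking, and the set $G(X,m_i)$ of $m_i$-maximal elements of any nonempty $X\subseteq S$ is a \emph{singleton}: its unique top-ranked state under the total order $\succ_{m_i}$.

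The key step is then to trace through the $m$-queue allocation procedure of \Cref{def:mQueueMechanisms}. At each step $k$, officer $i_k$ is assigned some $s^k\in G(S^k,m_k)$; but since $G(S^k,m_k)$ is a singleton, there is no freedom in this selection---$s^k$ is forced to be the unique $\succ_{m_k}$-maximal element of the currently available set $S^k$. This is exactly the defining rule of Serial Dictatorship: officers are processed in priority order $\pi$, and each takes her most-preferred state among those still available (where a state leaves the available set precisely when its capacity is exhausted, matching the $S^{k+1}=S^k\setminus\{s^k\}$ update). Since the selection at every step is uniquely determined, the entire outcome $\psi(m)$ is uniquely pinned down by the message profile, so there is a single $m$-queue allocation mechanism and it is Serial Dictatorship.

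For the logical structure of the \emph{iff}: by \Cref{thm:CharacterizationVisiblyFairMechanisms}, a mechanism is visibly fair if and only if it is an $m$-queue allocation mechanism, so it suffices to establish that under completeness the class of $m$-queue allocation mechanisms consists of exactly one element, Serial Dictatorship. One direction---that Serial Dictatorship is visibly fair---follows because it is manifestly an instance of the $m$-queue procedure. The other direction---uniqueness---is the singleton-$G$ argument above: any two $m$-queue mechanisms must agree on every message profile, since no step admits a choice, hence they are the same mechanism.

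The main obstacle I anticipate is purely definitional bookkeeping rather than mathematical depth: I must verify carefully that the completeness-plus-acyclicity hypothesis genuinely forces each $\succ_{m_i}$ to be a strict total order (in particular transitivity, which the paper derives from acyclicity), so that $|G(X,m_i)|=1$ for all nonempty $X$. I would prove the singleton claim by noting that if two distinct states $s,s'$ were both maximal in $X$, comparability would force $s\succ_{m_i}s'$ or $s'\succ_{m_i}s$, contradicting maximality of one of them. Once this is in hand, the identification with Serial Dictatorship is immediate, and the only remaining care is to confirm that the capacity-driven update of $S^{k+1}$ in the $m$-queue procedure coincides with the standard notion of ``remaining states'' used in the usual statement of Serial Dictatorship.
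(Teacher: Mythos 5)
Your proposal is correct and follows exactly the route the paper intends: the corollary is presented as an immediate consequence of \Cref{thm:CharacterizationVisiblyFairMechanisms}, with the completeness hypothesis forcing each message to be a strict total order so that $G(S^k,m_k)$ is a singleton and the $m$-queue procedure collapses to Serial Dictatorship. Your added care about deriving transitivity from completeness plus acyclicity is sound (a violation of transitivity together with completeness would produce a cycle) and simply makes explicit what the paper leaves implicit.
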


Theorem \ref{thm:CharacterizationVisiblyFairMechanisms} provides a full characterization of visibly fair mechanisms as the set of \(m\)-queue allocation mechanisms. While this definition is compact, it captures an extensive class of mechanisms. Designing an \(m\)-queue mechanism involves two layers of choices. First, one must specify the message spaces officers can use. Second, given a message profile, one must determine which of the undominated, or \(m_i\)-maximal, states each officer is matched to. This second choice need not be made officer by officer in isolation: the selection among an officer's \(m_i\)-maximal states may depend on the entire message profile. Consequently, once messages are coarsened, the designer obtains a large combinatorial family of visibly fair mechanisms, corresponding to different ways of selecting among undominated states as a function of reported messages. In contrast, as indicated in the corollary above, when officers are allowed to submit complete preferences over all states, visible fairness pins down serial dictatorship as the unique mechanism. The characterization in Theorem~\ref{thm:CharacterizationVisiblyFairMechanisms} therefore illustrates how relaxing the message space dramatically expands the set of mechanisms that satisfy visible fairness, enabling flexibility in accommodating distributional goals.

\subsection{Two Special Message Spaces}

We now turn to two natural families of message spaces, which are used in practice and serve as the foundation for implementing distributional objectives in \Cref{sec:DistributionalObjectives}.

\subsubsection{Zonal Message Space}

We first focus on a family of message spaces that induce partitions of the set of states into ``zones.''

For each officer \(i\in I\), fix a partition of the set of states \(S\), $Z_i=\{z^i_1,\ldots,z^i_{\ell_i}\}$, where \(\bigcup_{j=1}^{\ell_i} z^i_j=S\) and \(z^i_j\cap z^i_{j'}=\emptyset\) for \(j\neq j'\). We call \(M_i\) a \textbf{zonal message space} for officer \(i\) if it is associated with such a partition \(Z_i\) and, under any \(m_i\in M_i\), two states are comparable if and only if they belong to the same zone in \(Z_i\).

In particular, we focus on zonal message spaces where officer \(i\) can rank the states within each of her zones in any desired way. That is, \(M_i\) has the following two properties:

\textbf{Within-zone completeness}: For every strict total order \(\succ^\ast\) on \(S\), there exists a message \(m_i \in M_i\) such that, for every zone \(z\in Z_i\) and all states \(s,s'\in z\) with \(s\neq s'\),
\[
s \succ^\ast s' \quad\Longleftrightarrow\quad s \succ_{m_i} s'.
\]

\textbf{Across-zone incomparability}: For every pair of distinct zones \(z,z'\in Z_i\), and for every message \(m_i\in M_i\), no state in \(z\) and state in \(z'\) are ever ordered under \(\succ_{m_i}\). That is, the message space never allows officer \(i\) to rank a state in one of her zones relative to a state in another of her zones.

In words, from officer \(i\)'s perspective, submitting a message in a zonal message space amounts to choosing a complete strict ordering over states within each zone in \(Z_i\), while leaving no comparison defined across zones in \(Z_i\).

\begin{example}
Let the set of states be \(S=\{s_1, s_2, s_3, s_4\}\). For a given officer \(i\), fix the following partition:

\[
Z_i=\{z^i_1,z^i_2\},\qquad
z^i_1 = \{s_1, s_2\}, \quad z^i_2 = \{s_3, s_4\}.
\]
Any message in officer \(i\)'s associated zonal message space is of the following form:

\begin{itemize}
    \item Within \(z^i_1\), rank \(s_1\) above \(s_2\) (or vice versa). 
    \item Within \(z^i_2\), rank \(s_3\) above \(s_4\) (or vice versa).
\end{itemize}
\end{example}

Notice that given officer \(i\)'s partition \(Z_i\), officer \(i\)'s preference \(\succ_i\), and a zonal message space \(M_i\) associated with \(Z_i\), there exists a unique truthful message \(m_i\in M_i\): within-zone completeness pins down the ranking of every pair of states that lie in the same zone of \(Z_i\) in agreement with \(\succ_i\), and across-zone incomparability leaves no further freedom.

Suppose each officer \(i\) has a zonal message space associated with an officer-specific partition \(Z_i\).

Consider a zone selection function for officer \(i\), $\mathcal{C}_i:(2^S\setminus\{\emptyset\})\times M\to Z_i$, which is a function such that for any nonempty \(X\subseteq S\) and any \(m\in M\), \(X\cap \mathcal{C}_i(X,m)\neq\emptyset\).\footnote{That is, as long as there are states with spare capacity available, \(\mathcal{C}_i\) must choose one of officer \(i\)'s zones with at least one such state.}

\begin{defi}
\label{def:PartitionPriorityMech}
A mechanism $\psi$ is a \textbf{partitioned priority mechanism} if there exists a zone selection function profile $\left(\mathcal{C}_i\right)_{i\in I}$ such that for any message profile $m$, $\psi(m)$ is the outcome produced by the following procedure:

\begin{itemize}
\item[] \textbf{Step $0$}: Set $S^1= S$.
   \item[] \textbf{Step $k$ ($1\leq k\leq n$)}:  $a_k \in G(S^k, m_k) \cap \mathcal{C}_k(S^k,m)= \{s^k\}$.\footnote{It is easy to see that within each zone, there is a unique $m_k$-maximal element among any set of remaining states with spare capacity, since all states in a zone are comparable under $m_k$.} If the number of officers assigned to $s^k$ reaches $q_{s^k}$, that is $\left|\{\ell\in\{1,\ldots,k\}: a_\ell=s^k\}\right|= q_{s^k}$, then $S^{k+1}\equiv S^{k}\backslash\{s^k\}$. Otherwise, $S^{k+1}= S^{k}$.
\end{itemize} 
\end{defi}

Zone selection functions constitute the essential component of the definition that results in the large variety of these mechanisms. They indicate, for each profile of messages, which zone will be used to determine an officer's outcome. This zone can depend on some exogenous parameter, on the allocations of higher-ranked officers and/or the preferences stated by other officers, as well as their own message. Once a zone is determined, however, the state that will be matched to the officer depends only on the reported preferences between the remaining states in that zone.\footnote{Notice, moreover, that the definition of the mechanism requires that $\mathcal{C}_k$ chooses a zone with states with spare capacity, which by assumption always exists.}

\begin{thm}
\label{thm:CharacterizationIncontestableMechanisms2}
For a zonal message space $M$, $\psi$ is visibly fair if and only if it is a partitioned priority mechanism.
\end{thm}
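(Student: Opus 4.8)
The plan is to route everything through \Cref{thm:CharacterizationVisiblyFairMechanisms}, which already identifies the visibly fair mechanisms with the $m$-queue allocation mechanisms. It therefore suffices to show that, under a zonal message space, the partitioned priority mechanisms are exactly the $m$-queue allocation mechanisms. The workhorse is a single structural observation about zones, which I would isolate as a preliminary claim since both directions depend on it: for any $X \subseteq S$ and any message $m_i$, the maximal set $G(X,m_i)$ contains precisely one state from each zone that meets $X$. Indeed, within-zone completeness makes all states of $X$ lying in a common zone $z$ pairwise comparable, so $X \cap z$ has a unique $\succ_{m_i}$-maximal element $t_z$; across-zone incomparability then guarantees that no state outside $z$ can dominate $t_z$, so $t_z \in G(X,m_i)$, while any two elements of $G(X,m_i)$ in the same zone would be comparable and hence equal.

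For the ``if'' direction, let $\psi$ be a partitioned priority mechanism with zone selection profile $(\mathcal{C}_i)_i$. At each step $k$ the rule sets $a_k \in G(S^k,m_k) \cap \mathcal{C}_k(S^k,m)$; in particular $a_k \in G(S^k,m_k)$, and the capacity-updating of $S^{k+1}$ is verbatim that of \Cref{def:mQueueMechanisms}. Hence $\psi$ is an $m$-queue allocation mechanism and, by \Cref{thm:CharacterizationVisiblyFairMechanisms}, visibly fair. The preliminary claim also justifies the well-definedness footnote in \Cref{def:PartitionPriorityMech}: since $\mathcal{C}_k(S^k,m)$ is a zone meeting $S^k$, the intersection $G(S^k,m_k)\cap\mathcal{C}_k(S^k,m)$ is exactly the singleton $\{t_{\mathcal{C}_k(S^k,m)}\}$.

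For the ``only if'' direction, let $\psi$ be visibly fair; by \Cref{thm:CharacterizationVisiblyFairMechanisms} it is an $m$-queue allocation mechanism, so for each profile $m$ the execution produces remaining sets $S^1(m),\dots,S^n(m)$ and assignments $s^k = \psi_k(m) \in G(S^k(m),m_k)$, all determined by the single-valued output $\psi(m)$. I would then define, for each officer $k$, the zone selection function $\mathcal{C}_k(X,m)$ to return the zone containing $s^k$ whenever $X = S^k(m)$, and otherwise any zone meeting $X$ (e.g.\ the zone of the lowest-indexed state of $X$); this is a well-defined function on $2^S \times M$ satisfying $X \cap \mathcal{C}_k(X,m) \neq \emptyset$ for every nonempty $X$, the nonempty case being all that arises since $\sum_s q_s \geq n$ keeps each $S^k(m)$ nonempty. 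The verification is an induction on $k$: assuming the partitioned priority mechanism driven by $(\mathcal{C}_i)_i$ has reproduced $\psi$ through step $k-1$ (so the remaining sets agree), at step $k$ the preliminary claim gives $G(S^k,m_k) \cap \mathcal{C}_k(S^k,m) = \{s^k\}$, because $s^k$, lying in $G(S^k,m_k)$ and in its own zone, is the unique maximal element of that zone in $S^k$; hence the two mechanisms assign the same state and update $S^{k+1}$ identically.

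The delicate points are confined to the ``only if'' direction. The first is bookkeeping: one must check that $\mathcal{C}_k$ is a genuine function, i.e.\ that the on-path prescription never collides with itself, which holds because for a fixed $m$ there is a unique $S^k(m)$, and distinct profiles are distinguished by the second argument of $\mathcal{C}_k$. The second, and the real crux, is the singleton identity $G(S^k,m_k)\cap\mathcal{C}_k(S^k,m)=\{s^k\}$: it is exactly here that the zonal structure is indispensable, since in a general message space $G(S^k,m_k)$ could contain several incomparable states inside the selected region and a zone selection function would no longer pin down a unique outcome. Everything else is a routine transcription of the $m$-queue dynamics.
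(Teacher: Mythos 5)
Your proposal is correct and follows essentially the same route as the paper: the ``only if'' direction in both cases constructs the zone selection function from the realized allocation by setting $\mathcal{C}_k(S^k,m)$ equal to the zone containing $\psi(m)_k$, and the ``if'' direction reduces to the argument of \Cref{thm:CharacterizationVisiblyFairMechanisms}. The only (welcome) presentational difference is that you isolate the ``exactly one $m_i$-maximal state per zone'' observation as an explicit lemma and invoke \Cref{thm:CharacterizationVisiblyFairMechanisms} as a black box, whereas the paper re-runs the visible-unfairness argument directly and leaves the uniqueness point to a footnote of \Cref{def:PartitionPriorityMech}.
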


\subsubsection{Zonal Message Space with Ranking over the Zones}

We now enrich the idea of a zonal message space by allowing officers to report a strict ordering \emph{across} their own zones. As before, officer \(i\)'s message space is associated with a partition $Z_i=\{z^i_1,\ldots,z^i_{\ell_i}\}$ of \(S\). Moreover, when \(X\) is contained in a single zone \(z\in Z_i\), write \(\max(X,m_i)\) and \(\min(X,m_i)\) for the unique maximal and minimal elements of \(X\) under the within-zone strict total order induced by \(m_i\).

A \textbf{zonal message space with ranking over the zones} for officer \(i\) is a message space \(M_i\) in which a message \(m_i\in M_i\) consists of:

\begin{enumerate}
    \item for each zone \(z\in Z_i\), a strict total order over the states in \(z\); and
    \item a strict total order \(\triangleright_{m_i}\) over the set of zones \(Z_i\).
\end{enumerate}

The expressed relation \(\succ_{m_i}\) is defined as the smallest strict partial order on \(S\), with respect to set inclusion, satisfying the following two requirements. First, within each zone \(z\in Z_i\), \(\succ_{m_i}\) coincides with the strict total order reported for that zone. Second, for every pair of distinct zones \(z,z'\in Z_i\),
\[
z\triangleright_{m_i}z'
\quad\Longrightarrow\quad
\max(z,m_i)\succ_{m_i}\min(z',m_i).
\]
No other cross-zone comparison is imposed directly. Thus, any further comparison in \(\succ_{m_i}\) is present only because it is implied by transitivity.

In words, these rankings augment zonal message spaces in a minimal sense. By ranking zone \(z\) above zone \(z'\), officer \(i\) directly states only that the best state in \(z\) is preferred to the worst state in \(z'\). This convention preserves the interpretation that the ranked-higher zone contains at least one state that is visibly better than at least one state in the ranked-lower zone, without imposing a full lexicographic ordering across zones.\footnote{To see the distinction, suppose that \(Z_i=\{z^i_1,z^i_2\}\), \(z^i_1=\{s_1,s_2\}\), and \(z^i_2=\{s_3,s_4\}\). Let \(m_i\) be such that \(s_1\succ_{m_i}s_2\) and \(s_3\succ_{m_i}s_4\). If the defining cross-zone comparison associated with \(z^i_1\triangleright_{m_i}z^i_2\) is \(s_1\succ_{m_i}s_4\), then only the best state in \(z^i_1\) is directly compared with the worst state in \(z^i_2\). Using \(s_4\succ_{m_i}s_1\) as the defining comparison for the same zone ranking would instead make the worst state in \(z^i_2\) dominate the best state in \(z^i_1\), which is inconsistent with the intended interpretation of \(z^i_1\triangleright_{m_i}z^i_2\).}

Notice, moreover, that zonal message spaces with rankings over zones satisfy the truthful-message existence assumption. Given any strict preference \(\succ_i\), officer \(i\) can truthfully rank states within each zone according to \(\succ_i\), and can rank zones according to their \(\succ_i\)-best elements.

\begin{example}
Let the set of states be \(S=\{s_1, s_2, s_3, s_4\}\). For a given officer \(i\), fix the following partition:

\[
Z_i=\{z^i_1,z^i_2\},\qquad
z^i_1 = \{s_1, s_2\}, \quad z^i_2 = \{s_3, s_4\}.
\]

Any message in officer \(i\)'s associated zonal message space with rankings over zones is of the following form:

\begin{itemize}
    \item Within \(z^i_1\), rank \(s_1\) above \(s_2\) (or vice versa).
    \item Within \(z^i_2\), rank \(s_3\) above \(s_4\) (or vice versa).
    \item In addition, choose a ranking over the two zones. This ranking determines the direct cross-zone comparison as follows:
        \begin{itemize}
            \item If \(s_1\succ_{m_i}s_2\) and \(s_3\succ_{m_i}s_4\), then \(z^i_1\triangleright_{m_i}z^i_2\) imposes \(s_1\succ_{m_i}s_4\), whereas \(z^i_2\triangleright_{m_i}z^i_1\) imposes \(s_3\succ_{m_i}s_2\).
            \item If \(s_1\succ_{m_i}s_2\) and \(s_4\succ_{m_i}s_3\), then \(z^i_1\triangleright_{m_i}z^i_2\) imposes \(s_1\succ_{m_i}s_3\), whereas \(z^i_2\triangleright_{m_i}z^i_1\) imposes \(s_4\succ_{m_i}s_2\).
            \item If \(s_2\succ_{m_i}s_1\) and \(s_3\succ_{m_i}s_4\), then \(z^i_1\triangleright_{m_i}z^i_2\) imposes \(s_2\succ_{m_i}s_4\), whereas \(z^i_2\triangleright_{m_i}z^i_1\) imposes \(s_3\succ_{m_i}s_1\).
            \item If \(s_2\succ_{m_i}s_1\) and \(s_4\succ_{m_i}s_3\), then \(z^i_1\triangleright_{m_i}z^i_2\) imposes \(s_2\succ_{m_i}s_3\), whereas \(z^i_2\triangleright_{m_i}z^i_1\) imposes \(s_4\succ_{m_i}s_1\).
        \end{itemize}
\end{itemize}

\end{example}

Suppose each officer \(i\) has a zonal message space with ranking over her zones, associated with partition \(Z_i\).
Define a \textbf{ranked zone selection function} for officer \(i\) to be a mapping $\mathcal{C}_i:\bigl(2^S\setminus\{\emptyset\}\bigr)\times M\to Z_i$ that, for each nonempty subset of states \(X\subseteq S\) and each message profile \(m\), selects a zone \(\mathcal{C}_i(X,m)\in Z_i\) such that:

\begin{enumerate}
    \item \(X \cap \mathcal{C}_i(X,m) \neq \emptyset\), and
    \item either
   \[
     X \cap \mathcal{C}_i(X,m) 
     \neq \bigl\{\min\!\bigl(\mathcal{C}_i(X,m),m_i\bigr)\bigr\},
   \]
   or there is no zone \(z\in Z_i\) with \(z\triangleright_{m_i}\mathcal{C}_i(X,m)\) and \(\max(z,m_i)\in X\).
\end{enumerate}

A ranked zone selection function encodes the restrictions that visible fairness imposes when zones are ranked: if officer \(i\)'s selected zone contains only the least-preferred state in that zone, then no zone that officer \(i\) ranks higher may still have its most-preferred state available.

\begin{defi}\label{def:RankedPartitionPriorityMech}
A mechanism $\psi$ is a \textbf{ranked partitioned priority mechanism} if there exists a ranked zone selection function profile $(\mathcal{C}_i)_{i\in I}$ such that for any message profile $m$, $\psi(m)$ is the outcome produced by the following procedure:

\begin{itemize}
\item[] \textbf{Step $0$}: Set $S^1= S$.
   \item[] \textbf{Step $k$ ($1\leq k\leq n$)}:  $a_k \in G(S^k, m_k) \cap \mathcal{C}_k(S^k,m)= \{s^k\}$.\footnote{Ranked zone selection function makes sure that within the selected zone, a $m_k$-maximal element exists. To see why, let $z = \mathcal{C}_k(S^k, m)$ be the selected zone. We have two cases from the ranked zone selection function definition:

\textbf{Case 1:} $S^k \cap z \neq \emptyset$ and $S^k \cap z \neq \{\min(z, m_k)\}$

Then $S^k \cap z$ contains non-minimal elements of zone $z$.  Since all states in a zone are comparable under $m_k$, the $m_k$-maximal state within $S^k \cap z$ is unique. By the across-zone ranking property, only $\min(z, m_k)$ can be dominated by states in different zones, so the $m_k$-maximal state within $S^k \cap z$ is also $m_k$-maximal in all of $S^k$. Therefore, $G(S^k, m_k) \cap \mathcal{C}_k(S^k,m))$ is non-empty and a singleton.

\textbf{Case 2:} $S^k \cap z \neq \emptyset$ and no zone $z'\in Z_k$ with $z' \triangleright_{m_k} z$ has $\max(z', m_k) \in S^k$

The condition ensures that all maximal elements of higher-ranked zones are unavailable in $S^k$. Therefore, $m_k$-maximal state within $S^k \cap z$ faces no domination from higher-ranked zones and is $m_k$-maximal in $S^k$. Therefore, again $G(S^k, m_k) \cap \mathcal{C}_k(S^k,m)$ is non-empty and a singleton.} If the number of officers assigned to $s^k$ reaches $q_{s^k}$, that is $\left|\{\ell\in\{1,\ldots,k\}: a_\ell=s^k\}\right|= q_{s^k}$, then $S^{k+1}\equiv S^{k}\backslash\{s^k\}$. Otherwise, $S^{k+1}= S^{k}$.
\end{itemize}
\end{defi}

\begin{thm}
\label{thm:CharacterizationIncontestableMechanisms3}
For zonal message space with ranking over zones, $\psi$ is visibly fair if and only if it is a ranked partitioned priority mechanism.
\end{thm}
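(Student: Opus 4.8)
The plan is to lean on \Cref{thm:CharacterizationVisiblyFairMechanisms}, which already equates visible fairness with being an $m$-queue allocation mechanism. It therefore suffices to show that, for a zonal message space with ranking over zones, a mechanism is an $m$-queue allocation mechanism if and only if it is a ranked partitioned priority mechanism. Both families process officers in priority order with identical capacity-updating dynamics, so the whole content of the equivalence lies in comparing, step by step, the admissible set $G(S^k,m_k)$ for officer $k$ against the singleton produced by a ranked zone selection function.

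The first step I would carry out is a structural lemma describing the maximal set $G(X,m_i)$ in this message space. Because each message $m_i$ encodes a total order inside every zone together with \emph{exactly} the cross-zone comparisons $\max(z',m_i)\succ_{m_i}\min(z,m_i)$ for $z'\triangleright_{m_i}z$ and nothing else, the only state that can be dominated from outside its own zone is the global minimum $\min(z,m_i)$, and only by the top state $\max(z',m_i)$ of a higher zone $z'\triangleright_{m_i}z$. I would use this to prove: a state $s$ lies in $G(X,m_i)$ if and only if $s=\max(X\cap z,m_i)$ for its own zone $z$ and, in addition, either $X\cap z\neq\{\min(z,m_i)\}$ or no higher zone $z'\triangleright_{m_i}z$ has $\max(z',m_i)\in X$. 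These two clauses are precisely condition (2) in \Cref{def:RankedPartitionPriorityMech}, and the non-emptiness requirement is condition (1); this is the exact bridge between the two mechanism families.

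Given the lemma, the ``if'' direction is short: the footnote to \Cref{def:RankedPartitionPriorityMech} already shows that $G(S^k,m_k)\cap\mathcal{C}_k(S^k,m)$ is a singleton $\{s^k\}$ with $s^k\in G(S^k,m_k)$, so every ranked partitioned priority mechanism selects an $m_k$-maximal available state at each step and is an $m$-queue allocation mechanism. For ``only if'' I would take an $m$-queue allocation mechanism $\psi$, and for each profile $m$ follow its trajectory $S^1,\dots,S^n$ and assignments $s^1,\dots,s^n$; define $\mathcal{C}_k(X,m)$ to be the zone containing $s^k$ whenever $X$ equals the remaining set at step $k$, and any admissible zone (the zone of some element of $G(X,m_k)$, which is nonempty) otherwise. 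Since $s^k\in G(S^k,m_k)$, the lemma guarantees this zone satisfies conditions (1) and (2), so $\mathcal{C}_k$ is a legitimate ranked zone selection function. A straightforward induction on $k$ then shows the ranked partitioned priority mechanism built from $(\mathcal{C}_i)_{i\in I}$ reproduces $\psi$: if the trajectories agree up to $S^k$, then $G(S^k,m_k)\cap\mathcal{C}_k(S^k,m)=\{s^k\}$ by the lemma, the assignment matches, and the capacity update yields the same $S^{k+1}$.

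The hard part will be the structural lemma, specifically pinning down cross-zone domination. I must argue that the raw (non-transitively-closed) message relation introduces no domination beyond the single edges $\max(z',m_i)\succ_{m_i}\min(z,m_i)$, so that any available within-zone maximum $\max(X\cap z,m_i)$ that is \emph{not} the zone's global minimum is automatically $m_i$-maximal in all of $X$, while the global minimum survives only when no higher zone still has its top state available. Aligning this case analysis exactly with the two clauses of condition (2), including the boundary case of singleton zones where $\max(z,m_i)=\min(z,m_i)$, is where the care is needed; the remaining reconstruction and capacity bookkeeping are routine.
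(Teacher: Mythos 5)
Your proposal is correct and, once unpacked, is essentially the paper's own argument: your structural lemma characterizing $G(X,m_i)$ --- that only a zone's minimum can be dominated from outside its zone, and only by the maximum of a higher-ranked zone, so that the surviving states are exactly those singled out by conditions (1) and (2) of the ranked zone selection function --- is precisely the content of the footnote to \Cref{def:RankedPartitionPriorityMech} (for the ``if'' direction) and of the paper's verification that the constructed selection function is ranked (for the ``only if'' direction). The one point you should make explicit is that you route through \Cref{thm:CharacterizationVisiblyFairMechanisms}, which is stated under the standing richness assumption that zonal message spaces with ranking over zones are explicitly noted to violate; the invocation is harmless because the proof of that theorem nowhere uses richness, but this deserves a remark, whereas the paper sidesteps the issue by rerunning the construction from Theorems 1 and 2 directly and deriving both $a_k\in G(S^k,m_k)\cap\mathcal{C}_k(S^k,m)$ and the ranked-selection property straight from visible unfairness.
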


The presence of ranking over zones implies some restrictions on the zones that the zone selection function can determine, as shown in the example below.

\begin{example}
\label{ex:ZoneRankRestriction}
\frenchspacing
There are three states $S=\{s_1,s_2,s_3\}$ each with capacity $q_{s}=1$ and two
zones  

\[
  z_1=\{s_1\},\qquad z_2=\{s_2,s_3\}.
\]

There are two officers, $i_1$ and $i_2$.
We will consider visibly fair mechanisms in which the message space is zonal with rankings over these zones for both officers.

Officers $i_1$ and $i_2$ both submit the same message:

\medskip
\noindent
\[
  z_1\;\triangleright_{m_i}\;z_2,
  \qquad
  s_2\succ_{m_i}s_3 .
\]

Because $z_2$ contains \emph{two} available states, the ranked‑zone
selection function could, in this scenario, place $i_1$ in \emph{either}
zone $z_1$ or $z_2$. Suppose that it places on $z_2$. Given $m_1$, $i_1$ is matched to $s_2$. 

Now $z_1$ still has $s_1$ free, and $z_2$ only $s_3$. Since $z_1$ remains vacant and is the \emph{highest‑ranked} for $i_2$, the mechanism must choose $z_1$ and assign $s_1$. Selecting $z_2$ would contradict $z_1\triangleright_{m_2}z_2$
while $z_1$ still offers an available seat, and is therefore not allowed.
\end{example}

Ordinary partitioned priority mechanisms (without the zone ranking) would leave the planner free to swap $i_2$ between $z_1$ and $z_2$, illustrating how adding cross‑zone orderings tightens the designer’s hands, in comparison.

\section{Incentives}
\label{sec: Incentives}

Recall from Section~2 that a message $m_i$ is \emph{truthful} for a preference $\succ_i$ whenever every comparison it expresses agrees with $\succ_i$, and that the truthful-message existence assumption guarantees $\mathcal{T}_i(\succ_i)\neq\varnothing$ for every officer.
For a given officer $i\in I$, let $\mathcal{Q}_i$ be the \textbf{set of all preferences} over $S$.

When it comes to incentives, a key desideratum is that an officer who submits a truthful message should never receive a less preferred allocation than if they were to report any other message. 
Formally, a mechanism $\psi$ is \textbf{strategy-proof} if, for every officer $i\in I$, every strict preference $\succ_i\in\mathcal{Q}_i$ with associated weak preference $\succsim_i$, every truthful message $m_i\in\mathcal{T}_i(\succ_i)$, every profile of other officers' messages $m_{-i}\in M_{-i}$, and every alternative message $\hat{m}_i\in M_i$, we have
$$
\psi(m_i,m_{-i})_i \succsim_i \psi(\hat{m}_i,m_{-i})_i.
$$

Note that strategy-proofness implies that if an officer has multiple truthful messages, then they cannot lead to different outcomes. 
We next define two conditions that are sufficient for a visibly fair mechanism to be strategy-proof. Weakening one of the two conditions is also necessary for a visibly fair mechanism to be strategy-proof.

The first condition, expressiveness, requires that whenever an officer changes her message and thereby obtains another assignment, this new assignment must be comparable to the officer’s originally assigned state under the original message. 

\begin{defi}
\label{def:expressiveness}
Let $\psi$ be a mechanism and let $m=(m_i,m_{-i})\in M$ be a message profile. The interest of officer $i$ in state $s \in S$ is \textbf{expressed} under $m$ if $s$ and $\psi(m)_i$ are comparable under $m_i$.

A mechanism $\psi$ satisfies \textbf{expressiveness} if, for every officer $i\in I$, every message profile $m=(m_i,m_{-i})\in M$, and every alternative message $\hat{m}_i\in M_i$, the interest in state $\psi(\hat{m}_i,m_{-i})_i$ is expressed by officer $i$ under $m$. Equivalently, for every such $i$, $m$, and $\hat{m}_i$,
$$
\psi(\hat{m}_i,m_{-i})_i=\psi(m)_i
\quad\text{or}\quad
\psi(\hat{m}_i,m_{-i})_i\succ_{m_i}\psi(m)_i
\quad\text{or}\quad
\psi(m)_i\succ_{m_i}\psi(\hat{m}_i,m_{-i})_i.
$$
\end{defi}

The second condition, availability, requires that whenever an officer changes her message and thereby obtains another assignment, this new assignment must always correspond to a state that was already available to her under the original message. 

\begin{defi}
\label{def:availability}
Let $\psi$ be a mechanism and let $m\in M$ be a message profile. A state $s \in S$ is \textbf{available} to officer $i$ under $m$ if
$$
\left|\{j\in I:\psi(m)_j=s\text{ and }\pi(j)<\pi(i)\}\right|<q_s.
$$

A mechanism $\psi$ satisfies \textbf{availability} if, for every officer $i\in I$, every message profile $m=(m_i,m_{-i})\in M$, and every alternative message $\hat{m}_i\in M_i$, the state $\psi(\hat{m}_i,m_{-i})_i$ is available to officer $i$ under $m$.
\end{defi}

Interestingly, availability is too strong of a condition for ensuring strategy-proofness. Indeed, while availability ensures that an officer cannot manipulate the availability of states by submitting a different message, weak availability only requires that an officer cannot manipulate the availability of weakly preferred states, evaluated at the original message, relative to her assignment under the original message. This condition, together with expressiveness, is necessary for a visibly fair mechanism to be strategy-proof.

\begin{defi}
\label{def:weakavailability}
A mechanism $\psi$ satisfies \textbf{weak availability} if, for every officer $i\in I$, every preference $\succsim_i\in\mathcal{Q}_i$, every truthful message $m_i\in\mathcal{T}_i(\succ_i)$, every profile of other officers' messages $m_{-i}\in M_{-i}$, and every alternative message $\hat{m}_i\in M_i$,
$$
\psi(\hat{m}_i,m_{-i})_i \succsim_i \psi(m_i,m_{-i})_i
\quad\Longrightarrow\quad
\psi(\hat{m}_i,m_{-i})_i \text{ is available to officer } i \text{ under } (m_i,m_{-i}).
$$
\end{defi}

Now we are ready to formally state our main result on incentives.

\begin{thm}
\label{thm:Truthful1} 

A visibly fair mechanism is strategy-proof if and only if it satisfies
expressiveness and weak availability.

\end{thm}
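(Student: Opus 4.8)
The plan is to prove both directions separately, treating strategy-proofness as equivalent to the conjunction of expressiveness and weak availability, throughout exploiting the characterization from \Cref{thm:CharacterizationVisiblyFairMechanisms} that $\psi$ being visibly fair means it is an $m$-queue allocation mechanism. The central object in every argument is the comparison between officer $i$'s outcome under a truthful message $m_i$ and under a deviation $\hat{m}_i$, holding $m_{-i}$ fixed. I would write $s^* = \psi(m)_i$ and $\hat{s} = \psi(\hat{m}_i, m_{-i})_i$ throughout.

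For the \emph{sufficiency} direction (expressiveness and weak availability $\Rightarrow$ strategy-proofness), I would argue by contradiction: suppose $\hat{s} \succ_i s^*$ for some deviation. The first step is to invoke weak availability, whose hypothesis $\psi(\hat{m}_i, m_{-i})_i \succsim_i \psi(m)_i$ is exactly met, to conclude that $\hat{s}$ was \emph{available} to $i$ under $m$ — i.e.\ strictly fewer than $q_{\hat{s}}$ higher-priority officers occupy $\hat{s}$ in $\psi(m)$. The second step invokes expressiveness to conclude that $\hat{s}$ and $s^* = \psi(m)_i$ are comparable under the truthful $m_i$; since $\hat{s} \succ_i s^*$ and $m_i$ is truthful, comparability forces $\hat{s} \succ_{m_i} s^*$. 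The third step is the contradiction with visible fairness: at the step $k$ where $i$ is processed in the $m$-queue run of $\psi(m)$, availability of $\hat{s}$ means $\hat{s}$ still had spare capacity relative to the higher-priority officers, so $\hat{s} \in S^k$; but then $s^* \in G(S^k, m_i)$ together with $\hat{s} \succ_{m_i} s^*$ and $\hat{s}\in S^k$ contradicts $s^*$ being $m_i$-maximal. (I must be careful here: availability counts only higher-priority officers, while $S^k$ reflects capacity reached by all officers processed up to $k$; since exactly the first $k-1$ processed officers are the higher-priority ones, these coincide, and I would spell this reconciliation out as the one genuinely delicate bookkeeping point.)

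For the \emph{necessity} direction, I would assume $\psi$ is visibly fair and strategy-proof and derive each property. Failure of \textbf{weak availability} gives a deviation with $\hat{s} \succsim_i s^*$ yet $\hat{s}$ unavailable under $m$; I would show this forces either a direct profitable manipulation (if $\hat{s} \succ_i s^*$, immediate contradiction with strategy-proofness) or, in the indifference-free strict-preference model where $\succsim_i$ has no ties, $\hat{s} = s^*$, and then I would construct an auxiliary preference/message for which the unavailability of $\hat{s}$ translates into a genuine gain, contradicting strategy-proofness. Failure of \textbf{expressiveness} yields a deviation where $\hat{s}$ is incomparable to $s^*$ under $m_i$; the key move is that incomparability under a rich message space gives the officer freedom to have $\hat{s} \succ_i s^*$ consistent with $m_i$ being truthful for a suitable $\succsim_i$, so $\hat m_i$ would be a profitable deviation, again contradicting strategy-proofness.

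The hard part will be the necessity direction for weak availability: because the paper's abstract framing mentions weakening \emph{one} of the two conditions, I expect the precise statement is that expressiveness plus weak availability is equivalent to strategy-proofness but that expressiveness is strictly needed and availability (un-weakened) is too strong, so the delicate step is constructing, from a failure of weak availability, a preference profile under which the officer strictly benefits — this requires using richness to realize an alternative truthful message and carefully tracking that the fixed $m_{-i}$ still produces the same availability pattern so that the constructed manipulation indeed changes $i$'s outcome favorably. I would isolate this construction as a lemma. The comparability-forcing steps and the identification of ``higher-priority officers'' with ``officers processed earlier'' are routine once stated, so I would relegate them to short verifications and concentrate the exposition on the two contradiction constructions.
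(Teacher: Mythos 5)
Your proposal follows essentially the same route as the paper's proof: sufficiency by combining weak availability (to place $\hat{s}$ among the states with spare higher-priority capacity) with expressiveness and truthfulness (to force $\hat{s}\succ_{m_i}s^*$), then contradicting visible fairness; necessity of expressiveness by using the incomparability of $\hat{s}$ and $s^*$ under $m_i$ together with richness of the preference domain to build a $\succsim^*_i$ for which $m_i$ is truthful yet $\hat{m}_i$ is profitable. Your bookkeeping remark reconciling ``available to $i$'' (which counts only higher-priority officers) with membership in $S^k$ is a legitimate point that the paper's write-up glosses over, and your handling of it is sound.

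The one place you go astray is where you locate the difficulty. For necessity of weak availability you split into the cases $\hat{s}\succ_i s^*$ and $\hat{s}=s^*$, and you flag the second case --- for which you plan an unspecified ``auxiliary preference/message'' construction --- as the hard part requiring a lemma. That case is vacuous: $\psi(m)_i$ is always available to $i$ under $m$, since feasibility of $\psi(m)$ means that strictly fewer than $q_{s^*}$ higher-priority officers occupy $s^*$ (officer $i$ herself takes one of the $q_{s^*}$ seats). Hence an unavailable $\hat{s}$ cannot equal $s^*$, the strict preference $\hat{s}\succ_i s^*$ is immediate, and the violation of strategy-proofness follows in two lines --- exactly as in the paper. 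The auxiliary construction you propose is not needed, and as described (manufacturing a gain from a deviation that yields the same outcome) it could not succeed anyway; you should replace that branch with the one-line availability observation rather than isolating it as a lemma.
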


The following corollary is immediate: 
\begin{coro}
\label{coro:Truthful1}
    A visibly fair mechanism is strategy-proof if it satisfies expressiveness and availability. 
\end{coro}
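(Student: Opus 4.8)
The plan is to derive \Cref{coro:Truthful1} directly from \Cref{thm:Truthful1}, which states that a visibly fair mechanism is strategy-proof if and only if it satisfies expressiveness and weak availability. Since the corollary only claims a sufficient condition, it suffices to show that the hypotheses of the corollary—expressiveness and (full) availability—imply the hypotheses of the theorem—expressiveness and weak availability. The expressiveness hypothesis is shared verbatim, so the entire content of the proof reduces to the single implication that availability implies weak availability.

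First I would recall the two definitions side by side. Availability (\Cref{def:availability}) requires that for every officer $i$, every $m_i\in M_i$, and every deviation $\hat m_i\in M_i$, the state $\psi(\hat m_i,m_{-i})_i$ is available to $i$ under $m$. Weak availability (\Cref{def:weakavailability}) makes the same conclusion—that $\psi(\hat m_i,m_{-i})_i$ was available to $i$ under $m$—but only under the additional restrictions that $m_i$ is truthful for some $\succsim_i$ and that the deviation yields a weakly preferred outcome, $\psi(\hat m_i,m_{-i})_i\succsim_i\psi(m)_i$. The key observation is that weak availability quantifies over a strictly smaller collection of (message, deviation) pairs than availability does, while demanding the identical conclusion.

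The argument is then a one-line logical containment: fix any officer $i$, any $\succsim_i\in\mathcal Q_i$, any truthful message $m_i\in M_i$ for $\succsim_i$, and any deviation $\hat m_i\in M_i$ satisfying $\psi(\hat m_i,m_{-i})_i\succsim_i\psi(m)_i$. Because availability holds for \emph{all} $m_i\in M_i$ and \emph{all} $\hat m_i\in M_i$—in particular for this truthful $m_i$ and this $\hat m_i$—we conclude that $\psi(\hat m_i,m_{-i})_i$ is available to $i$ under $m$. This is exactly the conclusion required by weak availability, so availability implies weak availability. Combining this with the shared expressiveness hypothesis, a mechanism satisfying expressiveness and availability also satisfies expressiveness and weak availability, and hence by \Cref{thm:Truthful1} is strategy-proof.

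I do not anticipate any genuine obstacle here: the corollary is a pure logical weakening of one hypothesis, and no new combinatorial or matching-theoretic machinery is needed. The only point requiring the slightest care is to confirm that the preference $\succsim_i$ and the truthfulness of $m_i$—which appear in weak availability but not in full availability—pose no difficulty; they do not, precisely because full availability imposes no such qualifications and therefore holds a fortiori for truthful messages. Thus the entire proof is the remark that the quantifier range in weak availability is contained in that of availability, and I would state it as such rather than re-verifying \Cref{thm:Truthful1}.
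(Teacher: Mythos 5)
Your proposal is correct and matches the paper's intent exactly: the paper states the corollary as ``immediate'' from \Cref{thm:Truthful1}, and the only content is precisely your observation that availability implies weak availability because it demands the same conclusion over a strictly larger set of (message, deviation) pairs. Nothing further is needed.
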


To give some intuition behind the results we give two examples. The first one is a strategy-proof mechanism violating availability --- but satisfying weak availability and expressiveness. 

\begin{example}[Strategy-proof visibly fair mechanism violating availability]
\label{example:Violation availability}
\quad
Consider a problem with two officers $I = \{i_1, i_2\}$. Without loss of generality, we let officer $i_1$ have higher priority than $i_2$, i.e., $\pi(i_1) < \pi(i_2)$. There are two states $S = \{s_1, s_2\}$, each with capacity $q_s = 1$.
Consider a mechanism $\psi$ where $i_1$ can only submit a single message $m_{i_1}$ without any preference information and $i_2$ can either submit message $m_{i_2}: s_1 \succ_{m_{i_2}} s_2$ or $m'_{i_2}: s_2 \succ_{m'_{i_2}}s_1$.
Finally, let $\psi(m_{i_1},m_{i_2})=(s_2,s_1)$ and $\psi(m_{i_1},m'_{i_2})=(s_1,s_2)$.

Clearly, the mechanism is strategy-proof as $i_1$ cannot influence the outcome by submitting a different message, and $i_2$ always gets her top choice when submitting preferences truthfully. 
Moreover, the mechanism is visibly fair as $i_1$ gives no preference information and $i_2$ gets her top choice. 

It is easy to see that availability is violated as, e.g., consider messages $m_{i_2}$ and $m'_{i_2}$. Note that, $m'_{i_2}$ leads to a different outcome $\psi(m_{i_1},m'_{i_2})_{i_2}=s_2 \neq \psi(m)_{i_2}=s_1$ which is not available under $m$. 

Note that weak availability is not violated. Under $m_{i_2}$, which is a truthful message for $s_1 \succ_{i_2} s_2$, we have $\psi(m)_{i_2} \succ_{i_2} \psi(m_{i_1},m'_{i_2})_{i_2}$. Analogously, under $m'_{i_2}$, which is a truthful message for $s_2 \succ_{i_2} s_1$, we have $\psi(m_{i_1},m'_{i_2})_{i_2} \succ_{i_2} \psi(m)_{i_2}$.

Overall this example illustrates how availability is too strong a requirement for strategy-proofness. 
\end{example}\medskip

The second example shows a mechanism satisfying expressiveness but violating weak availability and thus strategy-proofness.

\begin{example}[Expressive but not weakly available visibly fair mechanism]
\label{example:Violation weak availability}
\quad
Consider a problem with two officers $I = \{i_1, i_2\}$, where, without loss of generality, officer $i_1$ has higher priority than $i_2$, i.e., $\pi(i_1) < \pi(i_2)$. There are two states $S = \{s_1, s_2\}$, each with capacity $q_s = 1$.
Consider a mechanism $\psi$ where $i_1$ can only submit a single message $m_{i_1}$ without any preference information and $i_2$ can either submit message $m_{i_2}: s_1 \succ_{m_{i_2}} s_2$ or $m'_{i_2}: s_2 \succ_{m'_{i_2}} s_1$.
Finally, let $\psi(m_{i_1},m_{i_2})=(s_1,s_2)$ and $\psi(m_{i_1},m'_{i_2})=(s_2,s_1)$. 

Clearly, the mechanism is not strategy-proof as $i_2$ always gets his second choice when submitting a truthful message and can get his first choice by simply reporting the opposite message. 
The mechanism is visibly fair as $i_1$ does not give any preference information, and $i_2$ only reports the preferred state, which is given to the higher priority officer $i_1$. 

The mechanism also satisfies expressiveness as under any message $i_2$ gives full preference information, while $i_1$ has a single message automatically satisfying expressiveness. 
At the same time the mechanism violates weak availability as e.g. under $m_{i_2}$  which is a truthful message for $s_1 \succ_{i_2} s_2$ and another message $m'_{i_2}$, where $\psi(m_{i_1},m'_{i_2})_{i_2} \succsim_{i_2} \psi(m)_{i_2} $ we have that $\psi(m_{i_1},m'_{i_2})_{i_2}$ is not available under $m$.

\end{example}

\subsection{Two special message spaces revisited}

In this section we show that, neither the class of partition priority mechanism nor the ranked partition priority mechanisms are always strategy-proof.

In \Cref{example:Violation s-p pp} we show that mechanisms using zonal message spaces can fail weak availability and expressiveness, and therefore not be strategy-proof.
Moreover, in \Cref{example:Violation s-p rpp} we show that zonal message spaces with ranking over zones can be sufficient conditions for visibly fair mechanisms to fail expressiveness, and therefore strategy-proofness.

\begin{example}[Partitioned‐priority can violate strategy-proofness]
\label{example:Violation s-p pp}
\quad
Consider a problem with three officers $I = \{i_1, i_2, i_3\}$. There are three states $S = \{s_1, s_2, s_3\}$, each with capacity $q_s = 1$. The zonal message space for all officers consists of two zones: $z_1 = \{s_1, s_2\}$ and $z_2 = \{s_3\}$.
Suppose each officer’s true preference is $s_1 \succ_i s_2 \succ_i s_3$. For each officer $i$, let $m_i$ be the message that ranks $s_1 \succ_{m_i} s_2$, and let $m_i'$ reverse it: $s_2 \succ_{m_i'} s_1$.

Consider a zone selection function $\mathcal{C}$ that includes the following:

\[
\begin{aligned}
  &\mathcal{C}_{i_1}(S,\,m) \;=\; z_1, 
    &&\mathcal{C}_{i_1}(S,\,(m_{-i_3},m_{i_3}')) \;=\; z_1,\\
  &\mathcal{C}_{i_2}(\{s_2,s_3\},\,m) \;=\; z_1,
    &&\mathcal{C}_{i_2}(\{s_2,s_3\},\,(m_{-i_3},m_{i_3}')) \;=\; z_2,\\
  &\mathcal{C}_{i_3}(\{s_3\},\,m) \;=\; z_2,
    &&\mathcal{C}_{i_3}(\{s_2\},\,(m_{-i_3},m_{i_3}')) \;=\; z_1.
\end{aligned}
\]
\smallskip

Notice that these conditions are consistent with zone selection function that induces a partitioned priority mechanism. Consider the following successful manipulation for $i_3$: Under truthful report $m$, the allocation is $(s_1, s_2, s_3)$; but when $i_3$ flips her internal ranking $(m_{i_3}\!\to\!m_{i_3}')$, the mechanism assigns $(s_1, s_3, s_2)$, giving $i_3$ a preferred assignment as $s_2 \succ_{i_3} s_3$.  

Weak availability is violated. Under the truthful message $m_{i_3}$, officer $i_3$ is assigned $s_3$. By deviating to $m_{i_3}'$, the officer is assigned $s_2$, and $s_2\succ_{i_3}s_3$. However, $s_2$ is not available to $i_3$ under the original message profile $m$, because the higher-priority officer $i_2$ is assigned to $s_2$ under $m$.

Similarly, expressiveness is violated as e.g. officer $i_3$ does not express any preference information regarding $s_2$ and $s_3$, even though $s_2$ is allocated to $i_3$ under message $m_{i_3}'$ and $s_3$ is allocated to $i_3$ under message $m_{i_3}$.

\end{example}\medskip

Finally, we give an example showing that non-strategy-proofness can result from the message space itself: with a zonal message space that also elicits a ranking over zones, no visibly fair mechanism using that message space is strategy-proof.

\begin{example}[Ranked‐partitioned priority can violate Expressiveness]
\label{example:Violation s-p rpp}
\quad

Consider a setting with three officers $I = \{i_1, i_2, i_3\}$. The set of states is $S = \{s_1, s_2, s_3\}$, each with a capacity of $q_s = 1$.
The message space is zonal with rankings over zones for all officers, with two zones: $z_1 = \{s_1, s_2\}$ and $z_2 = \{s_3\}$. Each officer's message must provide a full ranking over the states within zone $z_1$, i.e., between $s_1$ and $s_2$, and additionally indicate whether $z_1 \triangleright_{m_i} z_2$ or $z_2 \triangleright_{m_i} z_1$.

\noindent
\emph{True preferences for $i_1$:} $s_1 \succ_{i_1} s_2 \succ_{i_1} s_3$.  
This yields a \emph{unique} truthful message $m^\rhd_{i_1}: (s_1 \succ s_2 \rhd_{} s_3)$,\footnote{A message $m^{\rhd}_{i_1}: s_1 \succ s_2 \rhd s_3$ is our abbreviation for the preference message with $s_1\succ_{m_{i_1}} s_2$ and $z_1 \rhd_{m_{i_1}} z_2$.} ensuring $i_1$ obtains $s_1$ in any visibly fair mechanism.

\smallskip
\smallskip
\smallskip

\noindent
\emph{Two preferences for $i_2$:}
\begin{enumerate}
\item First, suppose $i_2$ has $s_3 \succ_{i_2} s_2 \succ_{i_2} s_1$.  
The unique truthful message $m_{i_2}: (s_3\,\rhd\, s_2 \succ s_1)$  forces $i_2$ to end up with $s_3$ under any visibly fair, strategy‐proof mechanism. To see this, suppose that $i_2$ is assigned $s_2$ instead. Then, by submitting $m^{\rhd \prime}_{i_2}: s_3 \rhd s_1 \succ s_2$, any visibly fair mechanism must assign $s_3$ to $i_2$, leading to a successful manipulation. 

\item Next, consider $\succ_{i_2}: \,s_2 \succ s_3 \succ s_1$.   Here, \emph{two} truthful messages are possible.  One message 
$m_{i_2}: (s_3 \rhd s_2 \succ s_1)$ again, as we have just argued, assigns $s_3$ to $i_2$.  
Another message 
$m^{\rhd \prime \prime}_{i_2}: (s_2 \succ s_1 \rhd s_3)$ must assign $s_2$ to $i_2$.
\end{enumerate}

\noindent

Here, expressiveness is violated as e.g. officer $i_2$ does not express any preference information regarding $s_2$ and $s_3$ given message $m_{i_2}$, even though $s_2$ is allocated to $i_2$ under message $m^{\rhd \prime \prime}_{i_2}$ and $s_3$ is allocated to $i_2$ under message $m_{i_2}$.
On the other hand, weak availability does not pose a problem in the above example. 

\end{example}\medskip

Given \Cref{thm:Truthful1}, combined with \Cref{example:Violation s-p pp} and \Cref{example:Violation s-p rpp} the following corollary is immediate: 

\begin{coro}
\label{coro: s-p}
Consider the zonal message spaces with and without rankings. Both partitioned priority mechanisms and ranked partitioned priority mechanisms may fail strategy-proofness.
\end{coro}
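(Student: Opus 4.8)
The plan is to establish \Cref{coro: s-p} as an immediate consequence of the machinery already developed, so the proof is essentially a citation-and-assembly exercise rather than a fresh argument. The target statement asserts that partitioned priority mechanisms (on zonal message spaces) and ranked partitioned priority mechanisms (on zonal message spaces with rankings over zones) are \emph{not} guaranteed to be strategy-proof. Since these are existential ``might not'' claims, it suffices to exhibit one instance of each class that fails strategy-proofness, and the two worked examples preceding the corollary are constructed precisely for this purpose.

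First I would invoke \Cref{example:Violation s-p pp}, which displays a specific zone selection function inducing a partitioned priority mechanism under which officer $i_3$ has a profitable deviation: reporting truthfully yields $s_3$, while flipping her within-zone ranking yields the strictly preferred $s_2$. I would note that this directly contradicts the strategy-proofness definition, so that particular partitioned priority mechanism is not strategy-proof; hence the class does not uniformly enjoy the property. Symmetrically, I would invoke \Cref{example:Violation s-p rpp}, where officer $i_2$ with preference $s_2 \succ s_3 \succ s_1$ has two truthful messages leading to distinct outcomes ($s_2$ versus $s_3$), which itself breaks strategy-proofness because, as remarked right after the strategy-proofness definition, multiple truthful messages must not lead to different outcomes. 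Together these witness both halves of the corollary.

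Alternatively, and more in the spirit of the phrase ``the following corollary is immediate,'' I would route the argument through \Cref{thm:Truthful1}: a visibly fair mechanism is strategy-proof if and only if it satisfies expressiveness and weak availability. Both mechanisms in question are visibly fair by \Cref{thm:CharacterizationIncontestableMechanisms2} and \Cref{thm:CharacterizationIncontestableMechanisms3} respectively. \Cref{example:Violation s-p pp} exhibits a partitioned priority mechanism violating weak availability (and expressiveness), and \Cref{example:Violation s-p rpp} exhibits a ranked partitioned priority mechanism violating expressiveness. By the contrapositive of \Cref{thm:Truthful1}, each fails strategy-proofness. This is the cleaner packaging because it makes explicit \emph{which} hypothesis of the characterization fails in each case.

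There is no real obstacle to overcome here, since all the substantive work lives in the two examples and in \Cref{thm:Truthful1}; the only thing to be careful about is matching each example to the correct message-space hypothesis (the plain zonal space for the partitioned case, the zonal-with-rankings space for the ranked case) and confirming that each cited example genuinely belongs to the asserted class under its stated zone selection function. I would therefore write the proof in one or two sentences, stating that \Cref{example:Violation s-p pp} and \Cref{example:Violation s-p rpp} provide the required counterexamples and that failure of strategy-proofness follows either directly from the definition or, via \Cref{thm:Truthful1}, from the violation of weak availability and expressiveness respectively.
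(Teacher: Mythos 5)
Your proposal is correct and follows essentially the same route as the paper, which states the corollary as an immediate consequence of \Cref{thm:Truthful1} together with \Cref{example:Violation s-p pp} and \Cref{example:Violation s-p rpp} (the paper gives no separate appendix proof). Your second packaging---identifying the failed hypothesis (weak availability and expressiveness, respectively expressiveness) and applying the contrapositive of the characterization---is exactly the argument the paper intends.
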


\section{Achieving Distributional Objectives}
\label{sec:DistributionalObjectives}

When using direct mechanisms, our earlier discussion shows that only serial dictatorship (SD) achieves visible fairness. By eliciting less information about preferences—through carefully designed restricted message spaces—we expand the set of allocations that can be deemed visibly fair for a given problem. This relaxation provides the policy maker with additional flexibility, allowing for the implementation of a broader array of allocation rules that still adhere to this notion of fairness.

There are, in principle, many distinct distributional objectives that can be accommodated within this broader framework. In this section, we introduce one family of such objectives, which we denote \emph{Modular Upper-Bounds}, and give complete instructions on how to design mechanisms that are visibly fair, strategy-proof, and respect these bounds. Modular upper-bounds model distributional objectives by imposing limits on the number of officers of certain types assigned to specific subsets of states. The following section, \Cref{sec:efficiency}, examines why preference elicitation remains valuable despite these restrictions and analyzes the efficiency consequences of the message spaces required to guarantee the designer's objectives.

\subsection{Modular Upper-Bounds}

We extend our original model by saying that each officer $i$ has a type $t$ from a finite set of types $T$, where $t_i$ denotes the type of officer $i$. We will also use $t_k$ to refer to the type of officer $i_k$. The distributional goals of the designer are modeled through a collection of type-specific upper-bounds: for each relevant set of types and subset of states, the designer may impose a cap on how many officers of those types can be assigned there.

\begin{defi}
\label{def:ModularUpperBounds}  
A \textbf{modular upper‑bound system} is a finite collection
\[
H \;=\;\bigl\{\,(\Xi_h,\;S_h,\;k_h)\bigr\},
\]
where for each element $h\in H$:\footnote{In our notation, $H$ contains a set of upper-bounds, each of which represented by the letter $h$. In this context, $\Xi_h$ for example, is the first component of $h$.}  

\begin{itemize}
    \item $\varnothing\neq \Xi_h\subseteq T$ is the set of types covered by the quota,  
    \item $S_h\subseteq S$ is a subset of states, and  
    \item $k_h\in\mathbb Z_{\ge 0}$ is the cap.
\end{itemize}

\noindent For every type $t\in T$ we write  
\[
H^{t}\;:=\;\bigl\{\,(\Xi_h,S_h,k_h)\in H : t\in\Xi_h\bigr\}.
\]

\noindent For every state $s \in S$ and type $t\in T$, we write its \textbf{upper-bound signature} as
\[
    H^{s,t}=\{\,(\Xi_h,S_h,k_h)\in H : s\in S_h,\;t\in\Xi_h\}.
\]
\end{defi}

\begin{defi}
\label{def:respectsModularUpperBounds}
An allocation $a \in \mathcal{A}$ \textbf{respects the upper-bounds} $H$ if, for every $h \in H$:
\[
    \left|\{\,i\in I:\;a_i\in S_h,\;t_i\in\Xi_h\}\right| \;\le\; k_h.
\]   
\end{defi}

We say that the modular upper-bound $h$ is \textbf{binding} at allocation $a \in \mathcal{A}$ if the constraint is satisfied with equality. 

Since in our model officers cannot be left unmatched, we need to guarantee that these upper-bounds are compatible with that restriction while using visibly fair mechanisms. Formally, hereafter we will restrict our attention to modular upper-bound systems that satisfy the following property of \textit{sequential solvency}.

Given a subset $J \subseteq I$, a \textbf{partial allocation} on $J$ is a list $a^J = (a_j)_{j \in J}$ specifying a state $a_j \in S$ for each $j \in J$. We say $a^J$ is \textbf{feasible} if $\bigl|\{\,j \in J : a_j = s\,\}\bigr| \le q_s$ for every $s \in S$, and that it \textbf{respects} the modular upper-bound system $H$ if $\bigl|\{\,j \in J : a_j \in S_h,\ t_j \in \Xi_h\,\}\bigr| \le k_h$ for every $h \in H$. We write $\mathcal{A}^J$ for the set of feasible partial allocations on $J$.

\begin{defi}\label{def:seq-solvency}
A modular upper-bound system $H$ satisfies \textbf{sequential solvency} if, for every officer $i \in I$, every subset $J \subseteq I \setminus \{i\}$, and every partial allocation $a^J = (a_j)_{j \in J} \in \mathcal{A}^J$ that respects $H$, there exists a state $s^* \in S$ such that
\begin{enumerate}
    \item[(i)] $\bigl|\{\,j \in J : a_j = s^*\,\}\bigr| < q_{s^*}$, and
    \item[(ii)] $\bigl|\{\,j \in J : a_j \in S_h,\ t_j \in \Xi_h\,\}\bigr| < k_h$ for every $h \in H^{s^*,\, t_i}$.
\end{enumerate}
\end{defi}

Starting from any feasible, upper-bound-respecting partial assignment of the other officers, officer $i$ can always be placed at some state $s^*$ without violating capacity or any modular upper-bound relevant to her type.\footnote{The empty partial assignment ($J = \emptyset$) is included in the quantifier, so the condition has bite even when no full respecting allocation exists.} The intuition for the definition above is simple. Regardless of which capacities or collection of upper-bounds bind, there will always be a compatible state for every remaining officer. That is, while matching officers one at a time, modular upper-bounds can restrict \textit{where} officers are matched, but \textit{not whether} they are matched. Since it relies on the particular number of agents of each type in $I$, it allows for interesting and practical constraints, as we will show in examples that will follow. In \Cref{app:seq-solvency} we verify sequential solvency for all the examples in the paper that rely on modular upper-bounds.\footnote{Notice that it is crucial that the definition depends on the profile of types of officers. Otherwise, the upper-bounds would have to be satisfied when all agents have the same type, making only bounds that never bind compatible with not leaving officers unmatched.}

\begin{example}
Consider a problem with five states \(S=\{s_1,s_2,s_3,s_4,s_5\}\), each having capacity \(q_s=1\). There are two officer types: \(t_1\) and \(t_2\).
    \begin{itemize}

        \item For type \(t_1\), the upper-bound system is 
        \[
        H^{t_1}=\Bigl\{ \bigl(\{t_1\},\{s_1,s_2,s_3\},\,2\bigr) \Bigr\},
        \]
        meaning that at most 2 type-\(t_1\) officers may be assigned to states in \(\{s_1,s_2,s_3\}\).
        \item For type \(t_2\), the upper-bound system is 
        \[
        H^{t_2}=\Bigl\{ \bigl(\{t_2\},\{s_3,s_4,s_5\},\,1\bigr) \Bigr\},
        \]
        meaning that at most 1 type-\(t_2\) officer may be assigned to states in \(\{s_3,s_4,s_5\}\).
    \end{itemize}
\end{example}\medskip

The literature has proposed several ways to formalize “quota–type” constraints.\footnote{An incomplete list includes \cite{echeniqueHowControlControlled2015}, \cite{kamadaEfficientMatchingDistributional2015}, \cite{gotoDesigningMatchingMechanisms2017}, \cite{kamadaStabilityStrategyproofnessMatching2018}, \cite{azizMatchingDiversityConstraints2020}, \cite{kojimaJobMatchingConstraints2020}, and \cite{kamadaFairMatchingConstraints2024}.} An example of a very \emph{permissive} notion is the hereditary family of \citet{gotoDesigningMatchingMechanisms2017}: write a matching as a vector that counts, for every state, how many officers of each type are assigned there; a subset of vectors is \emph{hereditary} when it is closed under coordinate-wise decrements. Any system of pure caps clearly has this property, so every modular upper-bound instance fits inside the hereditary domain. A \emph{tighter} specification is the hierarchical (laminar) system analysed by \citep{kamadaEfficientMatchingDistributional2015,kamadaStabilityStrategyproofnessMatching2018}: here the subsets that carry quotas must form a tree—any two are either disjoint or one contains the other. Laminar caps are useful when the policy maker wants, say, regional caps that line up neatly with district caps, but they rule out overlapping constraints such as “no more than ten officers in the Northeastern states \emph{and} no more than eight in the coastal states.” All laminar systems are modular, yet the converse is false.

\subsection{Constraint-Induced Message Spaces}

To design mechanisms that respect modular upper-bounds, we define \emph{constraint-induced message spaces}. The idea is that the message space is itself derived from the distributional constraints. The modular upper-bounds determine which states must be treated as jointly relevant for the policy objective, and these states are grouped into common zones. The resulting message space therefore elicits precisely the coarse comparisons across states that are relevant for respecting the upper-bounds.

\begin{defi}
\label{def:constraintInducedMessageSpaces}
For any type $t\in T$, define an equivalence relation $\sim_{t}$ on $S$ such that, for all $s, s' \in S$,
\[
    s \sim_{t} s' \quad \text{if and only if} \quad H^{s,t} = H^{s',t}.
\]

The \textbf{Constraint-Induced Message Space} associated with the modular upper-bounds $H^t$ is the zonal message space $M_{t}$ with zones $Z = \{ z^t_1, z^t_2, \dots, z^t_\ell \}$, where each zone $z^t_j$ is an equivalence class under $\sim_{t}$, that is,
\[
    z^t_j = \{ s \in S : s \sim_{t} s_j \},
\]
for some representative state $s_j \in S$.
\end{defi}

\noindent This construction ensures that:
\begin{enumerate}[i)]
    \item Zones are disjoint and partition the set of states: $\bigcup_{j} z^t_j = S$ and $z^t_j \cap z^t_{j'} = \emptyset$ for $j \neq j'$.
    \item All states within the same zone are involved in exactly the same set of upper-bounds for the type $t$. Therefore, if some upper-bound is binding for some state in a zone, then it binds for all states in that zone.
\end{enumerate}

\subsection{Modular Priority Mechanism}

Each type \(t \in T\) induces a \emph{constraint-induced message space} \(M_{t}\), as in Definition~\ref{def:constraintInducedMessageSpaces}, where for each type $t\in T$, \(S\) is partitioned into zones $Z^t=\{z^t_1,z^t_2,\ldots\}$ according to equivalence classes of states under the same upper-bound signature. Thus, for an officer \(i\) of type \(t_i\), the mechanism \emph{offers} the zonal space \(M_{t_i}\), with associated zones $z^{t_i}_1,z^{t_i}_2\ldots$, requiring her to submit a message \(m_i\in M_{t_i}\). That message ranks all states within each zone but cannot compare states across different zones.

\medskip

 As part of the mechanism design, each officer \(i\) is also assigned an \emph{exogenous} ranking
\[
  z^{t_i}_1 \,\blacktriangleright_i\, z^{t_i}_2 \,\blacktriangleright_i\,\cdots\,\blacktriangleright_i\,\ldots 
\]

over the same zones, independent of the message \(m_i\). These exogenous rankings---which can encode policy priorities such as emphasizing certain zones first or last---do not depend on agents' reports. 

\begin{defi}\label{def:modularPriorityMechanism}
Given a sequentially solvent modular upper-bound system $H$, an exogenous zone ranking $\blacktriangleright_i$ for each officer $i$, and a profile of messages $m=(m_i)_{i\in I}$, each $m_i$ in the constraint-induced message space $M_{t_i}$, the \textbf{Modular Priority Mechanism} $\psi$ proceeds as follows.

\smallskip
\noindent\textbf{Initialization.}
\begin{itemize}
  \item For each $s\in S$, set the remaining capacity
        $q_{s}^{\mathrm{rem}}=q_{s}$.
  \item For each type $t\in T$ and each zone $z_{\ell}^{t}$, set a flag
        $B_{\ell}^{t}=\textsc{False}$.
  \item Set $a_{i}=\varnothing$ for every $i\in I$.
\end{itemize}

\smallskip
\noindent\textbf{Upper-bound status update.}
At any stage of the procedure, given the current partial allocation $a$
and the current flags $(B_{\ell}^{t})$, update the flags as follows. For
each $h\in H$, let
\[
N_{h}(a)=
\bigl|\{\,i\in I:a_{i}\in S_{h}\text{ and } t_{i}\in\Xi_{h}\,\}\bigr|.
\]
If $N_{h}(a)=k_{h}$, then for every $t\in\Xi_{h}$ and every $\ell$ such
that $S_{h}\cap z_{\ell}^{t}\neq\varnothing$, set
$B_{\ell}^{t}=\textsc{True}$.

\smallskip
\noindent

Apply the upper-bound status update once before the sequential assignment
begins, so that any upper-bound $h\in H$ with $k_{h}=0$ is registered as
binding.

\smallskip
\noindent\textbf{Sequential assignment.}
Process the officers in the order $(i_{1},\dots,i_{n})$. For each
$k=1,\dots,n$:
\begin{enumerate}
  \item Let $t_{{k}}$ be the type of officer $i_{k}$, and let
        $z_{1}^{t_{{k}}},\dots,z_{K}^{t_{{k}}}$ be the zones
        induced by $m_{i_{k}}$.
  \item Following the exogenous ranking
        $\blacktriangleright_{{k}}$, let
        $z_{\ell^{\ast}}^{t_{{k}}}$ be the highest-ranked zone such
        that
        \begin{itemize}
          \item $B_{\ell^{\ast}}^{t_{{k}}}=\textsc{False}$, and
          \item there exists $s\in z_{\ell^{\ast}}^{t_{{k}}}$ with
                $q_{s}^{\mathrm{rem}}>0$.\footnote{Existence follows by applying sequential solvency to officer $i_k$ and to the partial allocation $(a_{i_1},\dots,a_{i_{k-1}})$ of already assigned officers.}
        \end{itemize}
        
  \item Let $s^{\ast}$ be the $m_{i_{k}}$-most-preferred state in
        $z_{\ell^{\ast}}^{t_{{k}}}$ with
        $q_{s^{\ast}}^{\mathrm{rem}}>0$. Set $a_{i_{k}}=s^{\ast}$ and
        decrement
        $q_{s^{\ast}}^{\mathrm{rem}}\leftarrow
         q_{s^{\ast}}^{\mathrm{rem}}-1$.
  \item Apply the upper-bound status update.
\end{enumerate}

\smallskip
\noindent\textbf{Outcome.}
After processing $i_{1},\dots,i_{n}$, the mechanism returns the
allocation $a=(a_{i})_{i\in I}$.
\end{defi}
\medskip

The Modular Priority Mechanism is, therefore, a partitioned priority mechanism in which each officer type is associated with a zonal message space of states that share the same upper-bound constraints. An officer’s final assignment is determined by two key factors: (1) a counter that tracks remaining capacity for the relevant states, and (2) a flag indicating whether any upper-bound restrictions in that zone have become binding. Because all states in a given zone are governed by the same set of constraints, a single triggered bound applies uniformly across the entire zone. Below, we present the main result for this mechanism and two examples that illustrate how it operates in practice.

\begin{thm}\label{modulartheorem}
    Suppose that $H$ is a sequentially solvent modular upper-bound system. Then the Modular Priority Mechanism is visibly fair, strategy-proof, and respects upper-bounds.
\end{thm}

\medskip

\begin{center}
\captionsetup{font=footnotesize}
\begin{tikzpicture}[
  font=\small,
  regbox/.style 2 args={
    rounded corners=8pt, draw=#1, line width=0.6pt,
    fill=#2, minimum width=3.5cm, minimum height=3.5cm
  },
  rural/.style={
    circle, draw=teal!60, fill=teal!12,
    line width=0.5pt, minimum size=9mm, font=\footnotesize
  },
  urban/.style={
    rectangle, rounded corners=2pt, draw=violet!40!gray, fill=violet!8,
    line width=0.5pt, minimum size=9mm, font=\footnotesize
  },
  annot/.style={font=\scriptsize, text=gray!60!black},
  rlabel/.style={font=\small\bfseries}
]

\node[regbox={red!35!gray}{red!5}]     (R1) at (0,0)   {};
\node[regbox={blue!35!gray}{blue!5}]   (R2) at (4.8,0) {};
\node[regbox={green!40!gray}{green!5}] (R3) at (9.6,0) {};

\node[rlabel] at (0,2.05)   {$R_1$};
\node[rlabel] at (4.8,2.05) {$R_2$};
\node[rlabel] at (9.6,2.05) {$R_3$};

\node[rural] (s1) at (0,0.55)      {$s_1$};
\node[urban] (s2) at (-0.65,-0.5)  {$s_2$};
\node[urban] (s3) at (0.65,-0.5)   {$s_3$};

\node[rural] (s4) at (4.8,0.55)    {$s_4$};
\node[urban] (s5) at (4.15,-0.5)   {$s_5$};
\node[urban] (s6) at (5.45,-0.5)   {$s_6$};

\node[rural] (s7) at (9.6,0.55)    {$s_7$};
\node[urban] (s8) at (8.95,-0.5)   {$s_8$};
\node[urban] (s9) at (10.25,-0.5)  {$s_9$};

\begin{scope}[on background layer]
  \node[fit=(s2)(s3)(s5)(s6)(s8)(s9),
    inner sep=4pt, rounded corners=5pt,
    draw=violet!35, line width=0.55pt, densely dashed,
    fill=violet!4] (Uband) {};
\end{scope}

\node[annot] at (0,-2.1)   {$9$ doctors, type $t{=}1$};
\node[annot] at (4.8,-2.1) {$9$ doctors, type $t{=}2$};
\node[annot] at (9.6,-2.1) {$9$ doctors, type $t{=}3$};

\node[rural,  minimum size=4mm, inner sep=0pt] at (2.8,-2.85) {};
\node[annot, anchor=west] at (3.1,-2.85) {Rural};
\node[urban, minimum size=4mm, inner sep=0pt] at (5.3,-2.85) {};
\node[annot, anchor=west] at (5.6,-2.85) {Urban};

\end{tikzpicture}
\captionof{figure}{Structure of \Cref{ex:example2}. Nine hospitals ($S$) across three regions, each containing one rural and two urban hospitals, with $U=\{s_2,s_3,s_5,s_6,s_8,s_9\}$. Modular upper-bounds impose a regional cap $(\{t\}, R_t, 6)$ limiting local doctors in their home region and an urban cap $(\{1,2,3\}, U, 19)$ across all urban hospitals.}
\label{fig:doctorRegions}
\end{center}

\begin{example}[Distributing doctors across regions and urban/rural divides]\label{ex:example2}
Consider nine hospitals $S = \{s_1,s_2,s_3,s_4,s_5,s_6,s_7,s_8,s_9\}$ partitioned into regions (see \Cref{fig:doctorRegions}) 
\[
R_1 = \{s_1,s_2,s_3\},\quad R_2 = \{s_4,s_5,s_6\},\quad R_3 = \{s_7,s_8,s_9\},
\]
with each hospital having capacity \(q_s=4\).  In each region, the first hospital in each region is \emph{rural} (\(s_1,s_4,s_7\)) and the others are \emph{urban} (\(s_2,s_3,s_5,s_6,s_8,s_9\)).  There are 27 doctors, 9 of each type \(t\in\{1,2,3\}\), indicating their home region.  

Two types of modular upper‐bounds apply:
\[
H^{t} = \{(\{t\},R_{t},6)\}
\quad\text{for each }t=1,2,3,
\quad\text{and}\quad
H^{\mathrm{U}} = \{(\{1,2,3\},U,19)\}
\]
where 
\(
U=\{s_2,s_3,s_5,s_6,s_8,s_9\}
\)
is the set of all urban hospitals, and \((U,19)\) is a universal cap across all doctors.  

\medskip

A doctor of type~1, for instance, faces the following \emph{signatures}:
\[
H^{s,1}=
\begin{cases}
\{(\{1\},R_1,6)\}, & \text{if $s\in R_1$ is rural (here, $s_1$)},\\
\{(\{1\},R_1,6),(\{1,2,3\},U,19)\}, & \text{if $s\in R_1$ is urban (here, $s_2,s_3$)},\\
\{(\{1,2,3\},U,19)\}, & \text{if $s$ is urban but not in $R_1$ (e.g.\ $s_5,s_6,s_8,s_9$)},\\
\varnothing, & \text{if $s$ is rural outside $R_1$ (e.g.\ $s_4,s_7$).}
\end{cases}
\]
Hence the constraint-induced partition for a type-1 doctor is:
\[
z^1_1=\{s_1\},\quad
z^1_2=\{s_2,s_3\},\quad
z^1_3=\{s_4,s_7\},\quad
z^1_4=\{s_5,s_6,s_8,s_9\}.
\]
Within each zone, the doctor ranks hospitals \emph{fully}, yet makes no cross‐zone comparisons. Message spaces following the same principle are constructed for the remaining types.

\emph{Exogenous zone ranking:} each doctor of type~\(1\) has the fixed ordering 
\[
z^1_2 \;\blacktriangleright_{i}\; z^1_1\blacktriangleright_{i}\; z^1_4\blacktriangleright_{i}\; z^1_3,
\]
so that own-region hospitals are ranked before hospitals outside the home region, and within each of these two groups urban hospitals are ranked before rural hospitals. Exogenous zone rankings for other types of doctors are defined analogously.

A modular--priority mechanism enforcing these caps proceeds as in Definition~\ref{def:modularPriorityMechanism}, using the zonal message spaces defined above and the exogenous rankings over zones $\left(\blacktriangleright_{i}\right)_{i\in I}$: each type-\(t\) doctor is assigned to her top feasible and non-binding hospital within the partition induced by \((R_t,6)\) and \((U,19)\).
\end{example}\medskip

As noted previously, the specification of the exogenous rankings over zones \(\left(\blacktriangleright_{i}\right)_{i\in I}\) does not affect the theoretical properties established in our results. However, if we know that, for instance, most doctors typically prefer urban hospitals over rural ones, defining the rankings as in the examples above tends to improve efficiency of the final allocations, without harming the other objectives. Naturally, if the actual preferences of participants substantially diverge from such assumptions, employing this approach could lead to less desirable outcomes.

It is also worth noting that the mechanisms introduced above---and the overarching notion of visible fairness---are ill-suited to implementing \emph{within}-state affirmative action quotas. Such policies, which often appear in the matching literature on diversity (e.g.\ majority quotas or type-specific caps in each state), could be encoded as modular upper-bounds with one state in each upper-bound. However, a scenario in which, for example, each state can admit a maximum number of agents of a certain type, the resulting zonal message space would have one state per zone. This, of course, would eliminate a role for preferences in the allocation. If we attempted to compensate for this by using zonal message spaces with ranking over zones, we would easily conclude that no visibly fair mechanism with these characteristics would respect upper bounds.

\section{Why Preference Elicitation Matters}
\label{sec:efficiency}

The previous section shows how a designer can use restricted message spaces to implement distributional objectives while preserving visible fairness and strategy-proofness. At this point, however, one might ask why the designer should elicit preferences at all. If the designer has an objective that conflicts with ordinary serial dictatorship, and if the designer can restrict the comparisons that participants are allowed to reveal, why not elicit no preference information and simply implement whichever allocation the designer prefers?

The answer is that preference elicitation is not merely a constraint on the designer. It is also the instrument through which the allocation can be made responsive to participants. Standard market-design models already give the designer substantial discretion: the designer chooses the mechanism, and therefore chooses which allocation is produced at each preference profile. A mechanism with no preference reports can still use priorities, capacities, types, or administrative objectives. What it cannot do is distinguish between two economies that differ only in what participants want. Once preference information disappears, so does the possibility of making the allocation depend on that information.

This conditionality matters for two reasons. First, it is often part of the institutional purpose of the mechanism. Even when the designer has distributional objectives, the resulting assignment is usually expected to bear some relation to the claims made by participants. In our framework, this relation is represented by a family of possible outcomes indexed by the messages participants submit. If no preference information is elicited, that family collapses: the mechanism can only select among allocations using non-preference information. Second, when using visibly fair mechanisms, preference information has a direct welfare value. Conditional on satisfying the designer's distributional objectives, one allocation may generate substantially more welfare than another. A mechanism cannot systematically select such allocations unless it asks participants for at least some of the preference comparisons that distinguish them.

Visible fairness clarifies the informational nature of this trade-off. By \Cref{thm:CharacterizationVisiblyFairMechanisms}, visibly fair mechanisms are sparse versions of serial dictatorship: officers are processed by priority and receive states that are maximal (in the submitted partial order) among the states that remain available according to the preferences they are allowed to express. At one extreme, when all comparisons are elicited, \Cref{cor:SDUniqueVisiblyFairDM} implies that visible fairness pins down ordinary serial dictatorship, which is Pareto efficient. At the other extreme, if no preference comparisons are elicited, outcomes are necessarily unrelated to preference-based welfare. Between these extremes lies the design problem studied in this paper.

The relationship, however, is not monotone. Expanding the message space changes both the information available to the mechanism and the visible-fairness constraints the mechanism must satisfy. The following example shows that merging zones can make most officers worse off, even though the merged message space contains strictly more preference information.

\begin{example}[More elicitation need not benefit every officer]
\label{ex:zone-merge-worse-off}
Fix $n\geq 3$, officers $I=\{i_1,\ldots,i_n\}$, states $S=\{s_1,\ldots,s_n\}$ (unit capacities), and priority $i_1,i_2,\ldots,i_n$. Start with two zones $z_1=\{s_1\}$ and $z_2=\{s_2,\ldots,s_n\}$, and a partitioned priority rule that assigns $i_1$ through $z_1$ and all others through $z_2$, always taking the best available state in the selected zone. Let preferences satisfy the following. Officer $i_1$ has $s_2$ as her most-preferred state and $s_2\succ_{i_1}s_1$. For each $k=2,\ldots,n-1$, officer $i_k$ ranks $s_k$ first and $s_{k+1}$ second, with $s_k\succ_{i_k}s_{k+1}\succ_{i_k}s$ for every $s\in S\setminus\{s_k,s_{k+1}\}$. Officer $i_n$ ranks $s_n$ first and $s_1$ second, with $s_n\succ_{i_n}s_1\succ_{i_n}s$ for every $s\in S\setminus\{s_n,s_1\}$. All remaining comparisons can be completed arbitrarily. Then the two-zone outcome is $a=(s_1,s_2,\ldots,s_n)$. After merging the zones, all states become comparable and visible fairness implies serial dictatorship, yielding $a'=(s_2,s_3,\ldots,s_n,s_1)$. Hence $i_1$ is better off while every $i_2,\ldots,i_n$ is worse off, so the number of officers who are worse off can be made arbitrarily large by increasing $n$.
\end{example}

\Cref{ex:zone-merge-worse-off} should not be read as a comparative-static theorem against elicitation. The two mechanisms are defined on different message spaces, and there is no canonical way to compare all visibly fair mechanisms before and after a change in the partition. A different zone-selection rule in the original message space would define a different baseline. The example instead makes a narrower point: more elicitation changes the allocation problem, and the gains from information need not accrue officer by officer. The relevant question is therefore not whether every individual benefits from every refinement, but whether additional preference information tends to improve the allocation landscape available to the designer.

\subsection{Simulations: the average value of merging zones}
\label{sec:preference-elicitation-simulations}

To assess the average effect of eliciting more comparisons, we simulate a family of zonal priority mechanisms. Each simulated market has $n=m=50$ officers and states, with unit capacities. We use cardinal utilities, which are generated according to

\[
    u_{is}=\sqrt{p_1}\,c_s+\sqrt{1-p_1}\,\varepsilon_{is},
\]
where $c_s$ and $\varepsilon_{is}$ are independent standard normal draws.\footnote{The square-root coefficients keep $\mathrm{Var}(u_{is})=1$ for every $p_1$, while delivering $\mathrm{Corr}(u_{is},u_{js})=p_1$ for any two officers $i\neq j$ at a given state $s$.} Thus $p_1\in[0,1]$ measures the commonality in officers' evaluations of states: when $p_1=0$, utilities are independent across officers; when $p_1=1$, all officers have the same cardinal utility vector.

For each utility profile, we draw multiple priority orders and compare two visibly fair mechanisms: a baseline zonal priority mechanism and an otherwise analogous mechanism whose message space is enriched by random zone mergers. The richer mechanism is obtained from the same initial partition and zone-order construction, so differences in welfare isolate the informational effect of eliciting additional within-merged-zone comparisons.\footnote{Full details on the partition generation, zone-order construction, merger procedure, and implementation of both mechanisms are provided in \Cref{app:simulations}.}

For each run, we compute four welfare levels. Let $W^0$ be total welfare under the original zonal message space. Let $W^{p_2}$ be total welfare after pairwise zone mergers with probability $p_2$. Let $W^1$ be total welfare under the full-merger benchmark, where all states lie in one zone and the priority rule is ordinary serial dictatorship with full preference rankings. Finally, let $W^{FB}$ be the first-best welfare, i.e., the maximum total welfare over all one-to-one assignments, ignoring the priority order. 

The main figure reports two normalized welfare measures. The first measures what percentage of the welfare gain between the baseline and the full-merger benchmark is already achieved at merger probability $p_2$. The second measures what percentage of the welfare gap between the baseline and the unconstrained first-best benchmark is closed at merger probability $p_2$. Both measures are computed cell by cell and then averaged over values of $p_1<1$. The case $p_1=1$ is excluded from these averages because all officers have identical utilities; with $n=m$ and unit capacities, every complete assignment has the same total welfare, so the denominators are zero up to simulation noise.

\begin{center}
\begin{minipage}{0.86\textwidth}
\centering
\includegraphics[width=\textwidth]{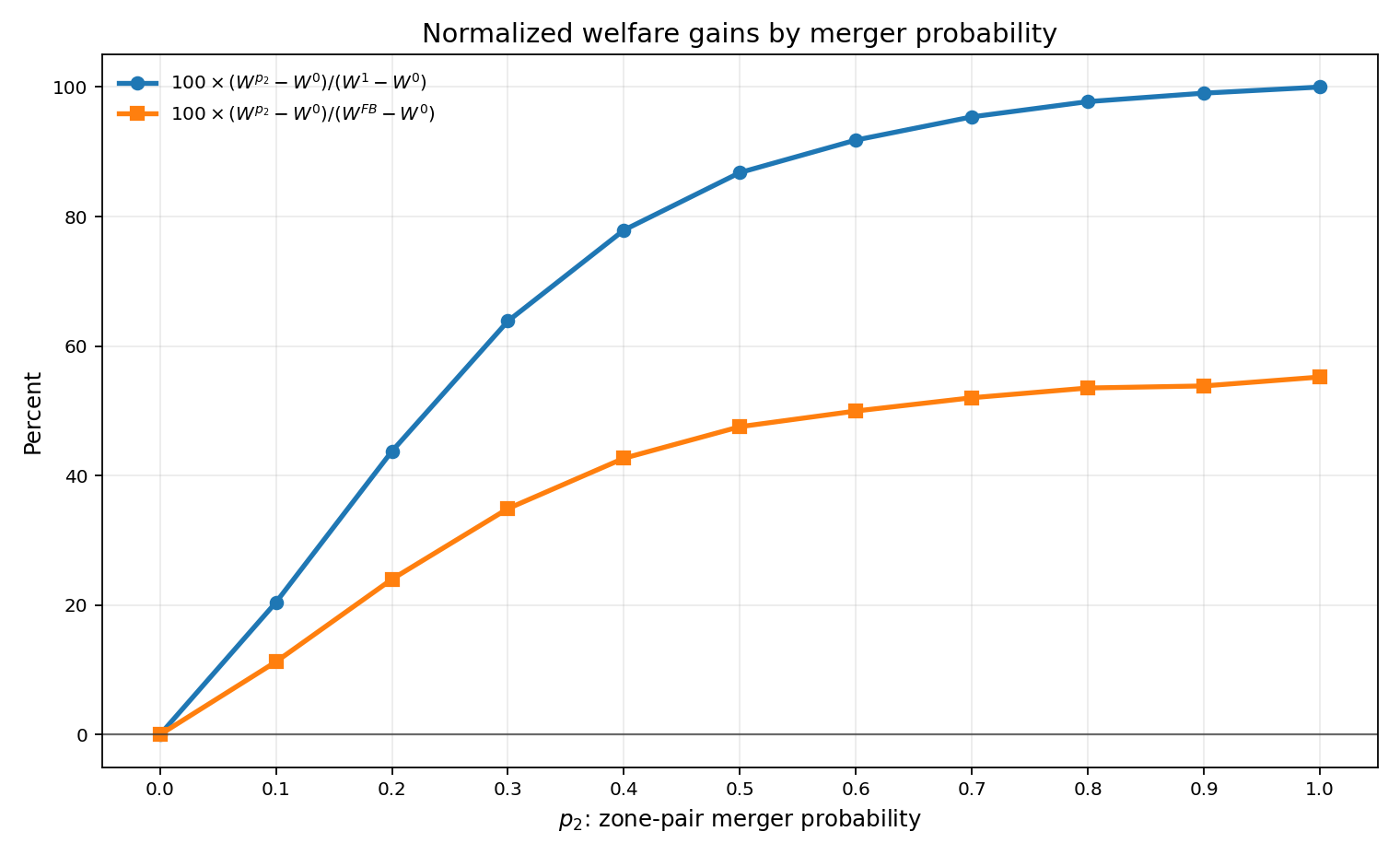}
\captionsetup{type=figure,font=footnotesize}
\captionof{figure}{Normalized welfare gains by merger probability. The blue series reports the average share of the full-merger gain recovered, $100\times(W^{p_2}-W^0)/(W^1-W^0)$. The orange series reports the average share of the first-best gap closed, $100\times(W^{p_2}-W^0)/(W^{FB}-W^0)$. Averages are taken over the nondegenerate values $p_1<1$.}
\label{fig:simulation-main-normalized}
\end{minipage}
\end{center}

\Cref{fig:simulation-main-normalized} shows two distinct margins. The orange curve is lower than the blue curve because even complete merging does not implement the first-best assignment. When $p_2=1$, all states are comparable and the mechanism becomes serial dictatorship under the random priority order, but serial dictatorship still respects priority rather than maximizing total welfare. Thus the remaining gap to $W^{FB}$ is the welfare cost of using a priority-respecting mechanism. At the same time, the figure shows that average welfare under the post-merger mechanism rises relative to the original zonal mechanism over the simulated range. The main takeaway is that, for a fixed policy objective motivating a visibly fair restriction of the message space, eliciting additional comparisons can deliver sizable welfare gains on average, even though the gain need not be monotone in every realization or beneficial for every officer.

The simulation therefore separates two losses. The gap $W^1-W^0$ is the informational loss from coarse zones: it is reduced by merging zones and eliciting more within-zone comparisons. The gap $W^{FB}-W^1$ is the residual loss from using a priority-respecting serial dictatorship rather than the first-best assignment. Visible fairness can make the first loss smaller by allowing a richer message space, but it does not remove the second loss.

\subsection{The limit: constrained Pareto efficiency}
\label{sec:constrained-efficiency-limit}

As the following example demonstrates, the Modular Priority Mechanism does not guarantee efficiency.

\begin{example}\label{ex:motivation_formal}
Consider two officers $I = \{i_1, i_2\}$ with priority $\pi(i_1)< \pi(i_2)$ and of the same type $t$, and two states $S = \{s_1, s_2\}$ with capacities $q_{s_1} = 2$ and $q_{s_2} = 1$. Let there be a modular upper-bound of one officer of type $t$ allowed in $s_1$. Thus, the constraint-induced message space partitions states into two zones: $z_1 = \{s_1\}$ and $z_2 = \{s_2\}$. Let the exogenous ranking of zones be $z_1 \blacktriangleright_i z_2$ for both officers.

Suppose that officers' true preferences are as follows:
\[
\begin{aligned}
\succ_{i_1}: \quad & s_2 \succ s_1, \\
\succ_{i_2}: \quad & s_1 \succ s_2.
\end{aligned}
\]

Under the Modular Priority Mechanism, the resulting allocation is
\[
a=(a_{i_1},a_{i_2})=(s_1,s_2).
\]
However, the allocation
\[
a'=(a'_{i_1},a'_{i_2})=(s_2,s_1)
\]
strictly Pareto dominates $a$: officer $i_1$ strictly prefers $s_2$ to
$s_1$, and officer $i_2$ strictly prefers $s_1$ to $s_2$. Moreover,
$a'$ respects the upper-bound, since exactly one type-$t$ officer
is assigned to $s_1$.
\end{example}\medskip

\medskip

Example~\ref{ex:motivation_formal} shows that the Modular Priority Mechanism can fail to identify a cap-respecting Pareto improvement. In this realized instance, the efficient cap-respecting allocation would require using officer $i_1$'s comparison between $s_1$ and $s_2$. Ex post, eliciting that comparison is harmless: assigning $i_1$ to either state can be made consistent with the upper-bound. The difficulty is ex ante. A static message space must be chosen before the realized preference profile is known, and comparisons that are harmless in one profile may create visible-fairness conflicts with the upper-bound in another.

We now introduce a ``second-best'' notion of efficiency tailored to settings with modular upper-bounds.

\begin{defi}\label{def:ConstrainedPE}
An allocation \(a\) is \textbf{constrained Pareto efficient} if:
\begin{enumerate}[i)]
    \item \(a\) respects the upper--bounds, and
    
    \item there is no other allocation $a' \in \mathcal{A}$ that respects the upper-bounds such that $a'_i \succsim_i a_i$ for every officer $i \in I$, with strict preference for at least one officer (which is automatic under strict preferences whenever $a' \neq a$).
\end{enumerate}
 A \textbf{mechanism} is constrained Pareto efficient if, for any problem, when agents submit any \emph{truthful message}, the outcome is constrained Pareto efficient.
\end{defi}

In other words, an allocation is {constrained Pareto efficient} if there is no alternative allocation that respects both capacity and modular upper-bound constraints and strictly Pareto improves upon $a$ (with respect to the agents' true preferences). A mechanism is constrained Pareto efficient if, whenever agents submit truthful messages, the resulting allocation is constrained Pareto efficient \emph{under their full (true) preferences}. The following result demonstrates that, in general, no mechanism can simultaneously achieve constrained Pareto efficiency and visible fairness.

\begin{thm}\label{thm:NoMechanismAllThree}
There exists a sequentially solvent modular upper-bound system $H$ for which no mechanism is simultaneously visibly fair, respects $H$ at every message profile, and is constrained Pareto efficient.
\end{thm}

The negative result in Theorem~\ref{thm:NoMechanismAllThree} is not, of course, universal. Standard SD satisfies those properties for modular upper-bound constraints that never bind. On the other hand, constant mechanisms can trivially respect constraints that result in a single feasible allocation which, by its uniqueness, would be constrained Pareto efficient.

Theorem~\ref{thm:NoMechanismAllThree}, however, underscores a key tension in designing visibly fair mechanisms that must also adhere to non-trivial constraints such as modular upper-bounds. At its core, this tension arises from the message-space design: enforcing both fairness and quota requirements demands that the space be curated to preempt scenarios where certain preference reports would necessarily violate upper-bounds. Consequently, to avoid such violations, the mechanism must collect more restricted preference information than might otherwise be desirable. This reduction in elicited information, while upholding fairness and preserving the bounds, can lead to efficiency losses because the mechanism may lack the information needed to detect and implement mutually beneficial reallocations that remain compliant with all constraints.\footnote{One way to resolve this tension is to condition each officer's message space on the allocations of higher-priority officers. Appendix~\ref{sec:appendix-efficiency} presents a dynamic sequential dictatorship that elicits each officer's message after earlier assignments have been fixed and satisfies the corresponding dynamic versions of visible fairness and strategy-proofness, while also being constrained Pareto efficient and respecting upper-bounds (\Cref{dynamicModPriorityMechVisiblyFairModularAndSP}).}

\section{The Indian Administrative Service}\label{sec:IAS}

Officers of the Indian Administrative Service (IAS) are assigned to state cadres. Candidates are strictly ranked according to their performance in the Civil Services Examination, yielding a single priority ordering over all candidates.

Prior to 2017, the assignment mechanism was essentially a serial dictatorship with some modifications. As documented by \cite{thakurMatchingCivilService2023}, this produced a pattern of \emph{homophily}: officers from northern states were disproportionately assigned to northern states, and similarly for southern states. This geographic clustering was deemed incompatible with the national integration mandate of the All India Services. In response, the government reformed the mechanism in 2017, introducing a zone-based allocation system that we now describe formally.\footnote{Throughout this section, we abstract away from two features of the actual cadre allocation policy. First, the policy includes an extensive system of affirmative action in the form of reserved seats for candidates belonging to certain social categories (Scheduled Castes, Scheduled Tribes, Other Backward Classes, and others). Second, each state maintains a distinction between insider slots (reserved for candidates whose home state coincides with that state) and outsider slots. We set aside reservations because they serve an objective that is conceptually distinct from the ones our framework addresses: reservations aim to determine the composition of the matched population—ensuring that candidates from specific demographic groups reach a minimum threshold in the allocation. The insider/outsider system, by contrast, is a distributional constraint of the kind our framework is designed to handle, and we incorporate it in our analysis in \Cref{sec:IASmodular}. For the purpose of the present discussion, however, we omit it to isolate the cross-region distributional objective—the central motivation behind the 2017 reform—and to evaluate whether visible fairness and modular upper bounds provide a suitable foundation for it. The interested reader is referred to \cite{thakurMatchingCivilService2023} for a detailed discussion of both features.}

\subsection{The IAS Mechanism}

We instantiate the model of \Cref{sec: Model and Definitions} to the IAS setting: the states $S$ correspond to the states being allocated, and the priority $\pi$ is determined by performance on the Civil Services Examination. The states are partitioned into five geographically contiguous zones $Z = \{z_1, z_2, z_3, z_4, z_5\}$; let $z(s)$ denote the zone containing state $s$.

\paragraph{Home state.} Each officer $i \in I$ has a \emph{home state} $h_i \in S$, determined by where they are from.

\paragraph{Message space.} Each officer's message space is a \emph{zonal message space with ranking over the zones}, as defined in Section~3.2.2. Concretely, each officer $i$ submits a message $m_i$ consisting of:
\begin{enumerate}[(i)]
    \item a strict ranking of all states within each zone $z_j\in Z$, and
    \item a strict ranking $\triangleright_{m_i}$ over the five zones $Z$.
\end{enumerate}

Within each zone, all states are fully comparable. Across zones, the zone ranking directly imposes only the top-of-higher-zone versus bottom-of-lower-zone comparisons described in \Cref{sec: Visibly Fair Mechanisms}; all remaining comparisons in $\succ_{m_i}$ are those forced by transitivity.\footnote{See Figure~\ref{fig:iasRank} for an example.}

\medskip

Given the message profile, the mechanism proceeds as follows.

\paragraph{Interleaved serial dictatorship.} All officers are processed one at a time in decreasing priority order. For each officer $i$, the mechanism determines an assignment using an \emph{interleaving} rule that cycles through the zones in the order $\triangleright_{m_i}$, considering in each ``round''~$r$ the $r$-th most preferred state in each zone. Formally, let the zones be labeled $z^1_i,z^2_i,\ldots,z^5_i$ according to $\triangleright_{m_i}$, and within each zone $z^j_i$, let $s^j_1,s^j_2,\ldots$ be the states listed in decreasing order of preference under $m_i$. The mechanism finds the first state with remaining capacity in the sequence
\[
s^1_1,\; s^2_1,\; s^3_1,\; s^4_1,\; s^5_1,\; s^1_2,\; s^2_2,\; s^3_2,\; s^4_2,\; s^5_2,\; s^1_3,\; \ldots
\]
and assigns officer $i$ to that state. That state's remaining capacity is then decremented, and the mechanism proceeds to the next officer.

We illustrate the IAS mechanism with a small instance, highlighting how interleaving serial dictatorship promotes geographic spread across zones.

\begin{example}\label{ex:IAS}
There are two zones $z_1=\{s_1,s_2\}$ and $z_2=\{s_3,s_4\}$, each state having capacity $q_s=1$. There are four officers $i_1,i_2,i_3,i_4$ (listed in priority order). Their messages are represented below, where each block is a zone, states are ranked vertically within each block (most preferred on top), and $\triangleright$ denotes the ranking over zones:

\medskip

\begin{center}
\begin{tikzpicture}[font=\small]


\node[font=\normalsize] at (-1.2, 1.1) {$i_1$:};
\draw[thick] (0,0) rectangle (1.8,2.2);
\draw[thick] (0,1.7) -- (1.8,1.7);
\node at (0.9,1.9) {\textbf{$z_1$}};
\node at (0.9,1.15) {$s_1$};
\node at (0.9,0.45) {$s_2$};
\node at (2.4,1.1) {\Large $\triangleright$};
\draw[thick] (3.0,0) rectangle (4.8,2.2);
\draw[thick] (3.0,1.7) -- (4.8,1.7);
\node at (3.9,1.9) {\textbf{$z_2$}};
\node at (3.9,1.15) {$s_3$};
\node at (3.9,0.45) {$s_4$};

\node[font=\normalsize] at (6.8, 1.1) {$i_2$:};
\draw[thick] (8.0,0) rectangle (9.8,2.2);
\draw[thick] (8.0,1.7) -- (9.8,1.7);
\node at (8.9,1.9) {\textbf{$z_2$}};
\node at (8.9,1.15) {$s_3$};
\node at (8.9,0.45) {$s_4$};
\node at (10.4,1.1) {\Large $\triangleright$};
\draw[thick] (11.0,0) rectangle (12.8,2.2);
\draw[thick] (11.0,1.7) -- (12.8,1.7);
\node at (11.9,1.9) {\textbf{$z_1$}};
\node at (11.9,1.15) {$s_2$};
\node at (11.9,0.45) {$s_1$};


\node[font=\normalsize] at (-1.2, -1.9) {$i_3$:};
\draw[thick] (0,-3) rectangle (1.8,-0.8);
\draw[thick] (0,-1.3) -- (1.8,-1.3);
\node at (0.9,-1.1) {\textbf{$z_1$}};
\node at (0.9,-1.85) {$s_1$};
\node at (0.9,-2.55) {$s_2$};
\node at (2.4,-1.9) {\Large $\triangleright$};
\draw[thick] (3.0,-3) rectangle (4.8,-0.8);
\draw[thick] (3.0,-1.3) -- (4.8,-1.3);
\node at (3.9,-1.1) {\textbf{$z_2$}};
\node at (3.9,-1.85) {$s_4$};
\node at (3.9,-2.55) {$s_3$};

\node[font=\normalsize] at (6.8, -1.9) {$i_4$:};
\draw[thick] (8.0,-3) rectangle (9.8,-0.8);
\draw[thick] (8.0,-1.3) -- (9.8,-1.3);
\node at (8.9,-1.1) {\textbf{$z_1$}};
\node at (8.9,-1.85) {$s_1$};
\node at (8.9,-2.55) {$s_2$};
\node at (10.4,-1.9) {\Large $\triangleright$};
\draw[thick] (11.0,-3) rectangle (12.8,-0.8);
\draw[thick] (11.0,-1.3) -- (12.8,-1.3);
\node at (11.9,-1.1) {\textbf{$z_2$}};
\node at (11.9,-1.85) {$s_3$};
\node at (11.9,-2.55) {$s_4$};

\end{tikzpicture}
\end{center}

\medskip

\noindent\textbf{Officer $i_1$.} The interleaving sequence under $m_{i_1}$ is: $s_1$ (1st in $z_1$), $s_3$ (1st in $z_2$), $s_2$ (2nd in $z_1$), $s_4$ (2nd in $z_2$). The first available state is $s_1$, so $\mathcal{C}_1^{\textup{IAS}}(S^1, m) = z_1$ and $i_1$ is assigned to $s_1$.

\noindent\textbf{Officer $i_2$.} The interleaving under $m_{i_2}$ is: $s_3$ (1st in $z_2$), $s_2$ (1st in $z_1$), $s_4$ (2nd in $z_2$), $s_1$ (2nd in $z_1$). With $s_1$ taken, the first available state is $s_3$, so $\mathcal{C}_2^{\textup{IAS}}(S^2, m) = z_2$ and $i_2$ is assigned to $s_3$.

\noindent\textbf{Officer $i_3$.} The interleaving under $m_{i_3}$ is: $s_1$ (1st in $z_1$), $s_4$ (1st in $z_2$), $s_2$ (2nd in $z_1$), $s_3$ (2nd in $z_2$). With $s_1$ and $s_3$ taken, the first available state is $s_4$, so $\mathcal{C}_3^{\textup{IAS}}(S^3, m) = z_2$ and $i_3$ is assigned to $s_4$.

\noindent\textbf{Officer $i_4$.} The interleaving under $m_{i_4}$ is: $s_1, s_3, s_2, s_4$. With $s_1$, $s_3$, and $s_4$ taken, the first available is $s_2$, so $\mathcal{C}_4^{\textup{IAS}}(S^4, m) = z_1$ and $i_4$ is assigned to $s_2$.

\medskip

The final allocation is:
\[
i_1\to s_1,\quad i_2\to s_3,\quad i_3\to s_4,\quad i_4\to s_2.
\]

\noindent Note the role of the interleaving zone selection rule: by cycling across zones rather than exhausting one zone before moving to the next, the mechanism steers officers toward their top choices \emph{in different zones}, thereby promoting geographic dispersion. Without interleaving, a ranked partitioned priority mechanism that always selects the highest-ranked zone with available states would assign $i_3$ to $z_1$ (since $z_1\triangleright z_2$ and $s_2$ remains available), yielding  $i_3\to s_2$ and $i_4\to s_4$. The interleaving rule instead pushes $i_3$ into $z_2$, achieving greater geographic spread.
\end{example}

\subsection{Visible Fairness and Incentives}
We now examine two further properties of the IAS mechanism. First, we show that the
mechanism is visibly fair. Second, we show that the interleaving rule is not strategy-proof. 

\begin{prop}\label{prop:IASVisiblyFair}
The mechanism induced by the interleaved serial dictatorship procedure is a ranked partitioned priority mechanism and is therefore visibly fair.
\end{prop}

The key step is to formally define the interleaving procedure as a ranked zone selection function and verify that it satisfies the two conditions in the definition.\footnote{The proof is given in \Cref{prop:IASVisiblyFairproof}.}

\begin{example}[Failure of strategy-proofness]\label{ex:IASstrategyproof}
Consider two zones $z_1=\{s_1,s_2\}$ and $z_2=\{s_3,s_4\}$, and each state having capacity $q_s=1$. There are three officers $i_1, i_2, i_3$ (in priority order) with messages:
\begin{align*}
i_1 &: \quad z_1 \triangleright z_2, \quad z_1\!: s_1 > s_2, \quad z_2\!: s_3 > s_4, \\
i_2 &: \quad z_2 \triangleright z_1, \quad z_2\!: s_3 > s_4, \quad z_1\!: s_1 > s_2.
\end{align*}
The interleaving assigns $i_1 \to s_1$ and $i_2 \to s_3$.

Suppose officer $i_3$ has true preferences $s_1 \succ s_2 \succ s_4 \succ s_3$. Since every state in $z_1$ is preferred to every state in $z_2$, the unique truthful message is
\[
m_{i_3}: \quad z_1 \triangleright z_2, \quad z_1\!: s_1 > s_2, \quad z_2\!: s_4 > s_3.
\]
The interleaving sequence under $m_{i_3}$ is $s_1, s_4, s_2, s_3$. With $s_1$ and $s_3$ taken, the first available state is $s_4$, so $i_3$ is assigned to~$s_4$.

Now consider the deviation $\hat{m}_{i_3}$: $z_1 \triangleright z_2$, $z_1\!: s_1 > s_2$, $z_2\!: s_3 > s_4$---which reverses the within-zone ranking in~$z_2$. The interleaving sequence becomes $s_1, s_3, s_2, s_4$. With $s_1$ and $s_3$ taken, the first available state is $s_2$, and $i_3$ is assigned to~$s_2$. Since $s_2 \succ s_4$, the deviation is profitable and the mechanism is not strategy-proof.

The failure is a direct consequence of the mechanism not satisfying expressiveness (\Cref{def:expressiveness}). Under the truthful message $m_{i_3}$, the states $s_4 = \psi(m)_{i_3}$ and $s_2 = \psi(\hat{m}_{i_3}, m_{-i_3})_{i_3}$ lie in different zones ($s_4 \in z_2$, $s_2 \in z_1$). The only cross-zone comparison under $m_{i_3}$ is $\max(z_1, m_{i_3}) = s_1 \succ_{m_{i_3}} \min(z_2, m_{i_3}) = s_3$, which links $s_1$ to $s_3$ but not $s_2$ to~$s_4$. Since $s_2$ and $s_4$ are incomparable under $m_{i_3}$, officer $i_3$'s interest in the state obtained through deviation is not expressed under her truthful message. By \Cref{thm:Truthful1}, this violation of expressiveness is exactly what permits the profitable manipulation.
\end{example}

\subsection{Distributional Objectives of the IAS}\label{sec:IASmodular}

The 2017 reform replaced a near-serial-dictatorship with a zone-based mechanism in order to reduce geographic clustering and promote national integration. However, the precise distributional objectives that motivated the design of the five-zone partition were never formally articulated. In particular, it is unclear whether the policymaker sought to limit same-zone assignments, to balance origin–destination flows more broadly, or to target proximity to the home state specifically. Each of these readings leads to a different modular upper-bound system and, through the theory developed in Section~\ref{sec:DistributionalObjectives}, to a different induced message space and mechanism.

We now present three such specifications. Each is grounded in a plausible interpretation of what the IAS designers had in mind, and each illustrates a distinct feature of the visibly fair and modular upper-bounds framework: minimal departures from serial dictatorship, endogenous recovery of the actual zone partition, and the capacity to encode graduated proximity concerns through nested constraints. Throughout this subsection, we write $q_{z} := \sum_{s \in z} q_s$ for the total capacity of zone $z$.

\subsubsection{Specification 1: Zonal Homophily Caps}\label{sec:IASspec1}

The most direct reading of the 2017 reform is that the government sought to limit the number of officers serving in their own geographic zone. This captures the documented concern about homophily---officers clustering near their home region---while imposing the lightest possible constraint on the allocation.

\paragraph{Types and upper-bounds.} Let each officer's type be the zone containing her home state, so $T = \{z_1,\ldots,z_5\}$ and $t_i = z(h_i)$. Define the modular upper-bound system
\[
H = \bigl\{\, (\{z_j\},\, z_j,\, k_j) : j = 1,\ldots,5 \bigr\},
\]
where $k_j$ caps the number of officers from zone~$j$ who may be assigned to states in zone~$j$. A natural calibration is $k_j = \lfloor \alpha \cdot q_{z_j} \rfloor$ for some policy parameter $\alpha \in (0,1)$; for instance, $\alpha = 0.5$ imposes a 50\% insider cap in every zone.

\paragraph{Induced message spaces.} For an officer of type~$z_j$, the upper-bound signature is
\[
H^{s,z_j} = \begin{cases} \{(\{z_j\}, z_j, k_j)\} & \text{if } s \in z_j, \\ \varnothing & \text{if } s \notin z_j. \end{cases}
\]
Thus, the constraint-induced partition for every type consists of exactly two zones:
\[
z^{z_j}_1 = z_j \quad\text{(home zone)}, \qquad z^{z_j}_2 = S \setminus z_j \quad\text{(away)}.
\]
Each officer ranks states fully within her home zone and fully within the complement, but cannot compare states across the two.

\paragraph{Exogenous zone ranking.} A natural choice is $z^{z_j}_1 \;\blacktriangleright_i\; z^{z_j}_2$ for each officer~$i$ of type~$z_j$, giving priority to the home zone. By \Cref{modulartheorem}, the resulting Modular Priority Mechanism is visibly fair, strategy-proof, and respects the insider caps.

\begin{example}[Zonal homophily caps]\label{ex:IASspec1}
Consider a simplified setting with four states $S=\{s_1,s_2,s_3,s_4\}$ partitioned into $z_1=\{s_1,s_2\}$ and $z_2=\{s_3,s_4\}$, each state having capacity $q_s=2$. There are 6 officers: $i_1,\ldots,i_6$ (in priority order), with $t_{1}=t_{2}=t_{3}=z_1$ and $t_{4}=t_{5}=t_{6}=z_2$.  The insider caps are $k_1 = k_2 = 2$, i.e., at most 2 officers from each zone may serve in their home zone (50\% of the zone's total capacity of 4).

The induced partition for a type-$z_1$ officer is $z^{z_1}_1=\{s_1,s_2\}$ and $z^{z_1}_2=\{s_3,s_4\}$; symmetrically for type~$z_2$. Each officer ranks states within each of these two groups and, with the exogenous ranking placing the home zone first, the Modular Priority Mechanism assigns officers in priority order: first to their preferred state in the home zone (if the insider cap has not yet bound), then to their preferred state in the away zone once it binds.

Suppose all type-$z_1$ officers prefer $s_1$ to $s_2$ and $s_3$ to $s_4$ within their respective zones, while all type-$z_2$ officers prefer $s_3$ to $s_4$ and $s_1$ to $s_2$. Then: $i_1 \to s_1$, $i_2\to s_1$ (filling $s_1$'s capacity), and $i_3$ faces a binding insider cap ($k_1=2$ is reached), so $i_3$ is redirected to the away zone and assigned $s_3$. Similarly, $i_4\to s_3$ (filling $s_3$'s capacity), $i_5\to s_4$, and $i_6$ faces the binding insider cap and is redirected to the away zone, receiving~$s_2$. The final allocation is
\[
i_1\to s_1,\quad i_2\to s_1,\quad i_3\to s_3,\quad i_4\to s_3,\quad i_5\to s_4,\quad i_6\to s_2,
\]
distributing officers evenly across zones despite their preference for home states.
\end{example}

This specification is the most parsimonious departure from serial dictatorship: only one constraint per officer type, yielding the coarsest possible non-trivial partition (two zones).

\subsubsection{Specification 2: Cross-Zone Balance Constraints}\label{sec:IASspec2}

A more ambitious interpretation of the ``national integration'' mandate is that the government wanted not only to limit insider assignments but to ensure \emph{balanced representation from each origin zone in each destination zone}. Under this reading, no destination zone should be dominated by officers from any single origin---a concern about cultural and administrative diversity that goes beyond the insider/outsider distinction.

\paragraph{Types and upper-bounds.} Types are again $T = \{z_1,\ldots,z_5\}$ (origin zone). For every pair of origin zone~$z_j$ and destination zone~$z_\ell$, impose an upper-bound:
\[
H = \bigl\{\, (\{z_j\},\, z_\ell,\, k_{j\ell}) : j,\ell \in \{1,\ldots,5\} \bigr\}.
\]
A natural calibration sets:
\begin{itemize}
    \item $k_{jj} = \lfloor \alpha \cdot q_{z_j} \rfloor$ for insiders (as in Specification~1), and
    \item $k_{j\ell} = \lceil q_{z_\ell}/4 \rceil$ for $j \neq \ell$, so that each outside zone supplies at most roughly one quarter of any destination zone's officers.
\end{itemize}

\paragraph{Induced message spaces.} For an officer of type~$z_j$, the upper-bound signature at state~$s$ depends only on which zone~$z_\ell$ contains~$s$:
\[
H^{s,z_j} = \bigl\{(\{z_j\}, z_\ell, k_{j\ell})\bigr\}, \quad\text{where } s\in z_\ell.
\]
Since all states within the same destination zone share the same signature, the constraint-induced partition for every officer type is precisely
\[
Z = \{z_1, z_2, z_3, z_4, z_5\}
\]
---the five geographic zones used in the actual IAS mechanism.

\paragraph{Exogenous zone ranking.} Since the distributional goal is balanced representation across \emph{all} zone pairs, a natural choice is a rotating order that shifts the starting zone by type. Fix a reference ordering of the zones, say $z_1, z_2, z_3, z_4, z_5$, and for each officer~$i$ of type~$z_j$ let the exogenous ranking cycle through the zones beginning at~$z_j$:
\[
z_j \;\blacktriangleright_i\; z_{j+1} \;\blacktriangleright_i\; z_{j+2} \;\blacktriangleright_i\; z_{j+3} \;\blacktriangleright_i\; z_{j+4},
\]
where indices are modulo~5. Thus officers from different home zones face different exogenous rankings: a type-$z_1$ officer has $z_1 \blacktriangleright z_2 \blacktriangleright z_3 \blacktriangleright z_4 \blacktriangleright z_5$, a type-$z_2$ officer has $z_2 \blacktriangleright z_3 \blacktriangleright z_4 \blacktriangleright z_5 \blacktriangleright z_1$, and so on. Each type's demand is directed first toward its home zone but then rotates through the remaining zones in a balanced fashion, echoing the interleaving logic of the actual IAS mechanism and spreading excess demand evenly across destinations.

This specification \emph{endogenously recovers} the actual five-zone message space as the constraint-induced partition. The 25 upper-bounds ($5\times 5$) each constrains a different type on a different subset of states---yet the framework handles them seamlessly (see \Cref{fig:crossZoneBalance}). By \Cref{modulartheorem}, the Modular Priority Mechanism under these bounds is visibly fair, strategy-proof, and respects all 25 constraints.

\begin{center}
\captionsetup{font=footnotesize}
\begin{tikzpicture}[
  font=\small,
  zone/.style={
    circle, minimum size=13mm, line width=0.5pt, inner sep=0pt
  },
  bgline/.style={draw=gray!20, line width=0.35pt},
  capline/.style={->, >=stealth, draw=red!35, line width=0.55pt,
    shorten >=2pt, shorten <=2pt},
  annot/.style={font=\scriptsize, text=gray!60!black},
  fml/.style={font=\scriptsize, text=red!40!black}
]

\node[zone, draw=red!40, fill=red!6]       (z1) at (90:2.3)  {$z_1$};
\node[zone, draw=blue!40, fill=blue!6]     (z2) at (162:2.3) {$z_2$};
\node[zone, draw=green!45, fill=green!6]   (z3) at (234:2.3) {$z_3$};
\node[zone, draw=violet!40, fill=violet!6] (z4) at (306:2.3) {$z_4$};
\node[zone, draw=teal!45, fill=teal!6]     (z5) at (18:2.3)  {$z_5$};

\draw[bgline] (z2) -- (z3);
\draw[bgline] (z2) -- (z4);
\draw[bgline] (z2) -- (z5);
\draw[bgline] (z3) -- (z4);
\draw[bgline] (z3) -- (z5);
\draw[bgline] (z4) -- (z5);

\draw[bgline, ->, >=stealth] (z2) to[out=185, in=139, looseness=4] (z2);
\draw[bgline, ->, >=stealth] (z3) to[out=257, in=211, looseness=4] (z3);
\draw[bgline, ->, >=stealth] (z4) to[out=329, in=283, looseness=4] (z4);
\draw[bgline, ->, >=stealth] (z5) to[out=41, in=-5, looseness=4] (z5);

\draw[capline] (z1) -- (z2);
\draw[capline] (z1) -- (z3);
\draw[capline] (z1) -- (z4);
\draw[capline] (z1) -- (z5);

\draw[capline, draw=orange!50] (z1) to[out=113, in=67, looseness=4] (z1);

\node[font=\scriptsize, text=orange!65!black, anchor=south]
  at ($(z1)+(0,1.15)$) {Insider cap: $k_{jj}$};
\node[font=\scriptsize, text=red!45!black, anchor=west]
  at ($(z1)!0.5!(z5)+(0.55,0.15)$) {Outsider cap: $k_{j\ell}$};

\node[annot, text width=8cm, align=center] at (0.8,-3.4) {%
  $5\times 5 = 25$ upper-bounds: $(\{z_j\},\, z_\ell,\, k_{j\ell})$ 
  for all $j,\ell \in \{1,\ldots,5\}$\\[2pt]
  Induced partition for every type: $Z = \{z_1, z_2, z_3, z_4, z_5\}$};

\end{tikzpicture}
\captionof{figure}{Cross-zone balance constraints (\Cref{sec:IASspec2}).  Each origin zone faces caps on assignments to \emph{every} destination zone, yielding $5\times 5 = 25$ non-laminar upper-bounds. Highlighted arrows illustrate the constraints faced by type-$z_1$ officers; light connections indicate the analogous structure for all other types. The induced partition for every type is the full five-zone partition $\{z_1,\ldots,z_5\}$.}
\label{fig:crossZoneBalance}
\end{center}

\begin{example}[Cross-zone balance]\label{ex:IASspec2}
Consider six states $S=\{s_1,\ldots,s_6\}$ partitioned into three zones: $z_1=\{s_1,s_2\}$, $z_2=\{s_3,s_4\}$, and $z_3=\{s_5,s_6\}$, each state with capacity $q_s=2$ (so $q_{z_1}=q_{z_2}=q_{z_3}=4$). There are 9 officers $i_1,\ldots,i_9$ (in priority order), three of each type: $t_{1}=t_{2}=t_{3}=z_1$, $t_{4}=t_{5}=t_{6}=z_2$, and $t_{7}=t_{8}=t_{9}=z_3$. The cross-zone balance constraints impose, for every pair $(j,\ell)$: $
k_{jj}=1$ (insider cap: at most 1 insider per zone), $k_{j\ell}=2$ for $j\neq\ell$ (outsider cap per destination), yielding $3\times 3=9$ upper-bounds.

For a type-$z_1$ officer, the signatures are $H^{s,z_1}=\{(\{z_1\},z_\ell,k_{1\ell})\}$ where $s\in z_\ell$. Since the signature differs across the three zones, the constraint-induced partition is $\{z_1,z_2,z_3\}$---three zones, not two.

With the cyclic exogenous ranking,
\[
z_1\blacktriangleright z_2 \blacktriangleright z_3,\qquad
z_2\blacktriangleright z_3 \blacktriangleright z_1,\qquad
z_3\blacktriangleright z_1 \blacktriangleright z_2.
\]
Suppose all officers prefer $s_1$ to $s_2$ within $z_1$, $s_3$ to $s_4$ within $z_2$, and $s_5$ to $s_6$ within $z_3$. Then the assignment unfolds type by type:

The first type-$z_1$ officer, $i_1$, receives $s_1$, so the insider cap $k_{11}=1$ immediately binds. The next two type-$z_1$ officers are redirected to $z_2$, where both prefer $s_3$ to $s_4$; since $q_{s_3}=2$, officers $i_2$ and $i_3$ are both assigned to $s_3$, filling its capacity. The flow from $z_1$ to $z_2$ now reaches $k_{12}=2$.

Next, among type-$z_2$ officers, $i_4$ stays in the home zone but finds $s_3$ full, and so receives $s_4$; this triggers $k_{22}=1$. Officers $i_5$ and $i_6$ are redirected to $z_3$, where both prefer $s_5$ to $s_6$ and so are both assigned to $s_5$, filling its capacity. Hence $k_{23}=2$ binds.

Finally, for type-$z_3$, officer $i_7$ stays in $z_3$ but finds $s_5$ full, and so receives $s_6$; this triggers $k_{33}=1$. The remaining two type-$z_3$ officers, $i_8$ and $i_9$, are redirected
to $z_1$, where both prefer $s_1$ to $s_2$; since $s_1$ is full they fill the two seats at $s_2$. Thus $k_{31}=2$ binds as well.

The final allocation is therefore
\[
z_1:\ \{i_1,i_8,i_9\},\qquad
z_2:\ \{i_2,i_3,i_4\},\qquad
z_3:\ \{i_5,i_6,i_7\}.
\]
Each destination zone receives one insider and two outsiders, and the rotation creates a clean cascade: type-$z_1$ overflow goes to $z_2$, type-$z_2$ overflow to $z_3$, and type-$z_3$ overflow to $z_1$.

Crucially, the three-zone partition is what makes this balance enforceable. Because the mechanism tracks assignments to each destination zone \emph{separately} for each type, it can detect when a particular origin--destination flow is becoming saturated and redirect officers to the next zone in the rotation. 
\end{example}

\subsubsection{Specification 3: Home-Proximity Caps}\label{sec:IASspec3}

A third---and arguably the most realistic---interpretation is that the policymaker's concern about homophily was not uniform across all states in the home zone but rather was \emph{strongest for the home state itself}, weaker for nearby states, and weakest for the broader home zone. Officers serving in their exact home state pose the greatest risk of local entrenchment; those in a neighboring state within the same zone still have strong local ties; while those elsewhere in the home zone are less embedded but still not fully ``outsiders.'' This graduated concern can be captured through nested proximity constraints.

\paragraph{Types and upper-bounds.} Let each officer's type encode her home state, so $T = S$ and $t_i = h_i$. For each home state~$h\in S$, define nested subsets capturing increasing levels of geographic proximity:
\begin{itemize}
    \item $N_h^0 = \{h\}$---the home state itself,
    \item $N_h^1$---the home state and its immediate neighbors (e.g., nearby states in the same zone or states sharing a border),
    \item $N_h^2 = z(h)$---all states in the home zone.
\end{itemize}
The modular upper-bounds for type~$h$ are:
\[
H^h = \Bigl\{\, (\{h\}, N_h^0, k_h^0),\quad (\{h\}, N_h^1, k_h^1),\quad (\{h\}, N_h^2, k_h^2) \,\Bigr\},
\]
with $k_h^0 \leq k_h^1 \leq k_h^2$. For instance, one could set $k_h^0 = 0$ (no officer may serve in her exact home state), $k_h^1 = 1$ (at most one officer from~$h$ in the neighborhood), and $k_h^2 = 2$ (at most two in the broader home zone).

\paragraph{Induced message spaces.} For an officer of type~$h$, the upper-bound signature at state~$s$ is
\[
H^{s,h} = \begin{cases}
\{(\{h\}, N_h^0, k_h^0),\; (\{h\}, N_h^1, k_h^1),\; (\{h\}, N_h^2, k_h^2)\} & \text{if } s = h, \\[2mm]
\{(\{h\}, N_h^1, k_h^1),\; (\{h\}, N_h^2, k_h^2)\} & \text{if } s \in N_h^1 \setminus \{h\}, \\[2mm]
\{(\{h\}, N_h^2, k_h^2)\} & \text{if } s \in N_h^2 \setminus N_h^1, \\[2mm]
\varnothing & \text{if } s \notin N_h^2.
\end{cases}
\]
Since each of these four cases yields a distinct signature, the constraint-induced partition for type~$h$ has (up to) four zones:
\[
z_1^h = \{h\}, \qquad z_2^h = N_h^1\setminus\{h\}, \qquad z_3^h = z(h)\setminus N_h^1, \qquad z_4^h = S\setminus z(h).
\]

\paragraph{Exogenous ranking over zones.}
The exogenous ranking over the induced zones is a policy choice. If the policymaker wants officers to be placed as close as possible to home subject to the caps, a natural ranking is
\[
z_1^h \blacktriangleright_i z_2^h \blacktriangleright_i z_3^h \blacktriangleright_i z_4^h.
\]
When the exact-home cap is $k_h^0=0$, however, the first zone is closed from the outset. In that case one may equivalently place the admissible proximity tiers first, for example
\[
z_2^h \blacktriangleright_i z_3^h \blacktriangleright_i z_4^h \blacktriangleright_i z_1^h,
\]
which implements the same substantive priority among states that can actually be assigned.

The officer ranks states fully within each of these four groups but cannot compare states across them. This partition encodes a gradient of ``local embeddedness'': the mechanism suppresses precisely those cross-group comparisons that would make it impossible to enforce the graduated proximity constraints while maintaining visible fairness (see \Cref{fig:homeProximity}).

\begin{center}
\begin{minipage}{\textwidth}
\centering
\captionsetup{font=footnotesize}
\begin{tikzpicture}[
  font=\small,
  lbl/.style={font=\scriptsize, anchor=west}
]

\fill[gray!6]   (0,0) circle (3.2);
\draw[gray!30, line width=0.4pt] (0,0) circle (3.2);

\fill[orange!6] (0,0) circle (2.3);
\draw[orange!30, line width=0.4pt] (0,0) circle (2.3);

\fill[teal!8]   (0,0) circle (1.4);
\draw[teal!35, line width=0.4pt] (0,0) circle (1.4);

\fill[red!8]    (0,0) circle (0.55);
\draw[red!40, line width=0.5pt] (0,0) circle (0.55);
\node[font=\small] at (0,0) {$h$};

\coordinate (R1) at (0:0.55);
\coordinate (R2) at (15:1.4);
\coordinate (R3) at (10:2.3);
\coordinate (R4) at (5:3.2);

\node[lbl, text=red!50!black]    (L1) at (4.6, 1.5) {$N_h^0 = \{h\}$\quad cap $k_h^0$};
\node[lbl, text=teal!55!black]   (L2) at (4.6, 0.5) {$N_h^1$\quad cap $k_h^1$};
\node[lbl, text=orange!55!black] (L3) at (4.6,-0.5) {$N_h^2 = z(h)$\quad cap $k_h^2$};
\node[lbl, text=gray!55!black]   (L4) at (4.6,-1.5) {$S\setminus z(h)$\quad (no caps)};

\draw[red!30,    line width=0.3pt] (R1) -- ++(0.3,0) |- (L1.west);
\draw[teal!30,   line width=0.3pt] (R2) -- ++(0.15,0) |- (L2.west);
\draw[orange!25, line width=0.3pt] (R3) -- ++(0.1,0) |- (L3.west);
\draw[gray!25,   line width=0.3pt] (R4) -- ++(0.05,0) |- (L4.west);

\end{tikzpicture}
\captionof{figure}{Home-proximity caps (\Cref{sec:IASspec3}). Concentric rings represent nested subsets of increasing distance from the home state~$h$, each carrying a progressively looser cap: $k_h^0 \leq k_h^1 \leq k_h^2$.}
\label{fig:homeProximity}
\end{minipage}
\end{center}

\begin{example}[Home-proximity caps]\label{ex:IASspec3}
Consider six states $S=\{s_1,\ldots,s_6\}$ in two zones: $z_1=\{s_1,s_2,s_3\}$ and $z_2=\{s_4,s_5,s_6\}$, each state with capacity~$2$. There are 6 officers $i_1,\ldots,i_6$ (in priority order), with home states $h_{i_1}=h_{i_2}=h_{i_3}=s_1$ and $h_{i_4}=h_{i_5}=h_{i_6}=s_4$. The neighborhood of~$s_1$ is $N_{s_1}^1=\{s_1,s_2\}$ and $N_{s_1}^2=z_1=\{s_1,s_2,s_3\}$. Similar definitions can be made for other states.

The nested upper-bounds for type~$s_1$ are:
\[
H^{s_1}=\bigl\{\,(\{s_1\},\,\{s_1\},\,0),\quad (\{s_1\},\,\{s_1,s_2\},\,1),\quad (\{s_1\},\,\{s_1,s_2,s_3\},\,2)\bigr\}.
\]
Since $N_{s_1}^1\subsetneq N_{s_1}^2$, all three constraints involve distinct sets and distinct caps, and the induced partition for type~$s_1$ has the full four zones:
\[
z_1^{s_1}=\{s_1\},\qquad z_2^{s_1}=\{s_2\},\qquad z_3^{s_1}=\{s_3\},\qquad z_4^{s_1}=\{s_4,s_5,s_6\}.
\]

With the exogenous ranking $z_2^{h}\;\blacktriangleright\; z_3^{h}\;\blacktriangleright\; z_4^{h}\;\blacktriangleright\; z_1^{h}$ (neighbor first, then broader home zone, then away, home state last), the graduated caps produce a layered cascade. Officer~$i_1$ is assigned to~$s_2$ (the neighbor; $k_{s_1}^1=1$ not yet binding). Officer~$i_2$ faces the binding neighborhood cap and moves to the next tier: thereby receiving ~$s_3$ (broader home zone; $k_{s_1}^2=2$ not yet binding). Officer~$i_3$ finds both inner tiers saturated and is redirected to the away zone, receiving~$s_4$. By symmetry, type-$s_4$ officers cascade through~$s_5$,~$s_6$, and finally~$s_1$. The final allocation is
\[
i_1\to s_2,\quad i_2\to s_3,\quad i_3\to s_4,\quad i_4\to s_5,\quad i_5\to s_6,\quad i_6\to s_1.
\]
No officer serves in her home state ($k_h^0=0$), at most one per type in the neighborhood ($k_h^1=1$), and at most two per type in the home zone ($k_h^2=2$). Each successive officer of a given type lands one proximity tier farther from home---a graduated dispersion that neither the binary home/away split of Specification~1 nor the zone-to-zone balancing of Specification~2 can produce.
\end{example}

\section{Conclusion}
\label{sec: Conclusion}

We explored how designing mechanisms that restrict the preferences participants can report helps reconcile multiple policy objectives—particularly distributional constraints—with fairness principles grounded in strict priority orders. We introduced the concept of visible fairness, in which a mechanism never produces an allocation that appears to violate a participant’s priority based on the partially observed preferences. Theorem 1 demonstrated that every visibly fair mechanism operates as a variant of serial dictatorship adapted to partial preferences. We further showed that this framework can accommodate diverse distributional objectives by employing modular upper-bounds, which limit how many individuals of certain types can be placed in specified subsets of positions. Central to this approach are constraint-induced message spaces, which prevent participants from specifying cross-group preference comparisons that would otherwise undermine these quotas.

We then characterized what makes these mechanisms incentive-compatible, by parsing out two critical conditions---expressiveness and (weak) availability---that together ensure no participant can profit by misreporting her partial preferences. \Cref{thm:NoMechanismAllThree} shows that, once modular upper-bounds bind, restricting the scope of preference reporting can make constrained Pareto efficiency incompatible with visible fairness. The loss is informational. Under simultaneous elicitation, visible fairness may require the mechanism to avoid eliciting preference comparisons that would reveal efficiency improvements, even in instances where implementing those improvements would itself be consistent with visible fairness. 

The simulation exercise complements this impossibility result. It shows that the information lost through coarse zones can be quantitatively important: when zones are merged in a zonal priority rule with exogenous zone orders, average welfare rises substantially across a broad range of preference environments. Thus the design problem is not whether preference elicitation matters, but which comparisons can be elicited without compromising the distributional objectives that motivated the restriction.

Finally, we applied the full framework to the state allocation mechanism for India's All India Services. We established that the 2017 zone-based mechanism is visibly fair when insider quotas are removed and that its interleaving rule precludes strategy-proofness. We also discussed why within-state insider quotas are not naturally accommodated by visibly fair mechanisms based on the corresponding one-state-per-zone message spaces. We then formulated three modular-upper-bound specifications---zonal homophily caps, cross-zone balance constraints, and graduated home-proximity caps---each capturing a different plausible reading of the reform's distributional objectives. 

Overall, this paper formalizes a design practice already present in priority-based allocation systems: the use of restricted message spaces to make policy objectives operational. Its contribution is to turn this practice into a systematic design framework, identifying how such restrictions can be chosen so that the resulting mechanisms satisfy visible fairness, strategy-proofness, and the relevant distributional constraints.

\bibliographystyle{ecca}
\bibliography{incontestable}
\newpage
\appendix
\section{Appendix: Simulation Design and Additional Results}
\label{app:simulations}

This appendix describes the simulation exercise reported in \Cref{sec:preference-elicitation-simulations} in greater detail. The purpose of the exercise is to isolate the informational effect of merging zones in a visibly fair zonal priority mechanism. Merging zones serves the role of representing the elicitation of finer information about preferences from officers, by enriching the message space: states that were previously in different zones become comparable because they now lie in the same merged zone.

\subsection{Environment}

The reported figures use $n=m=50$. Every state has capacity one, so each run produces a complete assignment of all officers to all states. Utilities are generated according to
\[
    u_{is}=\sqrt{p_1}\,c_s+\sqrt{1-p_1}\,\varepsilon_{is},
\]
where $c_s\sim N(0,1)$ is a state-specific common component and $\varepsilon_{is}\sim N(0,1)$ is an officer-state idiosyncratic component. All draws are independent. The square-root scaling keeps the variance of $u_{is}$ constant across specifications and makes $p_1$ itself equal to the across-officer correlation in utilities for a given state. The parameter $p_1\in\{0,0.1,\ldots,1\}$ therefore indexes the correlation in officers' utilities for a fixed state.

For each utility profile, the simulation draws a random partition of states into $k$ nonempty zones, where $k$ is drawn uniformly from $\{2,\ldots,8\}$. The partition is constructed by first assigning one state to each zone and then assigning the remaining states randomly across zones. The simulation also draws, for each officer, an exogenous ranking of the initial zones. This ranking is a random permutation of the zones and is independent of the officer's utility vector.

The zone order is an exogenous input to the mechanism. Hence the initial message space elicits only within-zone rankings: for each zone $z$, officer $i$ ranks the states in $z$ by utility. No comparison between states in different initial zones is elicited before zones are merged.

\subsection{Mechanisms compared}

Fix a utility profile, an initial partition, and the exogenous zone order for each officer. The simulation then repeats the assignment for independent priority orders and merger draws.

The baseline allocation, denoted $a^0$, is produced as follows. Officers are processed according to a random priority order. When officer $i$ is reached, the mechanism scans her exogenous zone ranking until it finds the first zone with at least one unassigned state. The officer is assigned her most-preferred unassigned state within that zone. This gives welfare $W^0=\sum_i u_{i,a_i^0}$.

For the post-merger allocation, each pair of initial zones is merged independently with probability $p_2\in\{0,0.1,\ldots,1\}$. The resulting merged zones are the connected components of the graph whose vertices are the initial zones and whose edges are the realized pairwise mergers. Within each merged zone, officer $i$ ranks all states by utility. Across merged zones, the order is inherited from the exogenous order over initial zones. If $\rho_i(z)$ is the rank of initial zone $z$ in officer $i$'s exogenous order, with rank $1$ highest, then merged zone $\widehat z$ receives rank
\[
    \widehat{\rho}_i(\widehat z)
    =
    \min\{\rho_i(z):z\text{ is an initial zone contained in }\widehat z\}.
\]
The same priority algorithm is then applied to the merged partition, producing allocation $a^{p_2}$ and welfare $W^{p_2}=\sum_i u_{i,a_i^{p_2}}$. When $p_2=0$, no zones are merged and $W^{p_2}=W^0$. When $p_2=1$, all initial zones are connected and the mechanism coincides with the full-merger benchmark described next.

The full-merger benchmark places all states in one zone. Since every officer then ranks all states by utility, the mechanism is ordinary serial dictatorship under the same random priority order. Let $a^1$ be the resulting allocation and $W^1=\sum_i u_{i,a_i^1}$.
This benchmark measures the welfare attained when all preference comparisons are elicited, while still respecting the priority order.

Finally, the first-best benchmark ignores priority and chooses the one-to-one assignment that maximizes total welfare: $W^{FB}=\max_{a\in A}\sum_i u_{i,a_i}$. In the code, this is computed using the Hungarian algorithm. The simulation also records the expected welfare of a uniformly random one-to-one assignment, but the figures reported here use the comparisons among $W^0$, $W^{p_2}$, $W^1$, and $W^{FB}$.

\subsection{Reported normalizations}

For each parameter cell $(p_1,p_2)$, the code averages over utility profiles, priority orders, and merger realizations. The summary file reports the cell means of $W^0$, $W^{p_2}$, $W^1$, and $W^{FB}$. The main text uses two derived measures.

The first is the percentage of the full-merger gain recovered at merger probability $p_2$, computed as $100\times (W^{p_2}-W^0)/(W^1-W^0)$. The second is the percentage of the first-best gap closed, computed as $100\times (W^{p_2}-W^0)/(W^{FB}-W^0)$. The former equals 0 at $p_2=0$ and 100 at $p_2=1$ except in degenerate cases where the denominator is zero; the latter is typically smaller whenever serial dictatorship under full elicitation remains below first best.

The main figure reports the averages of these two cell-level ratios over $p_1\in\{0,0.1,\ldots,0.9\}$.
The case $p_1=1$ is omitted because $u_{is}=c_s$ for every officer $i$. Since $n=m$ and every state is assigned exactly once, every complete assignment has welfare $\sum_s c_s$. Therefore $W^0=W^{p_2}=W^1=W^{FB}$ in every run, and both denominators are zero.

\subsection{Welfare levels by preference correlation}

\Cref{fig:simulation-appendix-levels} reports the underlying welfare levels. Each panel fixes $p_1$, the cross-officer utility correlation. The horizontal axis is $p_2$, the probability with which each pair of initial zones is merged. The vertical axis is mean total welfare. The four plotted series are: $W^0$, $W^{p_2}$, $W^1$, and $W^{FB}$. Thus the figure shows both the informational gain from merging zones and the remaining gap between priority-respecting full elicitation and the first best.

\begin{figure}[H]
\centering
\includegraphics[width=\textwidth]{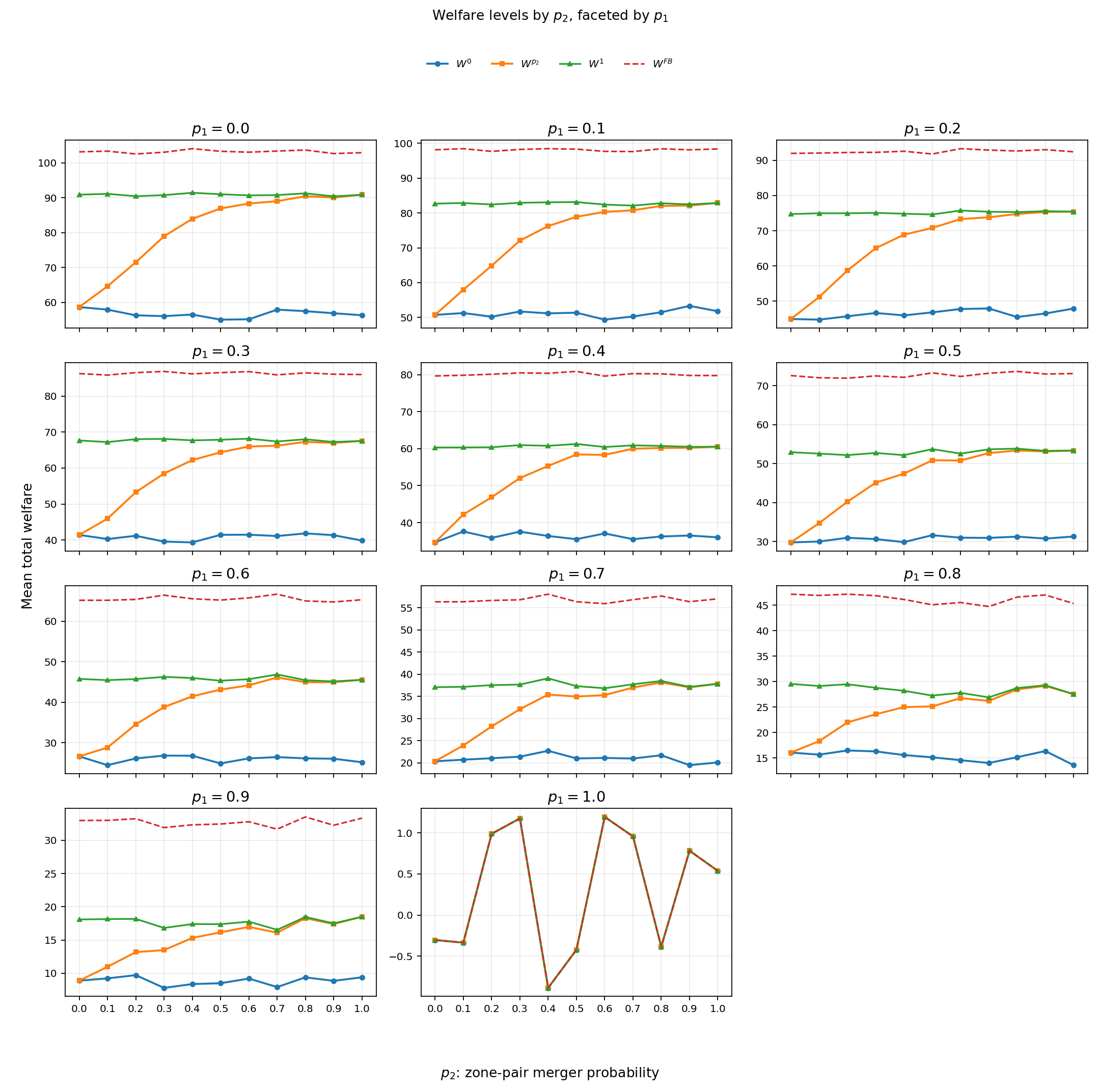}
\caption{Welfare levels by merger probability and utility correlation. Each panel fixes $p_1$. The blue line is the baseline welfare $W^0$ under the initial zonal message space. The orange line is post-merger welfare $W^{p_2}$. The green line is the full-merger serial-dictatorship benchmark $W^1$. The red dashed line is first-best welfare $W^{FB}$.}
\label{fig:simulation-appendix-levels}
\end{figure}

The panels show that $W^{p_2}$ generally rises with $p_2$ and approaches $W^1$. This is the welfare effect of eliciting additional comparisons by merging zones. The increase is especially visible for low and intermediate values of $p_1$, where there is enough heterogeneity in utilities for assignment choices to matter.

The figure also shows that $W^1$ remains below $W^{FB}$ for all nondegenerate values of $p_1$. Hence full merging eliminates the informational loss from coarse zones, but it does not eliminate the welfare loss associated with the priority order. The first-best assignment can reallocate states to maximize total welfare, while serial dictatorship must respect the random priority order.

As $p_1$ increases, welfare levels fall in absolute value because there is less idiosyncratic variation for the assignment to exploit. At $p_1=1$, all officers have the same utility for each state. Since all states are assigned, all complete assignments have the same total welfare. This is why the four welfare series coincide in the $p_1=1$ panel. The visible movement of the overlapping lines across $p_2$ in that panel is sampling variation from independent utility draws across parameter cells, not an allocation effect.

Small movements in $W^0$, $W^1$, and $W^{FB}$ across $p_2$ within a fixed $p_1$ panel should be interpreted similarly. Conceptually, these benchmarks do not depend on the merger probability. They are plotted at each $p_2$ because each parameter cell is simulated independently and the figure reports the corresponding cell means. The economically relevant $p_2$ movement is the rise of $W^{p_2}$ from the baseline $W^0$ toward the full-merger benchmark $W^1$.

\section{Appendix: Proofs}
\label{sec: Appendix Proofs}

\subsection{Proof of \Cref{thm:CharacterizationVisiblyFairMechanisms}}

\textit{A mechanism $\psi$ is visibly fair if and only if it is a $m$-queue allocation mechanism. }

\begin{proof}

\underline{Part 1}: A $m$-queue allocation mechanism $\psi$ is visibly fair.

\textbf{Case 1}. Consider any $m\in M$ and suppose by contradiction that for individual $i$ there exists $\psi(m)_j \succ_{m_i} \psi(m)_i$, where $\pi(i)< \pi(j)$.

By construction of the mechanism, suppose individual $i$ is assigned at step $k$; note that both $\psi(m)_j \in S^k$ and $\psi(m)_i \in S^k$ as $\pi(i)< \pi(j)$. It follows, by $\psi(m)_j \succ_{m_i} \psi(m)_i$ that $\psi(m)_i \not\in G(S^k, m_k)$ --- a contradiction.

\textbf{Case 2}. Consider any $m\in M$ and suppose by contradiction that for individual $i$ there exists $s \succ_{m_i} \psi(m)_i$, where $|\{i\in I: \psi(m)_i =s\}| < q_s$.

By construction of the mechanism, suppose individual $i$ is assigned at step $k$; note that both $s \in S^k$ and $\psi(m)_i \in S^k$ as $|\{i\in I: \psi(m)_i =s\}| < q_s$. It follows, by $s\succ_{m_i} \psi(m)_i$ (or $s\succ_{m_k} \psi(m)_i$ ) that $\psi(m)_i \not\in G(S^k, m_k)$ --- a contradiction.

    \noindent  \underline{Part 2}: A visibly fair mechanism is a $m$-queue allocation mechanism

Take an arbitrary visibly fair mechanism $\psi$ and fix a message profile
$m\in M$.  Denote the resulting allocation by
$a=\psi(m)$.
We now \emph{construct} for this profile the $m$-queue allocation procedure that replicates $a$.
\medskip

Define set $S^k$ for each officer $k\in I$ as follows:

\begin{itemize}
\item[] \textbf{Step $0$}:  Set $S^1=S$.

    \item[] \textbf{Step $k$ ($1\leq k\leq n$)}: Set $s^k=a_k$. If the number of officers assigned to $s^k$ reaches $q_{s^k}$, that is $|\{ i\in I:  i\leq k \text{ and } a_i=s^k\}|= q_{s^k}$, then $S^{k+1}\equiv S^{k}\backslash\{s^k\}$. Otherwise, $S^{k+1}= S^{k}$.

\end{itemize}

Define $g_k(S^k,m)=a_k$ on the realized pair $(S^k,m)$. For every nonempty pair $(X,m')$ not specified by this construction, define $g_k(X,m')$ arbitrarily by choosing an element of $G(X,m'_k)$, which is nonempty because $\succ_{m'_k}$ is a strict partial order and $X$ is finite and nonempty. Notice that for officer $k$, the set $S^k$ consists of only those states that still have remaining capacities to accommodate $k$ after the allocation of higher-ranking officers is taken into account. That is, each state $s \in S \setminus S^k$ is filled by higher ranking offers, $|\{ i\in I:  i< k \text{ and } a_i=s\}|= q_{s}$. Now we must have that $a_k \in G(S^k,m_k)$, otherwise there exists $s \in S^k$ such that $s \succ_{m_k} a_k$, which makes the outcome visibly unfair as a higher ranking offer does not occupy this state. 

An $m$-queue mechanism working down the priority list, using sets $\{S^k\}_{k=1}^n$ defined above and the procedure given in \Cref{def:mQueueMechanisms}, would give every officer the same assignment as $\psi$. Since the construction can be repeated for every profile $m$, $\psi$ is an $m$-queue allocation mechanism.
\end{proof}

\subsection{Proof of \Cref{thm:CharacterizationIncontestableMechanisms2}}

\textit{For a zonal message space $M$, $\psi$ is visibly fair if and only if it is a partitioned priority mechanism.}

\begin{proof}
\underline{Part 1}: A partitioned priority mechanism is visibly fair.

The proof follows the same logic as in Theorem 1, Part 1, and is omitted for brevity.

\noindent  \underline{Part 2}: A visibly fair mechanism is a partitioned priority mechanism.

Fix an arbitrary message profile \(m\in M\) and let
\(a=\psi(m)\) be the allocation produced by any visibly fair mechanism~\(\psi\).
We construct zone selection functions \(\{\mathcal{C}_k\}_{k=1}^n\) and sets \(\{S^k\}_{k=1}^n\) that
would reproduce the same allocation step by step.

\begin{itemize}
\item[] \textbf{Step \(0\)}:  Set \(S^1=S\).

\item[] \textbf{Step \(k\) (\(1\leq k\leq n\))}: Set \(s^k=a_k\). Let \(z_k(a_k)\) be the unique zone in \(Z_k\) containing \(a_k\), and set
\[
\mathcal{C}_k(S^k,m)=z_k(a_k).
\]
If the number of officers assigned to \(s^k\) reaches \(q_{s^k}\), that is
\[
\left|\{\ell\in\{1,\ldots,k\}: a_\ell=s^k\}\right|=q_{s^k},
\]
then \(S^{k+1}\equiv S^{k}\backslash\{s^k\}\). Otherwise, \(S^{k+1}=S^k\).
\end{itemize}

For every nonempty pair \((X,m')\) not specified by this construction, define \(\mathcal{C}_k(X,m')\) arbitrarily as any zone \(z\in Z_k\) with \(X\cap z\neq\emptyset\). Such a zone exists because \(Z_k\) partitions \(S\) and \(X\neq\emptyset\). Thus each \(\mathcal{C}_k\) is defined on its full domain.

Notice that for officer \(i_k\), the set \(S^k\) consists of only those states that still have remaining capacities after the allocation of higher-priority officers is taken into account. That is, each state \(s \in S \setminus S^k\) is filled by higher-priority officers, \(|\{\ell\in\{1,\ldots,k-1\}:a_\ell=s\}|=q_s\). Now we must have \(a_k\in G(S^k,m_k)\cap \mathcal{C}_k(S^k,m)\). If not, then either \(a_k\notin \mathcal{C}_k(S^k,m)\), contradicting the construction of \(\mathcal{C}_k\), or there exists \(s\in S^k\) such that \(s\succ_{m_k}a_k\), which makes the outcome visibly unfair because no higher-priority officer occupies \(s\).

A partitioned priority mechanism working down the priority list, using the zone selection functions \(\{\mathcal{C}_k\}_{k=1}^n\), the sets \(\{S^k\}_{k=1}^n\) defined above, and the procedure given in \Cref{def:PartitionPriorityMech}, would assign every officer the same post as \(\psi\).

Since the construction can be repeated for every profile \(m\), \(\psi\) is a partitioned priority mechanism.
\end{proof}

\subsection{Proof of \Cref{thm:CharacterizationIncontestableMechanisms3}}

\textit{For zonal message space with ranking over zones, $\psi$ is visibly fair if and only if it is a ranked partitioned priority mechanism.}

\begin{proof}
\underline{Part 1}: A ranked partitioned priority mechanism is visibly fair.

The proof follows the same logic as in Theorem 1, Part 1, and is omitted for brevity.
    
\noindent  \underline{Part 2}: A visibly fair mechanism is a ranked partitioned priority mechanism.

Fix an arbitrary message profile \(m\in M\) and let
\(a=\psi(m)\) be the allocation produced by any visibly fair mechanism~\(\psi\).
We construct ranked zone selection functions \(\{\mathcal{C}_k\}_{k=1}^n\) and sets \(\{S^k\}_{k=1}^n\) that
would reproduce the same allocation step by step.

\begin{itemize}
\item[] \textbf{Step \(0\)}:  Set \(S^1=S\).

\item[] \textbf{Step \(k\) (\(1\leq k\leq n\))}: Set \(s^k=a_k\). Let \(z_k(a_k)\) be the unique zone in \(Z_k\) containing \(a_k\), and set
\[
\mathcal{C}_k(S^k,m)=z_k(a_k).
\]
If the number of officers assigned to \(s^k\) reaches \(q_{s^k}\), that is
\[
\left|\{\ell\in\{1,\ldots,k\}:a_\ell=s^k\}\right|=q_{s^k},
\]
then \(S^{k+1}\equiv S^k\backslash\{s^k\}\). Otherwise, \(S^{k+1}=S^k\).
\end{itemize}

For every nonempty pair \((X,m')\) not specified by this construction, extend \(\mathcal{C}_k\) subject to the ranked-zone selection condition over \(Z_k\). If some zone \(z\in Z_k\) satisfies \(X\cap z\neq\emptyset\) and
\[
X\cap z\neq\{\min(z,m'_k)\},
\]
choose such a zone. Otherwise, every zone in \(Z_k\) that intersects \(X\) contains only its minimal state, and we choose a \(\triangleright_{m'_k}\)-maximal zone among the zones in \(Z_k\) that intersect \(X\). This choice satisfies the ranked-zone selection condition.

After performing the construction for every \(k\), we have defined all ranked zone selection functions \(\{\mathcal{C}_k\}_{k=1}^n\) and sets \(\{S^k\}_{k=1}^n\).

Notice that for officer \(i_k\), the set \(S^k\) consists of only those states that still have remaining capacities after the allocation of higher-priority officers is taken into account. That is, each state \(s\in S\setminus S^k\) is filled by higher-priority officers, \(|\{\ell\in\{1,\ldots,k-1\}:a_\ell=s\}|=q_s\). Now we must have \(a_k\in G(S^k,m_k)\cap\mathcal{C}_k(S^k,m)\). If not, then either \(a_k\notin \mathcal{C}_k(S^k,m)\), contradicting the construction of \(\mathcal{C}_k\), or there exists \(s\in S^k\) such that \(s\succ_{m_k}a_k\), which makes the outcome visibly unfair because no higher-priority officer occupies \(s\).

It remains to verify that the constructed zone selection function is a ranked zone selection function at the realized pair \((S^k,m)\). Suppose not. Then, for some step \(k\),
\[
S^k \cap \mathcal{C}_k(S^k,m)
=
\left\{\min\left(\mathcal{C}_k(S^k,m),m_k\right)\right\}
\]
and there is a zone \(z\in Z_k\) such that
\[
z\triangleright_{m_k}\mathcal{C}_k(S^k,m)
\quad\text{and}\quad
\max(z,m_k)\in S^k.
\]
By the definition of the expressed cross-zone comparison,
\[
\max(z,m_k)\succ_{m_k}\min\left(\mathcal{C}_k(S^k,m),m_k\right).
\]
Since \(a_k=\min\left(\mathcal{C}_k(S^k,m),m_k\right)\), officer \(i_k\) visibly prefers the available state \(\max(z,m_k)\) to her assignment, contradicting visible fairness.

A ranked partitioned priority mechanism working down the priority list, using the ranked zone selection functions \(\{\mathcal{C}_k\}_{k=1}^n\), the sets \(\{S^k\}_{k=1}^n\) defined above, and the procedure given in \Cref{def:RankedPartitionPriorityMech}, would assign every officer the same post as \(\psi\).

Since the construction can be repeated for every profile \(m\), \(\psi\) is a ranked partitioned priority mechanism.
\end{proof}

\subsection{Proof of \Cref{thm:Truthful1}}

\textit{ A visibly fair mechanism is strategy-proof if and only if it satisfies
expressiveness and weak availability.}

\begin{proof}
Fix a visibly fair mechanism $\psi$.

First, suppose that $\psi$ is strategy-proof. We show that it satisfies expressiveness and weak availability.

\emph{Expressiveness.} Suppose, toward a contradiction, that $\psi$ is not expressive. Then there exist an officer $i$, a message profile $m=(m_i,m_{-i})\in M$, and an alternative message $\hat{m}_i\in M_i$ such that, letting $x:=\psi(m_i,m_{-i})_i$ and $y:=\psi(\hat{m}_i,m_{-i})_i$, we have $x\neq y$, and $x$ and $y$ are not comparable under $m_i$.

Since $\succ_{m_i}$ is a strict partial order and $x$ and $y$ are incomparable, the relation $R:=\succ_{m_i}\cup\{(y,x)\}$ is acyclic. Indeed, any cycle in $R$ would have to use the added edge $(y,x)$, and would therefore imply a path from $x$ to $y$ under $\succ_{m_i}$, contradicting the incomparability of $x$ and $y$ under $m_i$.

By the finite linear-extension theorem, there exists a strict total order $\succ_i^*$ on $S$ extending $R$. Hence $y\succ_i^*x$, and, because $\succ_i^*$ extends $\succ_{m_i}$, the message $m_i$ is truthful for $\succ_i^*$, i.e., $m_i\in\mathcal{T}_i(\succ_i^*)$. But under the truthful message $m_i$, officer $i$ receives $x$, whereas by deviating to $\hat{m}_i$, officer $i$ receives $y$, and $y\succ_i^*x$. This contradicts strategy-proofness. Therefore $\psi$ is expressive.

\emph{Weak availability.} Suppose, toward a contradiction, that $\psi$ violates weak availability. Then there exist an officer $i$, a strict preference $\succ_i\in\mathcal{Q}_i$, a truthful message $m_i\in\mathcal{T}_i(\succ_i)$, a profile $m_{-i}\in M_{-i}$, and an alternative message $\hat{m}_i\in M_i$ such that, letting $x:=\psi(m_i,m_{-i})_i$ and $y:=\psi(\hat{m}_i,m_{-i})_i$, we have $y\succsim_i x$, but $y$ is not available to officer $i$ under $m=(m_i,m_{-i})$.

If $y=x$, then $y$ is available to $i$ under $m$, because $i$ herself is assigned to $y$ and feasibility implies that strictly fewer than $q_y$ higher-priority officers are assigned to $y$. Hence $y\neq x$. Since $\succ_i$ is strict and $y\succsim_i x$, it follows that $y\succ_i x$. Thus $\hat{m}_i$ is a profitable deviation from the truthful message $m_i$, contradicting strategy-proofness. Therefore $\psi$ satisfies weak availability.

Conversely, suppose that $\psi$ is expressive and satisfies weak availability. We show that $\psi$ is strategy-proof. Suppose, toward a contradiction, that it is not. Then there exist an officer $i$, a strict preference $\succ_i\in\mathcal{Q}_i$, a truthful message $m_i\in\mathcal{T}_i(\succ_i)$, a profile $m_{-i}\in M_{-i}$, and an alternative message $\hat{m}_i\in M_i$ such that
$$
\psi(\hat{m}_i,m_{-i})_i\succ_i\psi(m_i,m_{-i})_i.
$$
Let $x:=\psi(m_i,m_{-i})_i$ and $y:=\psi(\hat{m}_i,m_{-i})_i$. By expressiveness, $x$ and $y$ are comparable under $m_i$. Since $m_i$ is truthful for $\succ_i$ and $y\succ_i x$, it cannot be that $x\succ_{m_i}y$. Because $x\neq y$, comparability implies $y\succ_{m_i}x$. By weak availability, $y$ is available to officer $i$ under $m=(m_i,m_{-i})$, i.e.,
$$
\left|\{j\in I:\psi(m)_j=y \text{ and } \pi(j)<\pi(i)\}\right|<q_y.
$$

Consider the allocation $a=\psi(m)$. Since $a_i=x\neq y$, officer $i$ is not assigned to $y$. If fewer than $q_y$ officers are assigned to $y$ under $a$, then $y$ has spare capacity and $y\succ_{m_i}a_i$, so $a$ is visibly unfair by visible waste. If exactly $q_y$ officers are assigned to $y$, then, because fewer than $q_y$ higher-priority officers are assigned to $y$, some lower-priority officer $j$ with $\pi(i)<\pi(j)$ must be assigned to $y$. Since $y\succ_{m_i}a_i$, $a$ is visibly unfair by visible justified envy.

In either case, $\psi(m)$ is visibly unfair under $m$, contradicting visible fairness. Therefore $\psi$ is strategy-proof.
\end{proof}

\subsection{Proof of \Cref{modulartheorem}}

\textit{Suppose that $H$ is a sequentially solvent modular upper-bound system. Then the Modular Priority Mechanism is visibly fair, strategy-proof, and respects upper-bounds.}

\begin{proof}
     First, it should be clear that the Modular Priority Mechanism is a partitioned priority mechanism: each agent, following the priority order, is associated with a zonal message space, and is matched to the most-preferred state from a zone that still has states with spare capacity. Since the modular upper‑bound system \(H\) satisfies sequential solvency, at every step of the MPM an officer has a state available for which no upper-bound is binding and with spare capacity.

    Next, we show that, for any given set of type-specific modular upper-bound systems and exogenous rankings \( (\blacktriangleright_{i})_{i\in I} \), the Modular Priority Mechanism, represented by the function $\psi$, is strategy-proof. 
    
    Notice first that, by construction, an officer $i$ cannot, by submitting a different message, change the assignment of any officer $j<i$. Since $\psi$ only produces feasible outcomes, this implies that $\psi$ satisfies availability.
    
    Since the zone $z$ from which $i$'s assignment is drawn depends only on the assignments of officers with higher priority and on the exogenous ranking $\blacktriangleright_i$, it is unaffected by $m_i$. Hence the zone to which $i$ is assigned cannot be changed by changing her message. Let $s=\psi(m)_i$. For every $m_i'\in M_i$, we have $\psi(m_i',m_{-i})_i\in z$. Since both $s$ and $\psi(m_i',m_{-i})_i$ belong to the same zone $z$, they are comparable under $m_i$. Therefore, $\psi$ satisfies expressiveness, and by \Cref{coro:Truthful1}, strategy-proofness.

    Finally, we need to show that the Modular Priority Mechanism respects upper-bounds. That is, the final allocation \(\psi(m) = a\) satisfies, for every upper-bound \( (\Xi_h,S_h, k_h) \in H \):
\[
    \left| \left\{ i \in I : a_i \in S_h,\ t_i \in \Xi_h \right\} \right| \leq k_h.
\]

We prove this by induction on the steps \( k = 1, 2, \ldots, n \) of the mechanism. In order to facilitate notation and comprehension, we will denote by $a_k$ the ``tentative'' allocation $a$ by the end of step $k$ of the mechanism.

\medskip

\noindent\textbf{Base Case (\( k = 0 \))}:

At initialization, no officers have been assigned. That is, \( a^0_i = \emptyset \) for all \( i \in I \). Therefore, for any upper-bound \( (\Xi_h,S_h, k_h) \in H \), we have:
\[
    \left| \left\{ i \in I : a^0_i \in S_h,\ t_i \in \Xi_h \right\} \right| = 0 \leq k_h.
\]
Thus, the allocation trivially respects all modular upper-bounds at step \( k = 0 \). Moreover, the initialization-time call to the quotas-update procedure correctly registers any $h \in H$ with $k_h = 0$ as binding before Step~1.

\medskip

\noindent\textbf{Inductive Step}:

Assume that at step \( k - 1 \), the current assignment \( a^{k-1} \) respects all modular upper-bounds; that is, for every upper-bound \( (\Xi_h,S_h, k_h) \in H \):
\[
    N_{h}^{k-1} = \left| \left\{ i \in I : a_i^{k-1} \in S_h,\ t_i \in \Xi_h \right\} \right| \leq k_h.
\]

We need to show that after assigning officer \( i_k \) at step \( k \), the updated assignment \( a^{k} \) also respects all modular upper-bounds.

\medskip

Consider officer \( i_k \) of type \( t_{k} \). According to the mechanism, \( i_k \) is assigned to a state \( s^\ast\) within a zone \( z^{t_{k}}_{\ell^*} \) where the upper-bounds for type \( t_{k} \) involving states in that zone are not yet binding before $i_k$'s assignment---this is tracked by the flag \( B_{\ell^*}^{t_{k}} = \textsc{False} \).

That is, for every upper-bound $(\Xi_h, S_h, k_h) \in H$ for which
$s^* \in S_h$ \emph{and} $t_{k} \in \Xi_h$:
\[
\bigl|\{i \in I : a_i^{k-1} \in S_h,\ t_i \in \Xi_h\}\bigr| < k_h.
\]

To see this, suppose $N_h^{k-1} = k_h$ for some such $h$. By Definition~\ref{def:constraintInducedMessageSpaces}, all states in $z_{\ell^*}^{t_{k}}$ share the same upper-bound signature for type $t_{k}$, so $s^* \in S_h$ implies $z_{\ell^*}^{t_{k}} \subseteq S_h$. The most recent execution of the quotas-update procedure would then have set $B_{\ell^*}^{t_{k}} = \textsc{True}$, contradicting $B_{\ell^*}^{t_{k}} = \textsc{False}$ in Step~2.

\medskip
We now verify $N_h^k \le k_h$ for every $h \in H$:
\begin{itemize}
  \item If $t_{k} \notin \Xi_h$, then $N_h^k = N_h^{k-1} \le k_h$ by the
        inductive hypothesis.
  \item If $t_{k} \in \Xi_h$ but $s^* \notin S_h$, then again
        $N_h^k = N_h^{k-1} \le k_h$.
  \item If $t_{k} \in \Xi_h$ and $s^* \in S_h$, the strict slack established
        above gives $N_h^k = N_h^{k-1} + 1 \le k_h$.
\end{itemize}
Therefore, the resulting assignment $a_k$ respects all modular upper-bounds.

\medskip

\noindent\textbf{Conclusion}:

In both cases, the assignment \( a^{k} \) at step \( k \) respects all modular upper-bounds. By induction, the final allocation \( a^{n} \) produced by the mechanism satisfies:
\[
    \left| \left\{ i \in I : a_i^{n} \in S_h,\ t_i \in \Xi_h \right\} \right| \leq k_h,
\]
for every upper-bound \( h\in H \).

\end{proof}

\subsection{Proof of \Cref{thm:NoMechanismAllThree}}

\textit{There exists a sequentially solvent modular upper-bound system $H$ for which no mechanism is simultaneously visibly fair, respects $H$ at every message profile, and is constrained Pareto efficient.}

\begin{proof}
Consider the following problem. There are three officers $I=\{i_1,i_2,i_3\}$, with $\pi(i_1)<\pi(i_2)<\pi(i_3)$, and two states $S=\{s_1,s_2\}$, with $q_{s_1}=q_{s_2}=2$. All officers have the same type $t$. The modular upper-bound system consists of the single cap $H=\{(\{t\},\{s_1\},1)\}$. Thus, at most one officer may be assigned to $s_1$. Since all three officers must be assigned and $q_{s_2}=2$, every allocation that respects $H$ assigns exactly one officer to $s_1$ and two officers to $s_2$.

First note that $H$ is sequentially solvent. Given any feasible partial allocation of officers other than $i$, if the cap on $s_1$ is not binding and $s_1$ has spare capacity, then $i$ can be assigned to $s_1$; otherwise, because at most two other officers have been assigned and the partial allocation respects $H$, state $s_2$ has spare capacity and $i$ can be assigned to $s_2$.

Suppose, toward a contradiction, that a mechanism $\psi$ is visibly fair, respects $H$ at every message profile, and is constrained Pareto efficient.

Say that officer $i$ can reveal $s_1$ over $s_2$ if there exists a message $m_i\in M_i$ such that $s_1\succ_{m_i}s_2$. At most one officer can reveal $s_1$ over $s_2$. To see this, suppose two officers, say $i$ and $j$, have messages $m_i$ and $m_j$ with $s_1\succ_{m_i}s_2$ and $s_1\succ_{m_j}s_2$. Complete the profile arbitrarily. Since $\psi$ respects $H$, at most one officer is assigned to $s_1$. Hence at least one of $i$ and $j$ is assigned to $s_2$. But $s_1$ has capacity two and only one officer can be assigned there under $H$, so $s_1$ has spare capacity. The officer assigned to $s_2$ visibly prefers $s_1$ to $s_2$, yielding visible unfairness. This contradicts visible fairness.

Therefore at least two officers cannot reveal $s_1$ over $s_2$. Let these two officers be $r$ and $\ell$. Consider officer $r$. Since truthful messages exist, there is some message $m_r^0\in\mathcal{T}_r(s_1\succ_r s_2)$. Because $r$ cannot reveal $s_1$ over $s_2$, this message cannot satisfy $s_1\succ_{m_r^0}s_2$. Since it is truthful for $s_1\succ_r s_2$, it also cannot satisfy $s_2\succ_{m_r^0}s_1$. Therefore $s_1$ and $s_2$ are incomparable under $m_r^0$. Hence $m_r^0$ is truthful both for $s_1\succ_r s_2$ and for $s_2\succ_r s_1$. Analogously, there exists a message $m_\ell^0$ that is truthful for both strict preferences of officer $\ell$.

Let $p$ denote the remaining officer. Choose any message $m_p^-\in\mathcal{T}_p(s_2\succ_p s_1)$, which exists by the truthful-message existence assumption. Consider the fixed message profile $m^0=(m_r^0,m_\ell^0,m_p^-)$.

Now compare two true preference profiles. Under the first profile, officer $r$ is the only officer who prefers $s_1$: $s_1\succ_r s_2$, $s_2\succ_\ell s_1$, and $s_2\succ_p s_1$. The message profile $m^0$ is truthful for this preference profile. Constrained Pareto efficiency requires $r$ to be assigned to $s_1$. Indeed, if some other officer were assigned to $s_1$, then that officer would prefer $s_2$, while $r$ would prefer $s_1$; swapping their assignments would respect $H$ and Pareto improve the allocation.

Under the second profile, officer $\ell$ is the only officer who prefers $s_1$: $s_2\succ_r s_1$, $s_1\succ_\ell s_2$, and $s_2\succ_p s_1$. The same message profile $m^0$ is truthful for this second preference profile. By the same argument, constrained Pareto efficiency requires $\ell$ to be assigned to $s_1$.

But $\psi(m^0)$ is a single allocation, and any allocation respecting $H$ assigns only one officer to $s_1$. It cannot assign both $r$ and $\ell$ to $s_1$. This contradiction proves the result.
\end{proof}

\subsection{Proof of \Cref{prop:IASVisiblyFair}}\label{prop:IASVisiblyFairproof}

\textit{The mechanism induced by the interleaved serial dictatorship procedure is a ranked partitioned priority mechanism and is therefore visibly fair.}

\begin{proof}
For officer~\(i_k\) processed at step~\(k\), given the set of states with remaining capacity \(X=S^k\) and the message profile~\(m\), define \(\mathcal{C}_k^{\textup{IAS}}(X,m)\) as follows. Let \(z^1,z^2,\ldots,z^\ell\) be the zones in \(Z_k\) ordered by \(\triangleright_{m_k}\), and for each zone \(z^j\), let \(s^j_1,s^j_2,\ldots\) be the states in \(z^j\) in decreasing order of preference under~\(m_k\). The \textbf{IAS interleaving zone selection function} \(\mathcal{C}_k^{\textup{IAS}}(X,m)\) returns the zone containing the first state in the interleaving sequence
\[
s^1_1,\; s^2_1,\; \ldots,\; s^\ell_1,\; s^1_2,\; s^2_2,\; \ldots,\; s^\ell_2,\; s^1_3,\; \ldots
\]
that belongs to~$X$.

We verify that $\mathcal{C}_k^{\textup{IAS}}$ is a ranked zone selection function and that the resulting allocation coincides with the interleaved assignment.

Fix a step~$k$, an available set $X = S^k \neq \emptyset$, and a message profile~$m$. Let $z^j = \mathcal{C}_k^{\textup{IAS}}(X, m)$, and let $s^j_r$ be the first state in the interleaving sequence that belongs to~$X$ (so $s^j_r \in X \cap z^j$, where $r$ and $j$ identify the round and zone in which $s^j_r$ appears).

\smallskip

\noindent\textit{Condition~1:} $X \cap \mathcal{C}_k^{\textup{IAS}}(X,m) \neq \emptyset$.

By construction, $s^j_r \in X \cap z^j$, so the intersection is non-empty.

\smallskip

\noindent\textit{Condition~2:} Either \(X \cap z^j \neq \bigl\{\min(z^j,m_k)\bigr\}\), or there is no zone \(z'\in Z_k\) with \(z'\triangleright_{m_k}z^j\) and \(\max(z',m_k)\in X\).

We show the second condition holds. Take any zone $z^{j'}$ with $j' < j$ (i.e., $z^{j'} \triangleright_{m_k} z^j$). state $s^{j'}_1 = \max(z^{j'}, m_k)$ appears at position~$j'$ in round~1 of the interleaving sequence, whereas $s^j_r$ appears at position $(r{-}1)\ell + j$. Since $j' < j \leq (r{-}1)\ell + j$, state $s^{j'}_1$ precedes $s^j_r$ in the sequence. Because $s^j_r$ is the first element of the sequence belonging to~$X$, we have $s^{j'}_1 \notin X$. Hence no zone ranked higher than $z^j$ under $\triangleright_{m_k}$ has its most-preferred state in~$X$.

When $j=1$ (the top-ranked zone), no zone ranks higher, so the condition holds vacuously.

\smallskip

\noindent\textit{Correctness of the allocation.}
At step~$k$, the ranked partitioned priority mechanism assigns $a_k = G(S^k, m_k) \cap z^j$. Since states $s^j_1, \ldots, s^j_{r-1}$ all precede~$s^j_r$ in the interleaving sequence and hence do not belong to $S^k$, the state~$s^j_r$ is the $m_k$-maximal element of $S^k \cap z^j$, so $G(S^k, m_k) \cap z^j = \{s^j_r\}$. This coincides with the state assigned by the interleaving rule.

Since $\mathcal{C}_k^{\textup{IAS}}$ is a ranked zone selection function and the allocation matches the ranked partitioned priority mechanism, visible fairness follows from \Cref{thm:CharacterizationIncontestableMechanisms3}.
\end{proof}

\section{Proofs that examples satisfy sequential solvency}\label{app:seq-solvency}

Throughout this appendix we verify the current definition of sequential solvency. Thus, in each proof we fix an arbitrary officer $i\in I$, an arbitrary subset $J\subseteq I\setminus\{i\}$, and an arbitrary feasible partial allocation $a^J=(a_j)_{j\in J}$ that respects $H$. We then exhibit a state with spare capacity at which officer $i$ can be placed without violating any modular upper-bound relevant to her type.

\bigskip

\textbf{Proof that Example \ref{ex:example2} satisfies sequential solvency.}

The three doctor types are symmetric up to relabeling of regions, so fix an arbitrary doctor $i$ with $t_i=1$. Let
\[
    B=\{s_4,s_7\}
\]
be the rural hospitals outside $R_1$, and let
\[
    b=\bigl|\{j\in J:a_j\in B\}\bigr|,
    \qquad
    u=\bigl|\{j\in J:a_j\in U\}\bigr|.
\]
The set $B$ has total capacity $8$, and no type--$1$ upper-bound applies to states in $B$.

Suppose, toward a contradiction, that no state is admissible for $i$. If $b<8$, then some rural hospital outside $R_1$ has spare capacity and is unconstrained for type~$1$, contradicting the supposition. Hence $b=8$.

If the urban cap is binding, then $u=19$. Since $B\cap U=\emptyset$, this would imply
\[
    |J|\geq b+u=8+19=27,
\]
contradicting $|J|\leq |I|-1=26$. Therefore the urban cap is not binding, so $u<19$.

Because the urban cap is not binding, any vacant urban hospital outside $R_1$ would be admissible for $i$; the local cap for type~$1$ does not apply outside $R_1$. Hence all $16$ seats in $U\setminus R_1$ must be occupied. Together with $b=8$, this means all $24$ seats outside $R_1$ are occupied. Since $|J|\leq 26$, at most two officers in $J$ are assigned to $R_1$. Consequently the local cap $(\{1\},R_1,6)$ is not binding, and $R_1$ has spare capacity. As the urban cap is also not binding, any vacant state in $R_1$ is admissible for $i$, again a contradiction.

Thus an admissible state always exists for a type--$1$ doctor. The cases $t_i=2$ and $t_i=3$ follow by relabeling the regions. Hence Example~\ref{ex:example2} satisfies sequential solvency.

\bigskip

\textbf{Proof that Example \ref{ex:motivation_formal} satisfies sequential solvency.}

Fix any officer $i$. There are only two officers, so $J$ is either empty or contains the other officer. If $J=\emptyset$, state $s_1$ is admissible. If $J=\{j\}$ and $a_j=s_1$, then the cap on $s_1$ is binding, but $s_2$ has spare capacity and no upper-bound applies there. If $J=\{j\}$ and $a_j=s_2$, then $s_2$ is full, but $s_1$ has spare capacity and its cap is not binding. Hence an admissible state exists in all cases, so Example~\ref{ex:motivation_formal} satisfies sequential solvency.

\bigskip

\textbf{Proof that Example \ref{ex:IASspec1} satisfies sequential solvency.}

The two types play symmetric roles, so fix an arbitrary officer $i$ with $t_i=z_1$. The away zone $z_2$ has total capacity $4$ and no type--$z_1$ upper-bound applies there.

If any seat in $z_2$ is vacant, it is admissible for $i$. Otherwise $z_2$ is full, so four officers in $J$ are assigned to $z_2$. Since $|J|\leq |I|-1=5$, at most one officer in $J$ is assigned to the home zone $z_1$. Therefore $z_1$ has spare capacity and the insider cap $(\{z_1\},z_1,2)$ is not binding. A vacant seat in $z_1$ is then admissible for $i$.

The case $t_i=z_2$ is symmetric. Hence Example~\ref{ex:IASspec1} satisfies sequential solvency.

\bigskip

\textbf{Proof that Example \ref{ex:IASspec2} satisfies sequential solvency.}

The three types and zones are symmetric up to cyclic relabeling, so fix an arbitrary officer $i$ with $t_i=z_1$. For each destination zone $z_\ell$, let
\[
    o_\ell=\bigl|\{j\in J:a_j\in z_\ell\}\bigr|,
    \qquad
    x_\ell=\bigl|\{j\in J:a_j\in z_\ell,\ t_j=z_1\}\bigr|.
\]
Each zone has capacity $4$, and there are only two type--$z_1$ officers in $I\setminus\{i\}$, so
\[
    x_1+x_2+x_3\leq 2.
\]

Suppose, toward a contradiction, that $i$ has no admissible state. Since total capacity is $12$ and $|J|\leq 8$, at least four seats are vacant. For every zone with a vacant seat, the corresponding type--$z_1$ cap must therefore be binding; otherwise $i$ could be placed at a vacant seat in that zone. Thus:
\[
    o_1<4 \Rightarrow x_1=1,
    \qquad
    o_2<4 \Rightarrow x_2=2,
    \qquad
    o_3<4 \Rightarrow x_3=2.
\]
These implications can hold for at most one zone: two vacant zones would require at least three type--$z_1$ officers in $J$, contradicting $x_1+x_2+x_3\leq 2$.

Hence, under the no-admissible-state supposition, at most one zone can have vacancies. But there are at least four vacant seats and each zone has capacity exactly four. Therefore either more than four vacancies would have to lie in one zone, which is impossible, or exactly one zone is empty. In the latter case the relevant type--$z_1$ cap in that zone is not binding because $x_\ell=0$, again contradicting the no-admissible-state supposition.

Therefore some admissible state exists for $i$. The other types follow by cyclic relabeling. Hence Example~\ref{ex:IASspec2} satisfies sequential solvency.

\bigskip

\textbf{Proof that Example \ref{ex:IASspec3} satisfies sequential solvency.}

Fix an arbitrary officer $i$ with $t_i=s_1$. For type~$s_1$, every state in the away zone
\[
    z_2=\{s_4,s_5,s_6\}
\]
is unconstrained: for every $s\in z_2$, $H^{s,s_1}=\emptyset$. The total capacity of $z_2$ is $6$, while $|J|\leq |I|-1=5$. Hence at least one seat in $z_2$ is vacant, and that seat is admissible for $i$.

The same argument, with $s_1$ and $s'$ interchanged, applies to officers of type~$s'$. Hence Example~\ref{ex:IASspec3} satisfies sequential solvency.

\bigskip

\textbf{Sequential solvency of the upper-bound system used in the proof of \Cref{thm:NoMechanismAllThree}.}

The construction in the proof of \Cref{thm:NoMechanismAllThree} has three officers of a single type $t$, two states $s_1,s_2$ with $q_{s_1}=q_{s_2}=2$, and the single upper-bound $(\{t\},\{s_1\},1)$. Fix any officer $i$, any $J\subseteq I\setminus\{i\}$, and any feasible partial allocation $a^J$ respecting this bound. Since $|J|\leq 2$, either no officer in $J$ is assigned to $s_1$, in which case $s_1$ has spare capacity and the bound on $s_1$ is not binding, or exactly one officer in $J$ is assigned to $s_1$, in which case the bound on $s_1$ is binding but $s_2$ has spare capacity and no upper-bound applies to $s_2$. In either case, officer $i$ has an admissible state.

\section{Appendix: Additional Efficiency Results}
\label{sec:appendix-efficiency}

As shown in \Cref{thm:NoMechanismAllThree}, visibly fair mechanisms cannot, in general, elicit sufficient information to implement constrained Pareto efficient allocations while respecting modular upper-bounds. Nonetheless, if each officer's message space can be conditioned on the assignments of higher-priority officers, constrained Pareto efficiency can be recovered. To see that the impossibility is purely informational, it suffices to consider the following sequential procedure. The more challenging design problem---and the focus of the main text---is how to construct the message spaces for a \emph{static} mechanism that simultaneously upholds visible fairness and respects distributional constraints, a task addressed by the Modular Priority Mechanism and the constraint-induced message spaces developed in \Cref{sec:DistributionalObjectives}.

\begin{defi}\label{def:DynamicModularPriorityMech}
The \textbf{Dynamic Modular Priority Mechanism} $\psi$ operates as
follows.

\smallskip
\noindent\textbf{Quotas-update procedure.}
Let $(B^{t})_{t\in T}$ denote the current binding sets.
For each $h=(\Xi_{h},S_{h},k_{h})\in H$:
\begin{itemize}
  \item Let
        $N_{h}=\bigl|\{\,i\in I:a_{i}\in S_{h}\text{ and }
        t_{i}\in\Xi_{h}\,\}\bigr|$.
  \item If $N_{h}=k_{h}$, then for every $t\in\Xi_{h}$, add $h$ to
        $B^{t}$.
\end{itemize}

\smallskip
\noindent\textbf{Initialization.}
\begin{itemize}
  \item For each officer $i\in I$, set $a_{i}=\varnothing$.
  \item For each type $t\in T$, initialize $B^{t}=\varnothing$.
  \item Apply the quotas-update procedure (so that any upper-bound
        $h\in H$ with $k_{h}=0$ is registered as binding).
\end{itemize}

\smallskip
\noindent\textbf{Sequential assignment.}
Process the officers in the order $(i_{1},\dots,i_{n})$. For each
$k=1,\dots,n$:
\begin{enumerate}
  \item Construct the zonal message space $M_{i_{k}}$ that partitions
        $S$ into
        \[
          z_{1}\;=\;\Bigl\{\,s\in S\;:\;
              \bigl(\forall (\Xi_{h},S_{h},k_{h})\in B^{t_{{k}}},
              \;s\notin S_{h}\bigr)
              \text{ and }
              \bigl|\{\,j<k:a_{i_{j}}=s\,\}\bigr|<q_{s}\,\Bigr\},
        \]
        \[
          z_{2}\;=\;S\setminus z_{1}.\footnote{Apply sequential solvency to officer $i_k$ and the partial allocation $(a_{i_1},\dots,a_{i_{k-1}})$ of already assigned officers. It yields a state with strictly positive residual capacity that lies outside every $S_h$ whose upper-bound is already binding for type $t_{k}$. Hence this state belongs to $z_1$, and therefore $z_1\neq\varnothing$. If $z_2=\varnothing$, the offered message space is the one-zone zonal message space with zone $z_1=S$.}
        \]

  \item Elicit from officer $i_{k}$ a message $m_{i_{k}}\in M_{i_{k}}$.
        Let $s^{\ast}$ be the $m_{i_{k}}$-most-preferred state in
        $z_{1}$.
  \item Set $a_{i_{k}}=s^{\ast}$.
  \item Apply the quotas-update procedure.
\end{enumerate}

\smallskip
\noindent\textbf{Outcome.}
After processing $i_{1},\dots,i_{n}$, the mechanism returns the
allocation $a=(a_{i})_{i\in I}$.
\end{defi}

The Dynamic Modular Priority Mechanism, therefore, sequentially assigns officers, dynamically tailoring the menu of available states for each officer based on previous assignments. At each officer’s turn, the mechanism constructs a zone containing all states with remaining capacity that do not belong to any subset of states where a modular upper-bound for the officer’s type has already become binding. The officer is then matched to her most-preferred state among these feasible alternatives. By construction, this ensures visible fairness since each officer always obtains their top choice from that zone. Furthermore, by dynamically adjusting the zones and enabling officers to fully express their preferences within these constraints, the mechanism attains constrained efficiency: any allocation that improves an officer’s assignment without harming others would necessarily violate at least one modular upper-bound. 

\paragraph{Dynamic interpretation of visible fairness and feasibility.}
For the Dynamic Modular Priority Mechanism, visible fairness is evaluated with respect to the realized history and the realized messages. At the history $h_k$ faced by officer $i_k$, the mechanism offers the zonal message space $M_{i_k}(h_k)$ and records the submitted message $m_{i_k}\in M_{i_k}(h_k)$. The final allocation is visibly fair if no officer $i_k$ visibly envies a lower-priority officer and no officer visibly prefers a state with unused physical capacity, where both comparisons are evaluated using the realized relation $\succ_{m_{i_k}}$. Respecting modular upper-bounds is evaluated on the final allocation in the same sense as in \Cref{def:respectsModularUpperBounds}.

\paragraph{Game-theoretic formulation.}
Because the mechanism is sequential, its incentive properties are naturally formulated in terms of histories and strategies.  Let $h_k=(a_{i_1},\dots,a_{i_{k-1}})$ denote the \emph{history} at step~$k$ (with $h_1=\emptyset$).  The message space $M_{i_k}(h_k)$ presented to officer~$i_k$ depends on~$h_k$ through the zone~$z_1$ defined in step~1 of the mechanism.  A \emph{strategy} for officer~$i_k$ is a mapping $\sigma_{i_k}$ that selects, for every history~$h_k$, a message $\sigma_{i_k}(h_k)\in M_{i_k}(h_k)$.  The strategy is \emph{truthful} if $\sigma_{i_k}(h_k)\in\mathcal{T}_{i_k}(\succ_{i_k})$ for every~$h_k$.  Because the zone~$z_1$ is determined entirely by~$h_k$ and does not depend on~$m_{i_k}$, officer~$i_k$'s assignment is a function of $(h_k,m_{i_k})$ alone.  Writing it as $\psi_k(h_k,m_{i_k})$, the mechanism is \emph{strategy-proof in dominant strategies} if, for every officer~$i_k$, every preference $\succ_{i_k}\!\in\mathcal{Q}_{i_k}$, every history~$h_k$, every truthful message $m_{i_k}\in\mathcal{T}_{i_k}(\succ_{i_k})\cap M_{i_k}(h_k)$, and every alternative message $m'_{i_k}\in M_{i_k}(h_k)$:
\[
  \psi_k(h_k,\,m_{i_k})\;\succsim_{i_k}\;\psi_k(h_k,\,m'_{i_k}).
\]
\begin{thm}
\label{dynamicModPriorityMechVisiblyFairModularAndSP}
Suppose that $H$ is a sequentially solvent modular upper-bound system. Then the Dynamic Modular Priority Mechanism is visibly fair, constrained Pareto efficient, strategy-proof, and respects upper-bounds. 
\end{thm}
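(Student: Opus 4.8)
The plan is to establish the four properties of the Dynamic Modular Priority Mechanism in sequence, exploiting the fact that its construction mirrors a serial-dictatorship-within-a-dynamically-computed-zone.

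First I would prove \emph{respecting modular upper-bounds}. The key observation is that the mechanism maintains a running set $B_k^t$ of bounds that are binding for type $t$ after $k$ officers have been processed, and in Step~1 it excludes from the feasible zone $z_1$ every state belonging to a subset $S_h$ whose bound $h$ is already binding for officer $i_k$'s type. Thus once a bound $h = (\Xi_h, S_h, k_h)$ reaches its ceiling $k_h$, no further officer of a type in $\Xi_h$ can be assigned to any state in $S_h$. Since a bound can only transition from non-binding to binding (counts are monotone increasing as officers are assigned), and the transition happens exactly when $N_h = k_h$, the count $N_h$ never exceeds $k_h$. A short inductive argument on the processing step formalizes this. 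I would also need to invoke \emph{sequential solvency} to guarantee that $z_1$ is always nonempty, so no officer is left unmatched --- this is the role of Definition~\ref{def:SequentialSolvency}.

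Second, for \emph{visible fairness}, the mechanism is a dynamic analogue of an $m$-queue allocation: each officer $i_k$ receives her most-preferred state within $z_1$ according to her elicited message $m_{i_k}$. Because the relevant message space is zonal and the officer is assigned the top state of the single feasible zone, no state she can compare to her assignment and strictly prefers is simultaneously available and unfilled --- any strictly preferred comparable state was either already at capacity (excluded by the capacity clause defining $z_1$) or sits in $z_2$ and is hence incomparable under $m_{i_k}$. I would verify both clauses of Definition~\ref{def:visibleFairness}: condition (i) fails to be violated because a higher-priority officer was processed earlier and any comparable state she holds was available to her; condition (ii) fails because any state with spare capacity that $i_k$ can compare to her assignment lies in $z_1$ and was therefore weakly worse.

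Third, for \emph{constrained Pareto efficiency} and \emph{strategy-proofness}, I would argue these together since both hinge on the observation that, conditional on the history of prior assignments, officer $i_k$ faces a genuine \emph{menu} --- the set $z_1$ of currently feasible states --- and receives her truly most-preferred element of that menu when reporting truthfully. Strategy-proofness then follows from a standard menu argument: the zone $z_1$ presented to $i_k$ depends only on the assignments of higher-priority officers (which $i_k$ cannot influence, since she is processed after them and they cannot see her report), so $i_k$ can do no better than pick her favorite available state, which truthful reporting secures. For constrained Pareto efficiency, the hard part --- and the main obstacle --- is showing that no constraint-respecting reallocation can make some officer strictly better off without harming another. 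The argument is a serial, priority-order induction: the highest-priority officer already receives her favorite feasible state among all states reachable without violating a bound, so she cannot be improved; proceeding down the priority order, any Pareto improvement would have to give some officer a state she strictly prefers to her assignment, but that state was excluded from her zone $z_1$ only because it was either full (occupied by equal- or higher-priority officers whose assignments would then have to change, contradicting the no-harm requirement) or because admitting her would breach a binding modular bound --- which the improving allocation is forbidden to violate. Making this trade-off argument airtight, in particular handling the interaction between the capacity exclusions and the modular-bound exclusions and ruling out improving cycles among equal-priority-but-different-type officers, is where the real care is required.
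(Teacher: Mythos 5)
Your proposal is correct and follows essentially the same route as the paper's proof: an induction on the processing order for the upper-bounds, the ``top element of the feasible zone'' argument for visible fairness, the menu argument for strategy-proofness, and a highest-priority-improved-officer contradiction for constrained Pareto efficiency (your worry about ``equal-priority'' officers is moot, since $\pi$ is a strict ranking). The only cosmetic difference is that you explicitly invoke sequential solvency to keep $z_1$ nonempty, which the paper's proof of this theorem leaves implicit.
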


\begin{proof}
Fix a problem, a modular upper-bound system $H$ satisfying sequential solvency, and a profile of truthful preferences. Let
\[
    a=(a_i)_{i\in I}
\]
be the allocation produced by the Dynamic Modular Priority Mechanism.

We first record two observations about the construction of the feasible zone. At the beginning of the step in which officer $i_k$ is processed, let
\[
    a^{k-1}=(a_{i_1},\ldots,a_{i_{k-1}})
\]
denote the partial allocation of the officers already assigned. For officer $i_k$, the mechanism defines
\[
z_1^k =
\left\{
s\in S:
\begin{array}{l}
| \{j<k:a_{i_j}=s\}|<q_s, \text{ and}\\
s\notin S_h \text{ for every } h=(\Xi_h,S_h,k_h)\in B^{t_{k}}_{k-1}
\end{array}
\right\},
\]
and
\[
    z_2^k=S\setminus z_1^k.
\]
Thus $z_1^k$ is the set of states that are currently feasible for officer $i_k$: they have spare capacity and are not contained in any currently binding upper-bound relevant to type $t_{k}$. The complementary zone $z_2^k$ contains states that are unavailable either because their physical capacity is exhausted or because assigning officer $i_k$ there would violate a binding upper-bound.

Sequential solvency implies that $z_1^k\neq\emptyset$ at every step. Indeed, the partial allocation $a^{k-1}$ respects $H$ by the induction argument below. Applying sequential solvency to officer $i_k$ and the partial allocation $a^{k-1}$ gives a state $s$ such that $s$ has spare capacity and, for every upper-bound $h\in H_{s,t_{k}}$, the count of previously assigned officers covered by $h$ is strictly below $k_h$. Equivalently, $s$ is not in any set $S_h$ whose upper-bound is already binding for type $t_{k}$. Hence $s\in z_1^k$.

This observation also covers initially binding constraints. If some upper-bound has $k_h=0$, then it is binding at the empty allocation. Therefore, for every type $t\in \Xi_h$, every state in $S_h$ is placed in the initially unavailable zone $z_2^1$, not in $z_1^1$. Sequential solvency guarantees that, even after excluding all such initially binding sets, each first officer still has at least one admissible state in $z_1^1$.

\medskip

\noindent\textbf{Step 1: The mechanism respects upper-bounds.}

We prove by induction on the assignment order that every partial allocation generated by the mechanism respects $H$.

At step $0$, no officer is assigned. Hence, for each upper-bound $h=(\Xi_h,S_h,k_h)$,
\[
    \bigl|\{i:a_i\in S_h,\ t_i\in\Xi_h\}\bigr|=0\leq k_h.
\]
If $k_h=0$, then $h$ is already binding at this initial history, and the mechanism records this by excluding $S_h$ from the feasible zone of every type in $\Xi_h$.

Now suppose that, before officer $i_k$ is assigned, the partial allocation $a^{k-1}$ respects $H$. Officer $i_k$ is assigned to some state
\[
    a_{i_k}\in z_1^k.
\]
Consider any upper-bound $h=(\Xi_h,S_h,k_h)$.

If $t_{k}\notin\Xi_h$ or $a_{i_k}\notin S_h$, then the count for $h$ is unchanged. If $t_{k}\in\Xi_h$ and $a_{i_k}\in S_h$, then, because $a_{i_k}\in z_1^k$, the upper-bound $h$ was not binding before assigning $i_k$. Therefore,
\[
    \bigl|\{j<k:a_{i_j}\in S_h,\ t_{j}\in\Xi_h\}\bigr|<k_h.
\]
After assigning $i_k$ to $a_{i_k}$, this count increases by exactly one and therefore remains weakly below $k_h$. Thus no upper-bound is violated.

By induction, the final allocation $a$ respects all modular upper-bounds.

\medskip

\noindent\textbf{Step 2: The mechanism is visibly fair.}

Fix officer $i_k$. At the history faced by officer $i_k$, the mechanism partitions $S$ into two zones, $z_1^k$ and $z_2^k$, where $z_1^k$ contains exactly the states that are feasible for $i_k$ given prior assignments and currently binding upper-bounds relevant to her type. The message space offered to $i_k$ is zonal with respect to this partition, and the mechanism assigns $i_k$ her $m_{i_k}$-maximal state in $z_1^k$.

Hence no state in $z_1^k$ is ranked above $a_{i_k}$ under $m_{i_k}$, and no state in $z_2^k$ is comparable to $a_{i_k}$ under $m_{i_k}$. Therefore there is no state $s$ with $s\succ_{m_{i_k}}a_{i_k}$ that is either assigned to a lower-priority officer or left with spare capacity. This holds for every realized history and every officer, so the final allocation is visibly fair.

\medskip

\noindent\textbf{Step 3: The mechanism is strategy-proof.}

Fix an officer $i_k$ and fix any history generated prior to reporting. This history consists of the assignments of officers $i_1,\ldots,i_{k-1}$ and the resulting set of binding upper-bounds. Both are determined before $i_k$ sends her message. Hence the partition
\[
    \{z_1^k,z_2^k\}
\]
that officer $i_k$ faces is independent of her message.

Given this fixed partition, the mechanism assigns $i_k$ the state that her message ranks highest within $z_1^k$. A truthful message ranks states within $z_1^k$ according to her true preference. Hence truthful reporting gives her her truly best state in the fixed feasible set $z_1^k$. Any misreport can only select another state in the same set $z_1^k$, which cannot be strictly preferred to the truthful assignment.

This argument holds after every possible history and for every behavior of the other officers. Therefore truthful reporting is a weakly dominant strategy, and the Dynamic Modular Priority Mechanism is strategy-proof.

\medskip

\noindent\textbf{Step 4: The mechanism is constrained Pareto efficient.}

Suppose, toward a contradiction, that there exists another allocation
\[
    a'=(a'_i)_{i\in I}
\]
that respects all modular upper-bounds and Pareto dominates $a$, with at least one officer strictly better off. Let $i_k$ be the highest-priority officer who is strictly better off under $a'$:
\[
    a'_{i_k}\succ_{i_k} a_{i_k},
\]
and for every $\ell<k$,
\[
    a'_{i_\ell}=a_{i_\ell}.
\]
Such an officer exists because $a'$ strictly Pareto improves upon $a$ for at least one officer.

Consider the step at which $i_k$ was assigned by the mechanism. We show that $a'_{i_k}\in z_1^k$ at that step.

First, $a'_{i_k}$ had spare physical capacity at the time $i_k$ was processed. If not, then all seats of $a'_{i_k}$ were already occupied in $a$ by officers with priority higher than $i_k$. Since $a'$ agrees with $a$ for all higher-priority officers, those same seats would also be occupied by higher-priority officers in $a'$, leaving no capacity for $i_k$ at $a'_{i_k}$, contradicting feasibility of $a'$.

Second, $a'_{i_k}$ was not excluded by any binding upper-bound relevant to type $t_{k}$. Suppose instead that there exists an upper-bound
\[
    h=(\Xi_h,S_h,k_h)
\]
such that $t_{k}\in\Xi_h$, $a'_{i_k}\in S_h$, and $h$ was already binding at the beginning of step $k$. Then exactly $k_h$ higher-priority officers of types in $\Xi_h$ were already assigned to states in $S_h$ under $a$. Since $a'$ agrees with $a$ for all higher-priority officers, the same $k_h$ officers are assigned to $S_h$ under $a'$. Assigning $i_k$ also to $S_h$ under $a'$ would raise the count to at least $k_h+1$, contradicting the assumption that $a'$ respects $H$.

Therefore $a'_{i_k}\in z_1^k$. But the Dynamic Modular Priority Mechanism assigned $i_k$ her most-preferred state in $z_1^k$ under a truthful message. Hence it cannot be that
\[
    a'_{i_k}\succ_{i_k} a_{i_k},
\]
a contradiction.

Thus no allocation that respects the upper-bounds can Pareto dominate the mechanism's outcome. Therefore the outcome is constrained Pareto efficient.

Combining the four steps, the Dynamic Modular Priority Mechanism is visibly fair, respects upper-bounds, strategy-proof, and constrained Pareto efficient.
\end{proof}

The Dynamic Modular Priority Mechanism is, by design, a standard serial dictatorship adapted to account for binding upper-bounds; its role here is to diagnose the source of the static impossibility rather than to serve as a practical recommendation.  It is also worth noting that its sequential elicitation contrasts with the single round required by a static mechanism. While this added complexity may be negligible in small markets, it can become impractical in larger settings where the number of officers is substantial.

\subsection{Visible efficiency and true Pareto efficiency}

This subsection explores the relationship between visible efficiency and fairness in allocation mechanisms. An allocation is visibly efficient under message profile $m$ if no alternative allocation visibly Pareto dominates it (meaning no reallocation would make all affected agents better off according to their reported preferences). The key results show that visibly fair allocations are always visibly efficient, but the converse does not hold. When comparing visible efficiency to true Pareto efficiency (based on actual preferences), every Pareto efficient allocation is visibly efficient under truthful messages, but visible efficiency does not guarantee Pareto efficiency—even visibly fair allocations can be Pareto inefficient. Furthermore, when message $\hat{m}$ contains more preference information than $m$, allocations that are visibly fair or efficient under $\hat{m}$ remain so under $m$, but not vice versa, demonstrating that visible fairness and efficiency depend critically on the message spaces.

An allocation $a^\prime$ \textbf{visibly Pareto dominates} allocation $a$ under message profile $m$ if $a^\prime\neq a$ and, for every officer $i\in I$ such that $a^\prime_i\neq a_i$, we have $a^\prime_i\succ_{m_i}a_i$. An allocation $a$ is \textbf{visibly efficient under $m$} if there is no allocation $a^\prime\in\mathcal{A}$ that visibly Pareto dominates $a$ under $m$. A mechanism $\psi$ is \textbf{visibly efficient} if $\psi(m)$ is visibly efficient under $m$ for every $m\in M$.

\begin{thm}
\label{thm:Efficiency}
For any message profile $m\in M$, the following statements are true.
\begin{enumerate}
    \item Every visibly fair allocation is visibly efficient.
    \item A visibly efficient allocation may not be visibly fair. 
    
\end{enumerate}

\end{thm}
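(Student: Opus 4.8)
The plan is to prove Part 1 by contraposition and Part 2 by a minimal counterexample, reading everything directly off Definition~\ref{def:visibleFairness} and the definition of visible efficiency.

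For Part 1, I would show that any allocation that is \emph{not} visibly efficient under $m$ is visibly unfair under $m$. Suppose $a$ is visibly dominated by some feasible $a'\neq a$, so that $D=\{i\in I:a'_i\neq a_i\}$ is nonempty and every officer in $D$ strictly prefers her new assignment: $a'_i\succ_{m_i}a_i$ for all $i\in D$. Let $i^\ast$ be the highest-priority member of $D$, i.e.\ $\pi(i^\ast)=\min_{i\in D}\pi(i)$, and write $s=a'_{i^\ast}$. Since $i^\ast\in D$ we have $a_{i^\ast}\neq s$ and $s\succ_{m_{i^\ast}}a_{i^\ast}$. The argument then splits on the occupancy of $s$ under $a$.

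If $s$ has spare capacity under $a$ (fewer than $q_s$ officers assigned to it), then $i^\ast$ prefers the under-filled state $s$ to her own assignment, which is exactly clause (ii) of Definition~\ref{def:visibleFairness}; hence $a$ is visibly unfair. The substantive case is when $s$ is full under $a$. Here I would run a counting argument: let $A_s=\{i:a_i=s\}$, so $|A_s|=q_s$ and $i^\ast\notin A_s$. I claim some officer in $A_s$ has lower priority than $i^\ast$ (i.e.\ $\pi(j)>\pi(i^\ast)$ for some $j\in A_s$). If not, every $j\in A_s$ has $\pi(j)<\pi(i^\ast)$ (equality is impossible since $i^\ast\notin A_s$), so by minimality of $\pi(i^\ast)$ in $D$ none of these officers lies in $D$, giving $a'_j=a_j=s$. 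Together with $a'_{i^\ast}=s$ and $i^\ast\notin A_s$, this forces at least $q_s+1$ officers onto $s$ under $a'$, contradicting feasibility of $a'$. Thus some $j^\ast\in A_s$ has $\pi(i^\ast)<\pi(j^\ast)$, and $a_{j^\ast}=s\succ_{m_{i^\ast}}a_{i^\ast}$ with $a_{i^\ast}\neq a_{j^\ast}$, which is clause (i) of Definition~\ref{def:visibleFairness}. Either way $a$ is visibly unfair, completing the contrapositive.

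For Part 2, a two-officer, two-state instance suffices. Take $I=\{i_1,i_2\}$ with $\pi(i_1)<\pi(i_2)$, states $S=\{s_1,s_2\}$ with $q_{s_1}=q_{s_2}=1$, and a complete message $m$ in which both officers report $s_1\succ_{m_i}s_2$. The only two feasible allocations are $(s_1,s_2)$ and $(s_2,s_1)$. I would exhibit $a=(s_2,s_1)$: it is visibly unfair because $i_1$ prefers $s_1=a_{i_2}$ to her own $s_2$ while $\pi(i_1)<\pi(i_2)$ (clause (i)), yet it is visibly efficient because the only candidate dominator $a'=(s_1,s_2)$ moves $i_2$ from $s_1$ to the less-preferred $s_2$, so $i_2$ does not strictly improve and the domination condition fails. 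The main obstacle is the full-capacity subcase of Part 1: converting the failure of visible efficiency into a concrete justified-envy witness requires carefully combining the priority-minimality of $i^\ast$ within $D$ with the feasibility of $a'$ to locate a strictly lower-priority occupant of $s$; the remaining cases are immediate from the definitions.
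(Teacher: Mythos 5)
Your proposal is correct and follows essentially the same route as the paper: Part 1 picks the highest-priority officer $i^\ast$ among those reassigned by the dominating allocation and uses feasibility of $a'$ to produce either a vacancy at $s=a'_{i^\ast}$ (clause (ii)) or a lower-priority occupant of $s$ (clause (i)), which is exactly the paper's dichotomy, just spelled out in more detail in the full-capacity case. Part 2 uses the same two-officer, two-state counterexample with allocation $(s_2,s_1)$ as the paper.
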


\begin{proof}
       \underline{Statement 1}: Fix a message profile \(m\in M\). Suppose \(a \in \mathcal{A}\) is a visibly fair allocation that is not visibly efficient under $m$. Then there exists another allocation \(a^{\prime} \in \mathcal{A}\) such that for all $i \in I$ with $a^\prime_i \not= a_i$, we have   $a^\prime_i \succ_{m_i} a_i$. 
    
Let $i^*$ be the highest ranking officer among the ones that are allocated to a different state under $a$ and $a'$ and let $s^*$ be her assigned state under $a'$. That is, for \[ \bar{I}:=\left \{ i\in I:a_i\ne a^\prime_i \right \}, \qquad i^* := \arg\min_{i\in \bar{I}} \pi(i),  \quad \text{and} \quad a'_{i^*}=s^*.\]
 
The allocation $a$ must be visibly unfair under $m$. This is because either $a_i=s^*$ for some $i \in \bar{I}$, or $|\{ i\in I: a_i=s^*\}|< q_{s^*}$. Yet $s^* \succ_{m_{i^*}} a_{i^*}$.

    \underline{Statement 2}:     Let $I=\{i_1,i_2\}$, $S=\{s_1,s_2\}$, and $q_{s_1}=q_{s_2}=1$. Additionally, assume zonal message space for both officers, $Z=\{z_1\}$ with $z_1 = \{s_1,s_2\}$.  For preferences ${s_1} \succ_{i_1} s_2$ and $s_1 \succ_{i_2} {s_2} $, the allocation $(a_1, a_2) =(s_2,s_1)$ is visibly efficient but not visibly fair under the truthful message.

\end{proof}

In contrast, an allocation $a^\prime$ \textbf{Pareto dominates} allocation $a$ if $a^\prime\neq a$ and, for every officer $i\in I$ such that $a^\prime_i\neq a_i$, we have $a^\prime_i\succ_i a_i$. An allocation $a$ is \textbf{Pareto efficient} if there is no allocation $a^\prime\in\mathcal{A}$ that Pareto dominates $a$. A mechanism $\psi$ is \textbf{Pareto efficient} if $\psi(m)$ is Pareto efficient for every $m\in M$.

\begin{thm}
\label{thm:Efficiency2}
For any truthful message profile $m\in M$, the following statements are true.
\begin{enumerate}
    \item Every Pareto efficient allocation is visibly efficient.
    \item A visibly efficient allocation may not be Pareto efficient. 
    
    \item A visibly fair allocation may not be Pareto efficient.
\end{enumerate}

\end{thm}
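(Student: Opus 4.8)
The plan is to handle the three claims in order, with Statement~1 doing the substantive work through truthfulness and Statements~2 and~3 settled by a single counterexample. For Statement~1 I would argue by contraposition. Suppose $a$ is \emph{not} visibly efficient under a truthful profile $m$; then by definition there is some $a'\in\mathcal{A}$ with $a'_i\succ_{m_i}a_i$ for every $i$ with $a'_i\neq a_i$. The one ingredient I need is that, since each $m_i$ is truthful, $s\succ_{m_i}s'$ implies $s\succ_i s'$; applying this to each such $i$ upgrades every visible comparison $a'_i\succ_{m_i}a_i$ to a true comparison $a'_i\succ_i a_i$. Hence $a'$ Pareto dominates $a$ under the true preferences, so $a$ is not Pareto efficient. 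This is exactly the contrapositive, establishing that every Pareto efficient allocation is visibly efficient under a truthful message.

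For Statements~2 and~3 I would exhibit one instance where a coarse message space hides a mutually beneficial swap. Take $I=\{i_1,i_2\}$ and $S=\{s_1,s_2\}$ with $q_{s_1}=q_{s_2}=1$, together with a zonal message space whose zones are the singletons $z_1=\{s_1\}$ and $z_2=\{s_2\}$, so that $s_1$ and $s_2$ are never comparable under any $m_i$. Let the true preferences be $s_2\succ_{i_1}s_1$ and $s_1\succ_{i_2}s_2$, and consider $a=(s_1,s_2)$. Under the (forced) truthful profile the relation $\succ_{m_i}$ is empty for both officers, so $a'_i\succ_{m_i}a_i$ can never hold and no allocation visibly dominates $a$; thus $a$ is visibly efficient. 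Yet the swap $a'=(s_2,s_1)$ strictly improves both officers under their true preferences, so $a$ is not Pareto efficient, yielding Statement~2.

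The same $a$ settles Statement~3: it is visibly fair because both states are filled to capacity (ruling out visible wastefulness) and the two occupied states lie in different zones, hence are incomparable (ruling out visible justified envy). So $a$ is a visibly fair allocation that is not Pareto efficient. In fact, by Theorem~\ref{thm:Efficiency} a visibly fair allocation is always visibly efficient, so this one example delivers both Statements~2 and~3 at once. I expect no genuine obstacle; the only point requiring care is keeping straight which order each efficiency notion is evaluated against---$\succ_{m_i}$ for visible efficiency versus $\succ_i$ for Pareto efficiency. The whole content of the theorem is that truthfulness collapses the gap in one direction (Statement~1) but not the other, the failure being precisely the invisible swap constructed for Statements~2 and~3.
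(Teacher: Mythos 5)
Your proof is correct and follows essentially the same route as the paper: Statement~1 by the identical contrapositive argument (truthfulness upgrades $\succ_{m_i}$ to $\succ_i$), and Statements~2--3 by a single zonal counterexample in which a mutually beneficial swap is invisible. The only difference is cosmetic --- the paper uses three states with zones $\{s_1,s_2\}$ and $\{s_3\}$ and allocation $(s_1,s_3)$, while you use two states in singleton zones --- and your variant checks out against the definitions (the empty relation is a valid, truthful, rich message, and the allocation is visibly fair and visibly efficient but Pareto dominated by the swap).
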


\begin{proof}
    \underline{Statement 1}: Consider an allocation \(a \in \mathcal{A}\) that is not visibly efficient for some truthful message \(m\in M\). This implies there is another allocation \(a^\prime \in \mathcal{A}\) such that for all $i \in I$ with $a^\prime_i \not= a_i$, we have $a^\prime_i \succ_{m_i} a_i$, and therefore $a^\prime_i \succ_i a_i$ ($m$ is a truthful message). Thus, \(a\) is not Pareto efficient.

    \underline{Statement 2 and 3}:     Let $I=\{i_1,i_2\}$, $S=\{s_1,s_2,s_3\}$, and $q_{s_1}=q_{s_2}=q_{s_3}=1$. Additionally, assume a zonal message space, $Z=\{z_1, z_2\}$ with  $z_1=\{s_1,s_2\}$ and $z_2=\{s_3\}$. For preferences $s_3 \succ_{i_1} {s_1} \succ_{i_1} s_2$ and $s_1 \succ_{i_2} {s_3} \succ_{i_2} s_2$, the allocation $(a_1, a_2) =(s_1,s_3)$ is visibly fair and visibly efficient under the truthful message, as $s_1$ and $s_3$ are not comparable. However, it is not Pareto efficient.
\end{proof}

Message profile $\hat{m}$ \textbf{contains more preference information than} message profile $m$ if, for every officer $i\in I$, the induced expressed relation under $m_i$ is contained in the induced expressed relation under $\hat{m}_i$:
$$
\succ_{m_i}\subseteq\succ_{\hat{m}_i}.
$$
Equivalently, for all $i\in I$ and all $s,s'\in S$,
$$
s\succ_{m_i}s' \quad\implies\quad s\succ_{\hat{m}_i}s'.
$$

\begin{thm}
\label{thm:Efficiency3}
Suppose the message profile $\hat{m}$ contains more preference information than the message profile $m$. Then the following statements are true. 

\begin{enumerate}
    \item Every visibly fair allocation under $\hat{m}$ is also visibly fair under $m$.
    \item A visibly fair allocation under ${m}$ may not be visibly fair under $\hat{m}$.
    \item Every visibly efficient allocation under $\hat{m}$ is also visibly efficient under $m$.
    \item A visibly efficient allocation under ${m}$ may not be visibly efficient under $\hat{m}$.
\end{enumerate}
\end{thm}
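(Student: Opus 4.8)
The plan is to split the four statements into two easy monotonicity claims (statements 1 and 3), each proved by contraposition, and two separating examples (statements 2 and 4), each built on two officers and two unit-capacity states. The common engine is that ``$\hat m$ contains more preference information than $m$'' means exactly that every strict comparison $x \succ_{m_i} y$ persists as $x \succ_{\hat m_i} y$, so anything \emph{witnessed} by strict comparisons under $m$ is still witnessed under $\hat m$.

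For statement 1 I would argue the contrapositive: if $a$ is visibly unfair under $m$, then it is visibly unfair under $\hat m$. By Definition \ref{def:visibleFairness}, a witness is either a pair $i,j$ with $\pi(i)<\pi(j)$, $a_i\neq a_j$ and $a_j \succ_{m_i} a_i$, or a state $s$ with spare capacity and $s \succ_{m_i} a_i$. In both clauses the only message-dependent ingredient is a strict comparison of the form $x \succ_{m_i} y$, and this persists as $x \succ_{\hat m_i} y$; the priority and capacity conditions do not involve the message and are unchanged. Hence the identical witness certifies visible unfairness under $\hat m$. Statement 3 is handled the same way: if $a'$ visibly Pareto dominates $a$ under $m$, then $a'_i \succ_{m_i} a_i$ for every $i$ with $a'_i\neq a_i$, each such comparison persists under $\hat m$, so the same $a'$ dominates under $\hat m$; the contrapositive delivers the claim.

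For statement 2 I would take $I=\{i_1,i_2\}$ with $\pi(i_1)<\pi(i_2)$, $S=\{s_1,s_2\}$ with $q_{s_1}=q_{s_2}=1$, and the allocation $a=(s_1,s_2)$. Let $m$ express no comparisons for either officer, while $\hat m$ adds $s_2 \succ_{\hat m_{i_1}} s_1$; since $m_{i_1}$ is empty, $\hat m$ trivially contains more information. Under $m$ neither clause of visible unfairness can trigger, as the only state $i_1$ might prefer ($s_2$) is at capacity and no comparison is expressed, so $a$ is visibly fair under $m$. Under $\hat m$, however, $a_{i_2}=s_2 \succ_{\hat m_{i_1}} s_1 = a_{i_1}$ with $\pi(i_1)<\pi(i_2)$ is precisely a clause-(i) violation, so $a$ is visibly unfair under $\hat m$.

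For statement 4 I would keep the same two-officer, two-state structure with $a=(s_1,s_2)$, let $m$ again be empty for both officers, and let $\hat m$ add $s_2 \succ_{\hat m_{i_1}} s_1$ and $s_1 \succ_{\hat m_{i_2}} s_2$. The only feasible allocation changing any assignment is the swap $a'=(s_2,s_1)$; under $m$ it cannot visibly dominate $a$ because the required strict comparisons are not expressed, so $a$ is visibly efficient under $m$. Under $\hat m$ both changed officers strictly prefer their $a'$-assignment, so $a'$ visibly dominates $a$ and $a$ is not visibly efficient under $\hat m$. The only point requiring care in these examples is to confirm that no other officer and no other feasible allocation produces a violation; with only two states this finite check is immediate, so I expect no real obstacle beyond the bookkeeping of the two clauses of Definition \ref{def:visibleFairness} and the short enumeration of feasible allocations.
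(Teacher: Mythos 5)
Your proposal is correct and follows essentially the same route as the paper: statements 1 and 3 by contraposition using the persistence of strict comparisons from $m$ to $\hat m$, and statements 2 and 4 by small separating examples in which coarser messages hide a priority violation or a Pareto-improving swap. Your counterexamples (two states, empty messages) are slightly more degenerate than the paper's (three states, zonal truthful messages), but they are valid instances and make the same point.
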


\begin{proof}
    \underline{Statement 1}: There are two cases: (i) Allocation \(a\) is not visibly fair under \(m\) because there exist some \(i\in I\) such that there is a $j\in I$ such that $a_i\not=a_j$, $\pi(i) <  \pi(j)$, and $a_j \succ_{m_i} a_i$. Since $a_j \succ_{\hat{m}_i} a_i$, $a$ cannot be visibly fair under $\hat{m}$.

    (ii) Allocation \(a\) is not visibly fair under \(m\) because there exist some \(i\in I\) such that there is a $s\in S$ such that $a_i\not=s$, $|\{ i\in I: a_i=s\}|<q_s$, and $s \succ_{m_i} a_i$. Since $s \succ_{\hat{m}_i} a_i$,  $a$ cannot be visibly fair under $\hat{m}$.

\underline{Statement 3}: We use the contrapositive. If allocation \(a\) is not visibly efficient under \(m\), then there exists another allocation \(a^\prime\) such that for all $i \in I$ with $a^\prime_i \not= a_i$, we have   $a^\prime_i \succ_{m_i} a_i$. Since $a^\prime_i \succ_{\hat{m}_i} a_i$ for all such $i\in I$, $a$ cannot be visibly efficient under \(\hat{m}\). 

\underline{Statement 2 and 4}: Let $I=\{i_1,i_2\}$, $S=\{s_1,s_2,s_3\}$, and $q_{s_1}=q_{s_2}=q_{s_3}=1$. For each officer, let the message space contain both the partial message induced by the zonal partition $Z=\{z_1,z_2\}$, with $z_1=\{s_1,s_2\}$ and $z_2=\{s_3\}$, and the complete message induced by the one-zone partition $Z=\{S\}$. For preferences $s_3 \succ_{i_1} s_1 \succ_{i_1} s_2$ and $s_1 \succ_{i_2} s_3 \succ_{i_2} s_2$, the allocation $(a_1,a_2)=(s_1,s_3)$ is visibly fair and visibly efficient under the truthful partial message profile $m$, because $s_1$ and $s_3$ are not comparable under $m$.

Let $\hat m=(\succ_i)_{i\in I}$ be the truthful complete message profile. Then $\hat m$ contains more preference information than $m$. However, the allocation $(a_1,a_2)=(s_1,s_3)$ is neither visibly fair nor visibly efficient under $\hat m$.
\end{proof}

\section{Appendix: Indirect Message Spaces}\label{appendixindirect}
\subsection{Preference Elicitation in All India Services}
The 2017 Cadre Allocation Policy for India's All India Services, including the Indian Administrative Service (IAS), Indian Police Service (IPS), and Indian Forest Service (IFoS), introduces a zonal system that divides all states and union territories into five geographical zones, requiring candidates to first indicate their zone preferences in descending order, followed by state preferences within each preferred zone. For illustrative purposes, we include a screenshot of the submitted preferences from 2020 IFoS examination.

\begin{figure}[!ht]
    \centering
    \includegraphics[width=1\linewidth]{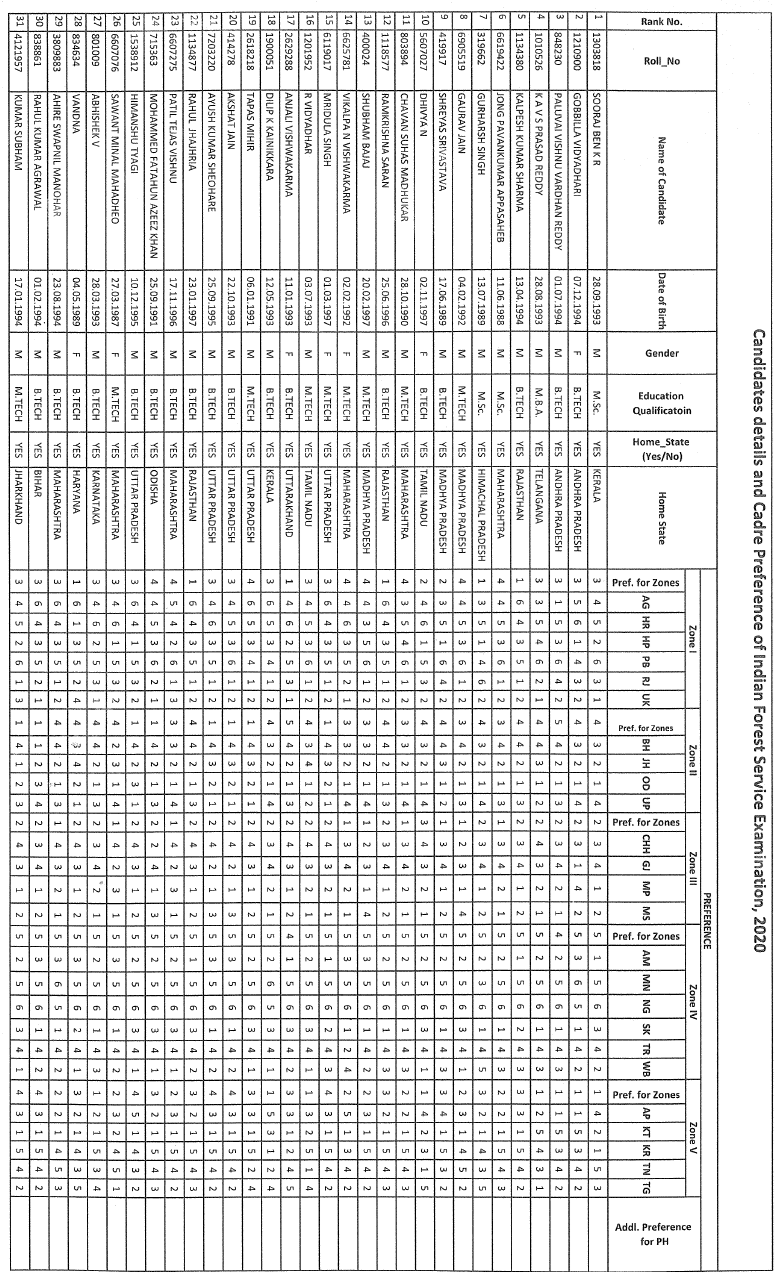}
    \caption{Example preference form for the 2020 IFoS cadre allocation.}
    \label{fig:ifos_form}
\end{figure}

\subsection{Rank-order lists in Chinese College Admissions}
23 out of 31 provinces in China implement the structured rank-order list system, in which majors are effectively nested under colleges, as noted by \cite{huchinesecolleges}. These provinces include: Shanghai, Beijing, Tianjin, Hainan, Jiangsu, Fujian, Hubei, Hunan, Guangdong, Heilongjiang, Gansu, Jilin, Anhui, Jiangxi, Guangxi, Shanxi, Henan, Shaanxi, Ningxia, Sichuan, Yunnan, Tibet, and Xinjiang. For illustrative purposes, we include a screenshot of the official college-major preference form from Shanghai.

\begin{figure}[!ht]
    \centering
    \includegraphics[width=1\linewidth]{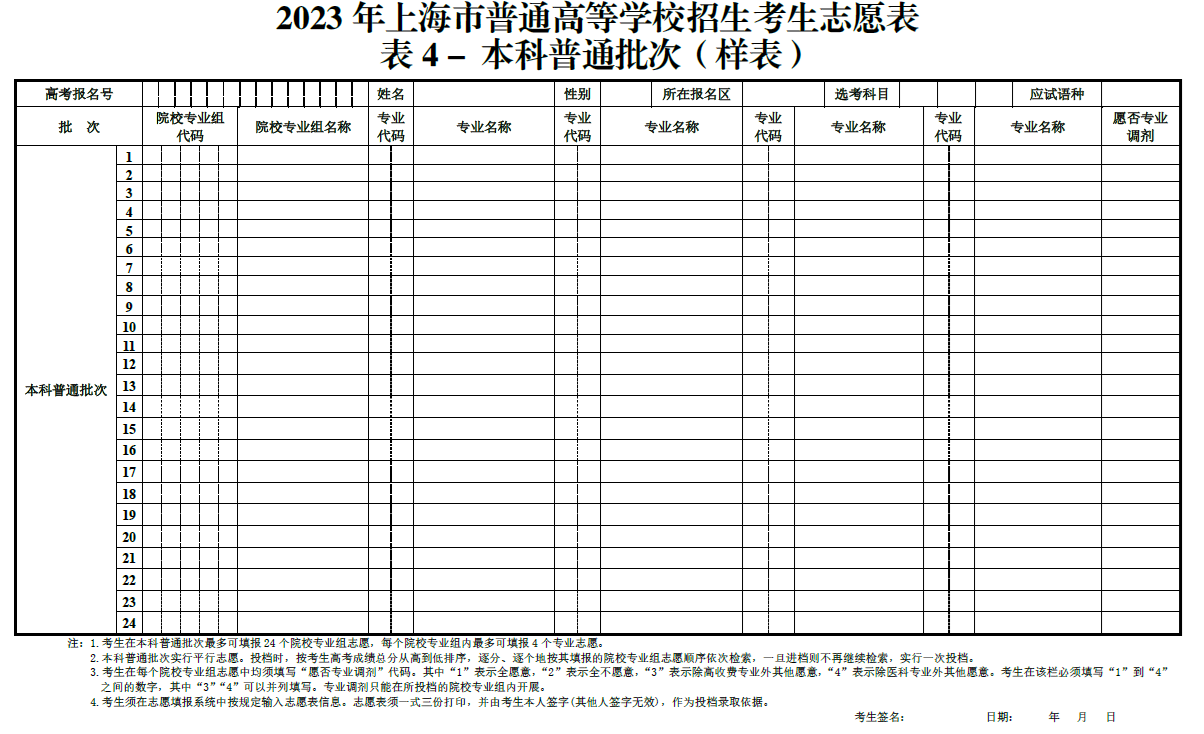}
    \caption{Example college-major preference form from Shanghai.}
    \floatfoot{\footnotesize Source: \url{https://www.shmeea.edu.cn/page/08000/20230407/17353.html}}
    \label{fig:shanghai_form}
\end{figure}

\subsection{Reserve Officer Training Corps (ROTC) Mechanism }

\cite{sonmez2013bidding}'s model of cadet-branch matching problem consists of
\begin{enumerate}
    \item a finite set of cadets $I = \{i_1, i_2, \dots, i_n\}$,
    \item a finite set of branches $B = \{b_1, b_2, \dots, b_m\}$,
    \item a vector of branch capacities $q = (q_b)_{b \in B}$,
    \item a set of ``terms" $T = \{t_1, \dots, t_k\}$,
    \item a list of cadet preferences $P = (P_i)_{i \in I}$ over $(B \times T) \cup \{\emptyset\}$, and
    \item a list of base priority rankings $\pi = (\pi_b)_{b \in B}$.
\end{enumerate}

The ROTC mechanism is not direct. Instead, each cadet submits a ranking of branches $\succ_i'$, and he can sign a branch-of-choice contract for any of his top three choices under $\succ_i'$.
\end{document}